\def\@abssec#1{\vspace{.05in}\footnotesize \parindent .2in 
{\bf #1. }\ignorespaces} 
\newtheorem{theorem}{Theorem}[section]
\newtheorem{lemma}[theorem]{Lemma}
\newtheorem{proposition}[theorem]{Proposition}
\newtheorem{corollary}[theorem]{Corollary} 
\newtheorem{remark}[theorem]{Remark}
\def \Rm {\mathbb R}
\def \Nm {\mathbb N}
\newcommand{\eps}{\varepsilon}
\newcommand{\vs}{\varrho_\star}
\newcommand{\ds}{\displaystyle}
\newcommand{\calQ}{\mathcal Q}
\newcommand{\calH}{\mathcal H}
\newcommand{\calL}{\mathcal L}
\newcommand{\calK}{\mathcal K}
\newcommand{\calE}{\mathcal E}
\newcommand{\Tr}{\textnormal{Tr}}
\newcommand{\calJ}{\mathcal J}
\newcommand{\calA}{\mathcal A}
\newcommand{\calM}{\mathcal M}
\def\fref#1{{\rm (\ref{#1})}}
\newcommand{\cout}[1]{}
\newcommand{\be}{\begin{equation}}
\newcommand{\ee}{\end{equation}}
\newcommand{\bea}{\begin{eqnarray}}
\newcommand{\eea}{\end{eqnarray}}
\newcommand{\bee}{\begin{eqnarray*}}
\newcommand{\eee}{\end{eqnarray*}}
\newcommand{\bal}{\begin{align*}}
                    \newcommand{\eal}{\end{align*}}
                  \newcommand{\argmin}{\textrm{argmin} \, }
\def\H2p{{H}^2}
\DeclarePairedDelimiter\bra{\langle}{\rvert}
\DeclarePairedDelimiter\ket{\lvert}{\rangle}
\begin{document}
{\title{Constrained minimizers of the von Neumann entropy and their characterization}}

 \author{Romain  Duboscq \footnote{Romain.Duboscq@math.univ-tlse.fr}}
 \affil{Institut de Math\'ematiques de Toulouse ; UMR5219\\Universit\'e de Toulouse ; CNRS\\INSA, F-31077 Toulouse, France}
 \author{Olivier Pinaud \footnote{pinaud@math.colostate.edu}}
 \affil{Department of Mathematics, Colorado State University\\ Fort Collins CO, 80523}

\maketitle

\begin{abstract}
We consider in this work the problem of minimizing the von Neumann entropy
under the constraints that the density of particles, the current, and the kinetic energy of the system is fixed at each point of space. The unique minimizer is a self-adjoint positive trace class operator, and our objective is to characterize its form. We will show that this minimizer is solution to a self-consistent nonlinear eigenvalue problem. One of the main difficulties in the proof is to parametrize the feasible set in order to derive the Euler-Lagrange equation, and we will proceed by constructing an appropriate form of perturbations of the minimizer. The question of deriving quantum statistical equilibria is at the heart of the quantum hydrodynamical models introduced by Degond and Ringhofer in \cite{DR}. An original feature
of the problem is the local nature of constraints, i.e. they depend on position, while
more classical models consider the total number of particles, the total current and the total energy in the system to be
fixed.
\end{abstract}

\section{Introduction}
This work is concerned with the study of minimizers of quantum entropies, which are solutions to problems of the form
\be \label{probmin}
\min_{\calA} \Tr\big( s(\varrho)\big),
\ee
where $\varrho$ is a density operator (i.e. a self-adjoint trace class positive operator), $s$ an entropy function, typically the Boltzmann entropy $s(x)=x \log(x)-x$ for $x\geq 0$, and $\Tr(\cdot)$ denotes operator trace. The feasible set $\calA$ includes linear constraints on $\varrho$ involving particles density, current, and energy. 

This problem is motivated by a series of papers by Degond and Ringhofer on the derivation of quantum hydrodynamical models from first principles, see \cite{DGMRlivre,QHD-CMS,isotherme,QDD-JCP}. It is also a problem arising in the work of Nachtergaele and Yau in their derivation of the Euler equation from quantum dynamics \cite{nachter}. In \cite{DR}, Degond and Ringhofer main idea is to transpose to the quantum setting the entropy closure strategy that Levermore used for kinetic equations \cite{levermore}. The kinetic formulation starts with the transport equation (all physical constants are set to one),
\be \label{boltz}
\partial_t f + \{H,f\}=Q(f), \qquad f\equiv f(t,x,p), \quad (x,p) \in \Rm^d \times \Rm^d,
\ee
where $f \geq 0$ is the particle distribution function, $H$ is a classical Hamiltonian, e.g. $H(x,p)=|p|^2/2+V(x)$ for some potential $V$, $\{H,f\}=\nabla_p H \cdot \nabla_x f-\nabla_x H \cdot \nabla_p f$ is the Poisson bracket, and $Q$ a collision operator. Fluids models are obtained by considering the quantities
$$
\left\{
 \begin{array}{lll}
\ds n_f(t,x)&=&\ds \int_{\Rm^d} f(t,x,p)dp, \qquad  u_f(t,x)=n_f(t,x)^{-1} \int_{\Rm^d} p f(t,x,p)dp\\[3mm]
 \ds k_f(t,x)&=&\ds \frac{1}{2}\int_{\Rm^d} |p|^2 f(t,x,p)dp,
\end{array}
\right.
$$
which are respectively the average particle density, velocity, and kinetic energy. It is not possible to derive a closed system on $n_f$, $u_f$, and $k_f$ from \fref{boltz}, and Levermore's method consists in replacing $f$ in the non closed terms by a statistical equilibrium $f_{\rm{eq}}$. The form of the latter depends on $Q$, and is in some situations the minimizer of the classical Boltzmann entropy
$$
S_c(g)=\int_{\Rm^{2d}} s(g(x,p)) dx dp, \quad \textrm{under the constraints}\quad
n_g=n_f, \quad u_g=u_f, \quad k_g=k_f.
$$
The solution to the minimization problem is the standard Maxwellian
\be \label{maxw}
f_{\rm{eq}}(t,x,p)= \frac{n_f(t,x)}{(2 \pi \, T(t,x))^{d/2}} e^{-\frac{(p-u_f(t,x))^2}{2 T(t,x)}},
\ee
where the temperature $T$ is such that
$$
k_f(t,x)= \frac{1}{2} n_f(t,x)|u_f(t,x)|^2+\frac{d}{2} n_f(t,x) T(t,x).
$$
The equilibrium $f_{\textrm{eq}}$ is an explicit and local function of the constraints, and one may therefore remove the variable $x$ in the definition of the classical entropy $S_c$ and consider the constraints on $n_g$, $u_g$ and $k_g$ to be simple numbers independent of $x$. The well-posedness of the classical minimization problem was addressed in \cite{junk}.

Degond and Ringhofer theory is the quantum version of the above kinetic problem, and their starting point is the quantum Liouville-BGK equation of the form
  \be \label{liouville}
i  \partial_t \varrho=[\calH, \varrho] +\frac{i }{\tau} \left( \varrho_{\rm{eq}}[\varrho]-\varrho \right) ,
\ee
where $\calH$ is a given Hamiltonian, $[\cdot,\cdot]$ denotes the commutator between two operators, $\tau$ is a relaxation time, and $\varrho_{\rm{eq}}[\varrho]$ a quantum statistical equilibrium. The latter is solution to \fref{probmin} under constraints of density, current, and energy as in the classical case. The constraints are defined as follows (we give further an equivalent definition better suited for the mathematical analysis): to any density operator $\varrho$, we can associate a Wigner function $W[\varrho](x,p)$, see e.g. \cite{LP}, so that the particle density $n[\varrho]$, the current density $n[\varrho] u[\varrho]$, and the energy density $w[\varrho]$ of $\varrho$ are given by similar formulas as in the classical picture: 
$$
\left\{
 \begin{array}{l}
\ds n[\varrho](x)=\int_{\Rm^d} W[\varrho](x,p) dp, \quad n[\varrho](x) u[\varrho](x)=\int_{\Rm^d} p W[\varrho](x,p) dp\\[3mm]
\ds w[\varrho](x)=\frac{1}{2}\int_{\Rm^d} |p|^2 W[\varrho](x,p) dp.
\end{array}
\right.
$$
In the statistical physics terminology, fixing the density, current, and energy amounts to consider equilibria in the microcanonical ensemble. The feasible set $\calA$ in \fref{probmin} then consists in density operators $\varrho$ such that $n[\varrho]$, $u[\varrho]$, and $w[\varrho]$ are given functions. Note that the constraints are \textit{local} in $x$, and not \textit{global} as is often found in the literature, see e.g. \cite{Dolbeault-Loss}. In other words, the number of particles (as well as the current and energy) is fixed at each point $x$ of space, rather than prescribing the total number of particles in the system. 

Our main objective in this work is to derive a representation formula such as the Maxwellian \fref{maxw} for the minimizers solution to \fref{probmin}. The problem is considerably more difficult than in the classical case since the solution is now an operator, which depends nonlocally and implicitly on the constraints. The formal solution $\varrho_\star$ to \fref{probmin} reads
\be \label{quantM}
\varrho_\star=e^{-\calH_\star},
\ee
for an appropriate self-consistent Hamiltonian $\calH_\star\equiv \calH[\varrho_\star]$ which depends on the solution $\varrho_\star$. Our main result is a rigorous formulation of \fref{quantM}. We will show that the eigenvalues and eigenfunctions of $\vs$ are the solutions to a nonlinear self-consistent eigenvalue problem. The fact that \fref{probmin} admits a unique solution under local constraints of density, current and energy is established in \cite{DP-JMPA} for a bounded one-dimensional spatial domain. The $\Rm^d$ case is still an open problem, and we will therefore only focus here on the characterization of the minimizer in 1D. The justification of \fref{quantM} in the context of a density constraint only (i.e. only $n[\varrho]$ is prescribed and not $u[\varrho]$ and $w[\varrho]$) was done in \cite{MP-JSP} in a 1D bounded domain, and later extended to $\Rm^d$ in \cite{DP-JFA}. The existence and uniqueness of a minimizer solution to \fref{probmin} under constraints of density and current was established in \cite{MP-KRM} in $\Rm^d$. The fact that the energy constraint is harder to handle than the two others is related to compactness issues, see \cite{DP-JMPA}. Regarding the quantum Liouville-BGK equation, it is shown in \cite{MP-JDE} that \fref{liouville} admits a solution in 1D when the equilibrium is obtained under a density constraint. The diffusion model obtained in the limit $\tau \to 0$ is studied in \cite{pinaudPoincare}. See \cite{jungel-matthes-milisic,jungel-matthes} for more references on quantum hydrodynamics.

One of the main difficulties in the characterization of the minimizer is to properly parametrize the feasible set in order to derive the Euler-Lagrange equation. Indeed, operators in the feasible set must be (i) self-adjoint, (ii) positive, (iii) trace class, and have fixed local (iv) density, (v) current, and (vi) energy. Items (i) and (iii) are somewhat direct to enforce, while (v) is easily handled by a change of gauge (at least in 1D). Items (ii)-(iv)-(vi) together are the most difficult to satisfy. In particular,  additive perturbations found in standard differential calculus only provide here inequalities because of (ii). We will then construct a fine parameterization of the feasible set by choosing perturbations of the minimizer that are both additive and multiplicative, and by using the implicit function theorem in Banach spaces to conclude.

The paper is structured as follows: in Section \ref{secmain}, we introduce the setup and state our main result. Section \ref{secproof} is devoted to the proof of our main theorem and is divided into four steps. In Section \ref{otherproof}, we give some proofs that were postponed in the previous section.
\paragraph{Acknowledgement.} OP is supported by NSF CAREER grant DMS-1452349.

\section{Main result} \label{secmain}

Before stating our main result, we need to introduce some notation and the functional setting.

\paragraph{Notation.} Our spatial domain is $\Omega = [0,1]$. We will denote by $L^r$ , $r\in [1,\infty]$, the usual Lebesgue spaces of complex-valued functions on $\Omega$, and by $W^{k,r}$ the standard Sobolev spaces. We introduce as well $H^k=W^{k,2}$, and $\langle\cdot,\cdot\rangle$ for the Hermitian product on $L^2$ with the convention $\langle f,g\rangle=\int_\Omega f^* g dx$. We will use the notations $\nabla=d/dx$ and $\Delta=d^2/dx^2$ for brevity. The free Hamiltonian $-\frac{1}{2}\Delta$ is denoted by $H_0$, with domain
$$
H_{\textrm{Neu}}^2=\left\{u\in H^2:\,\nabla u(0)=\nabla u(1) = 0\right\}.
$$
Moreover, $\calL(L^2)$ is the space of bounded operators, $\calJ_1 \equiv \calJ_1(L^2)$ is the space of trace class operators,  and $\calJ_2$ the space of Hilbert-Schmidt operators,  all on $L^2$. $\Tr(\cdot)$ denotes operator trace. In the sequel, we will refer to a density operator as a positive, trace class, self-adjoint operator on $L^2$. For $\varrho^*$ the adjoint of $\varrho$ and $|\varrho|=\sqrt{\varrho^* \varrho}$, we introduce the following space:
$$\calE=\left\{\varrho\in \calJ_1:\, \overline{\sqrt{H_0}|\varrho|\sqrt{H_0}}\in \calJ_1\right\},$$
where $\overline{\sqrt{H_0}|\varrho|\sqrt{H_0}}$ denotes the extension of the operator $\sqrt{H_0}|\varrho|\sqrt{H_0}$ to $L^2$. The domain of $\sqrt{H_0}$ is $H^1$. We will often drop the extension sign in the sequel to ease notation, and  keep it when it is relevant.  The space $\calE$ is a Banach space when endowed with the norm
$$\|\varrho\|_{\calE}=\Tr \big(|\varrho| \big)+\Tr\big(\sqrt{H_0}|\varrho|\sqrt{H_0}\big).$$
The energy space is the following closed convex subspace of $\calE$:
$$\calE^+=\left\{\varrho\in \calE:\, \varrho\geq 0\right\}.$$
The eigenvalues of a density operator are counted with multiplicity, and form a nonincreasing sequence, converging to zero if the sequence is infinite. The notation $a \lesssim b$ stands for $a \leq C b$, where $C$ is a constant independent of $a$ and $b$.

\paragraph{Setting of the problem.} The first three moments of a density operator $\varrho$ are defined in terms of $\varrho$ and not its Wigner function as follows: for any smooth function $\varphi$ on $\Omega$, and identifying a function with its associated multiplication operator, the (local) density $n[\varrho]$, current $n[\varrho] u[\varrho]$ and energy $w[\varrho]$ of $\varrho$ are uniquely defined by duality by 
\begin{align*}
&\int_\Omega n[\varrho] \varphi dx = \Tr \big( \varrho \varphi \big), \qquad \int_\Omega n[\varrho]u[\varrho] \varphi dx= -i \Tr \left(\varrho \left(\varphi \nabla+\frac{1}{2} \nabla \varphi \right) \right) \\
&\int_\Omega w[\varrho] \varphi dx=  -\frac{1}{2}\Tr \left(\varrho \left( \nabla  \varphi \nabla +\frac{1}{4} \Delta \varphi \right) \right) . 
\end{align*}
Denote by $\{\rho_p,\phi_p\}_{p \in \Nm}$ the spectral elements of a density operator $\varrho$ (the number of nonzero eigenvalues might be finite or not), and let
$$
k[\varrho]=-n[\nabla \varrho \nabla]=\sum_{p \in \Nm} \rho_p |\nabla \phi_p|^2.
$$
A short calculation shows that
$$
w[\varrho]=\frac12\left( k[\varrho]-\frac{1}{4} \Delta n[\varrho]\right).
$$
Hence, since $n[\varrho]$ is prescribed, we can equivalently set a constraint on $w[\varrho]$ or on $k[\varrho]$, and we choose $k[\varrho]$ since it is positive. Note also the classical (formal) relations
\be \label{defcurrent}
n[\varrho ]= \sum_{p \in \Nm} \rho_p |\phi_p|^2, \qquad  j[\varrho]=n[\varrho ] u[\varrho ]=\Im \left( \sum_{p \in \Nm} \rho_p \phi_p^* \nabla \phi_p \right).
\ee
\begin{remark} \label{rem1} Let $\varrho \in \calE^+$ with eigenvalues $\{\rho_p\}_{p \in \Nm}$ and eigenvectors $\{\phi_p\}_{p \in \Nm}$. Then
  $$
  n[\varrho]=\sum_{p \in \Nm} \rho_p |\phi_p|^2, \qquad k[\varrho]=\sum_{p \in \Nm} \rho_p |\nabla \phi_p|^2,
  $$
  with convergence in $L^1$ and almost everywhere. The density $n[\varrho]$ is bounded since $\varrho \in \calE^+$ implies that $\nabla \sqrt{n[\varrho]} \in L^2$ according to the inequality
$$
\|\nabla \sqrt{n[\varrho]}\|^2_{L^2} \leq \|\varrho\|_{\calE},
$$ and a Sobolev embedding gives $n[\varrho] \in L^\infty$. 
\end{remark}

For $\varrho \in \calE^+$, we denote the entropy of $\varrho$ by
\begin{equation*}
S(\varrho)=\Tr(s(\varrho)),
\end{equation*}
where $s(x) = x \log  x -  x$ is the Boltzmann entropy. The entropy $S$ is referred to as the von Neumann entropy. We define the set of admissible contraints
 $$
\calM= \left\{ (n_0,u_0,k_0) \in \left(L^1\right)^3: \; n_0=n[\varrho],\, u_0=u[\varrho],\, k_0=k[\varrho],\, \textrm{for some } \varrho \in \calE^+ \right\},
$$
that is the set of functions $(n_0,u_0,k_0)$ such that there is at least one density operator in $\calE$ with local density, current, and kinetic energy given by $(n_0,u_0,k_0)$. The structure of $\calM$ is unknown as of now, but this is not an issue for us since in our problem of interest, the constraints are always in $\calM$ as originating from the solution $\varrho$ to the quantum Liouville-BGK equation \fref{liouville}. For the constraints $n_0,u_0$ and $k_0$ in $\calM$, we then define the feasible set 
\begin{equation*}
\calA(n_0,u_0,k_0) = \left\{ \varrho\in\calE^+:\;  n[\varrho] = n_0, \; u[\varrho] = u_0, \; k[\varrho] = k_0\right\}.
\end{equation*}
\begin{remark} \label{rem2} Constraints in the admissible set $\calM$ satisfy some compatibility conditions. 
Let indeed $\varrho\in\calE^+$. Then, 
\begin{equation} \label{ineqk}
k[\varrho]\geq n[\varrho] |u[\varrho]|^2+\left|\nabla \sqrt{n[\varrho]}\right|^2.
\end{equation}
To see this, we remark that 
\begin{align} \label{nabn}
\frac{1}{2}\nabla n[\varrho] = \sum_{p\in\mathbb{N}} \rho_p \Re\left( \phi_p^* \nabla \phi_p\right)\quad\textrm{and}\quad j[\varrho] = n[\varrho]u[\varrho] = \sum_{p\in\mathbb{N}} \rho_p \Im\left( \phi_p^* \nabla \phi_p\right),
\end{align}
where both series converge in $L^1$ and a.e. according to Remark \ref{rem1},
and, thus, we deduce that
\begin{equation*}
\frac{1}{2}\nabla n[\varrho] + i j[\varrho] = \sum_{p \in \Nm} \rho_p \phi_p ^* \nabla \phi_p.
\end{equation*}
It follows that, by the Cauchy-Schwarz inequality,
\begin{align*}
\frac{1}{4}|\nabla n[\varrho]|^2+ |j[\varrho]|^2  &= \left| \frac{1}{2}\nabla n[\varrho] + i j[\varrho]\right|^2 = \left| \sum_{p \in \Nm} \rho_p \phi_p ^* \nabla \phi_p\right|^2
\\ &\leq \left(\sum_{p \in \Nm}\rho_p |\phi_p|^2\right)\left(\sum_{p \in \Nm}\rho_p |\nabla \phi_p|^2\right) = n[\varrho]k[\varrho],
\end{align*}
which yields \fref{ineqk}.
\end{remark}

The fact that $S$ admits a unique minimizer in $\calA(n_0,u_0,k_0)$ was proven in \cite{DP-JMPA}. The result is the following:
\begin{theorem} \label{thexist} Suppose that $(n_0,u_0,k_0) \in \calM$, where $\Delta n_0 \in L^2$ with $n_0>0$, $u_0 \in L^2$. Then, the constrained minimization problem
  $$
  \min_{\calA(n_0,u_0,k_0)} S(\varrho)
  $$
  admits a unique solution. If $\|k_0\|_{L^1}=\|\nabla \sqrt{n_0}\|^2_{L^2}+\|\sqrt{n_0} u_0\|^2_{L^2}$, then the solution is
  $$
  e^{i \int_0^x u_0(y) dy} | \sqrt{n_0} \rangle \langle \sqrt{n_0} | e^{-i \int_0^x u_0(y) dy}.$$
\end{theorem}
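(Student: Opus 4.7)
The plan is to combine the direct method of the calculus of variations with the strict convexity of the von Neumann entropy, and then derive the explicit formula in the second assertion by saturating the Cauchy--Schwarz bound behind Remark \ref{rem2}.

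\textbf{Existence via compactness and lower semicontinuity.} I would take a minimizing sequence $(\varrho_n)_{n\geq 1} \subset \calA(n_0,u_0,k_0)$. The density constraint imposes $\Tr \varrho_n = \|n_0\|_{L^1}$, while the kinetic energy constraint controls $\Tr(\sqrt{H_0}\varrho_n \sqrt{H_0})$ in terms of $\|k_0\|_{L^1}$, so that $(\varrho_n)$ is uniformly bounded in $\calE$. A weak-$*$ compactness argument in the trace class, preserving positivity and the $\calE$ bound in the limit, then produces a subsequence converging weakly-$*$ to some $\vs \in \calE^+$. The classical weak-$*$ lower semicontinuity of $\varrho \mapsto \Tr s(\varrho)$ on positive trace-class operators (a result of Simon) yields $S(\vs) \leq \liminf_n S(\varrho_n)$.

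\textbf{Stability of the constraints.} By Remark \ref{rem1}, $\|\nabla\sqrt{n[\varrho_n]}\|_{L^2}^2 \leq \|\varrho_n\|_\calE$, hence $n[\varrho_n]$ is bounded in $H^1$ and converges strongly in $L^p$ for every finite $p$; this gives $n[\vs] = n_0$. The current $j[\varrho_n]$, bounded in $L^1$ via the pointwise inequality $|j| \leq \sqrt{n\,k}$, passes to the weak limit against smooth test functions through its duality definition, so that $j[\vs] = n_0 u_0$ and thus $u[\vs] = u_0$. The main obstacle is the kinetic energy constraint: weak-$*$ convergence only delivers the distributional bound $k[\vs] \leq k_0$. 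Promoting this to equality is the technical heart of the proof; one strategy is to exploit minimality of $\vs$, arguing that if $k[\vs] < k_0$ on a set of positive measure then a careful perturbation of $\vs$ (injecting local high-momentum content while preserving $n$ and $u$) would strictly decrease $S$, contradicting optimality. Equivalently, a convex-duality argument produces an active Lagrange multiplier for the energy constraint.

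\textbf{Uniqueness.} Strict convexity of $s(x) = x\log x - x$ implies, via Klein's inequality, that $S$ is strictly convex on the cone of positive trace-class operators. Since $\calA(n_0,u_0,k_0)$ is convex (the moments $n[\cdot]$, $j[\cdot]$, $k[\cdot]$ are linear in $\varrho$), two distinct minimizers would contradict strict convexity.

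\textbf{The equality case.} Assume $\|k_0\|_{L^1} = \|\nabla\sqrt{n_0}\|_{L^2}^2 + \|\sqrt{n_0} u_0\|_{L^2}^2$ and let $\varrho \in \calA(n_0,u_0,k_0)$. Integrating \eqref{ineqk} from Remark \ref{rem2} then forces pointwise saturation a.e., which in turn saturates the underlying Cauchy--Schwarz inequality. That equality case forces, for a.e.\ $x$, the $\ell^2$-sequences $(\sqrt{\rho_p}\phi_p(x))_p$ and $(\sqrt{\rho_p}\nabla\phi_p(x))_p$ to be proportional, so that the logarithmic derivative $\nabla\phi_p/\phi_p$ is independent of $p$ wherever defined. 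Integrating this common logarithmic derivative shows all nonzero eigenvectors are scalar multiples of a single function of $x$, and the $L^2$-orthogonality of the $\phi_p$ then forces $\varrho$ to have rank one. Writing the single eigenvector as $\phi = r e^{i\theta}$ and matching the density and current pins down $r^2 = n_0/\|n_0\|_{L^1}$ and $\nabla\theta = u_0$, so $\phi = e^{i\int_0^x u_0(y)dy}\sqrt{n_0}/\|\sqrt{n_0}\|_{L^2}$ with eigenvalue $\|n_0\|_{L^1}$; this reproduces the stated operator and shows that $\calA(n_0,u_0,k_0)$ collapses to a single point, which is trivially the unique minimizer.
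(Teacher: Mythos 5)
The paper does not prove Theorem~\ref{thexist}: it is cited from \cite{DP-JMPA}, with the remark that the periodic-boundary-condition proof there carries over to Neumann conditions with only minor changes. So there is no in-paper proof to compare against, and your proposal must be judged on its own merits.

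Your outline (direct method, weak-$*$ compactness in $\calJ_1$ from the uniform $\calE$ bound, lower semicontinuity of $\Tr s(\cdot)$, strict convexity via Klein's inequality for uniqueness) is the right skeleton and is, as far as I can tell, the skeleton used in \cite{DP-JMPA}. Your treatment of the equality case is also essentially right and self-contained: integrating \fref{ineqk} against the hypothesis forces pointwise Cauchy--Schwarz saturation, which makes the two $\ell^2$-sequences $(\sqrt{\rho_p}\phi_p(x))_p$ and $(\sqrt{\rho_p}\nabla\phi_p(x))_p$ colinear a.e., so every $\phi_p$ solves the same first-order ODE and orthogonality collapses the rank to one; fitting $n_0$ and $u_0$ then pins down the pure state. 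A nice observation you make, which the statement does not quite advertise, is that under the equality hypothesis the feasible set is actually a singleton, so the explicit formula holds without invoking the minimization at all.

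The genuine gap is in the step you flag yourself --- the stability of the kinetic energy constraint under the weak limit --- and the two resolutions you sketch do not work as stated. You propose to ``exploit minimality of $\vs$'' to rule out $k[\vs] < k_0$ on a set of positive measure. But at that stage $\vs$ is only the weak limit and has not been shown to lie in $\calA(n_0,u_0,k_0)$, so it is not a competitor in the minimization and ``minimality of $\vs$'' is not available. Similarly, a Lagrange-multiplier / convex-duality argument presumes an existing optimizer and an Euler--Lagrange framework, which is exactly what one is trying to construct. The correct argument (sketched in the present paper in the proof of Lemma~\ref{lem:conveta}, and carried out in \cite{DP-JMPA}) is subtler: one has $S(\vs)\leq\inf_{\calA}S$ by lower semicontinuity and $\|k[\vs]\|_{L^1}\leq\|k_0\|_{L^1}$ by Fatou, and one shows that strict inequality in the latter contradicts the entropy inequality --- roughly, if energy were lost one could construct a competitor with the same density and current and with kinetic energy filling the gap, whose entropy would be strictly below $S(\vs)$, contradicting $S(\vs)\leq\inf_{\calA}S$. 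That construction, together with the strong $\calE$ convergence it implies, is the actual content that your sketch elides. Moreover, the compensating high-momentum perturbations you allude to must be built so as to preserve both the local density $n$ and the local current $j$ while adding local kinetic energy exactly where the deficit is, and that is not a free lunch: it is precisely the same kind of constrained perturbation machinery that the present paper develops at length (Proposition~\ref{direct}) in the one place it is forced to construct it explicitly.
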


Note that Theorem \ref{thexist} was obtained in \cite{DP-JMPA} for periodic boundary conditions. The proof immediately generalizes to the Neumann conditions that we chose here since they somewhat simplify some technicalities. The main result of this paper is the characterization of the minimizer of Theorem \ref{thexist}. Since the spatial domain is one-dimensional, we can actually treat the current constraint by a simple change a gauge and not consider it in the minimization. Suppose indeed that we can characterize the minimizer of $S$, denoted $\vs$, in the set
\begin{equation*}
\calA(n,k) = \left\{ \varrho\in\calE^+:\;  n[\varrho] = n, \; k[\varrho] = k\right\},
\end{equation*}
where $(n,k) \in \calM_0$ with
$$
\calM_0= \left\{ (n_0,k_0) \in \left(L^1 \right)^3: \; n_0=n[\varrho],\, k_0=k[\varrho],\, \textrm{for some } \varrho \in \calE^+ \right\}.
$$
Suppose in addition that $u[\vs]=0$. Then, we claim that the minimizer in $\calA(n_0,u_0,k_0)$, for $n = n_0$ and $k = k_0-n_0 u_0^2$, is
$$\widetilde{\vs}= e^{i \int_0^x u_0(y)dy} \varrho_\star \, e^{-i \int_0^x u_0(y)dy}.$$
We have indeed, for any $\varrho \in \calE^+$,
\bea 
j\left[e^{i \int_0^x u_0(y)dy}\, \varrho \, e^{-i \int_0^x u_0(y) dy}\right]&=&j[\varrho]+n[\varrho]u_0 \nonumber\\
k\left[e^{i \int_0^x u_0(y)dy} \, \varrho \, e^{-i \int_0^x u_0(y) dy}\right]&=&k[\varrho]+n[\varrho]u_0^2+2 n[\varrho] u[\varrho]u_0, \label{rel2}
\eea
and therefore, since $u[\varrho_\star]=0$, it follows that $j[\widetilde{\vs}]=n_0 u_0$ and $k[\widetilde{\vs}]=k_0$. This shows that $\widetilde{\vs}$ satisfies the constraints $(n_0,u_0,k_0)$. It is the minimizer since on the one hand,
$$
  \min_{\calA(n,0,k)} S(\varrho)=\min_{\calA(n,k)} S(\varrho)=S(\vs),
  $$
  since $\calA(n,0,k) \subset \calA(n,k)$ and $u[\vs]=0$, and on the other, denoting for the moment by $\sigma_\star$ the minimizer in $\calA(n_0,u_0,k_0)$,
  $$S(\widetilde{\vs}) \geq S(\sigma_\star)=S(e^{-i \int_0^x u_0(y)dy}\sigma_\star e^{i \int_0^x u_0(y)dy}) \geq \min_{\calA(n,0,k)} S(\varrho)=S(\vs)=S(\widetilde{\vs}).
  $$
  Above, we used that unitary equivalent operators have the same eigenvalues and therefore the same entropy, and that $e^{-i \int_0^x u_0(y)dy}\sigma_\star e^{i \int_0^x u_0(y)dy} \in \calA(n,0,k)$. Hence,
 $$ S(\widetilde{\vs}) = S(\sigma_\star),$$
 and since the minimizer is unique, we conclude that $\widetilde{\vs}=\sigma_\star$. Note that $(n_0,u_0,k_0) \in \calM$ implies that $(n,k) \in \calM_0$. Indeed, if $(n_0,u_0,k_0)$ are the moments of $\varrho_0 \in \calE^+$, then $(n,k)$ are the density and energy of $e^{-i \int_0^x u_0(y)dy} \varrho_0 e^{i \int_0^x u_0(y)dy}$ according to \fref{rel2}.

 From now on, we only consider the minimization problem in $\calA(n,k)$. We make the following assumptions on $(n,k) \in \calM_0$.

\paragraph{Assumptions A.}
Let
\begin{equation}\label{eq:defa}
a(x) = \left(k(x) - \left|\nabla\sqrt{n(x)} \right|^2\right)^{-1},
\end{equation}
and denote
\begin{equation*}
n_m = \min_{x\in [0,1]} n(x).
\end{equation*}
We assume that
\begin{enumerate}
\item $\sqrt{n}\in H^{1}$, $\Delta n \in L^2$, and $n_m > 0$,
\item $k \in L^{\infty}$,
\item there exists $a_M>0$ such that $a^{-1}(x) > a_M^{-1} >0$ a.e.
\end{enumerate}

Under Assumptions A, $S$ admits a unique minimizer $\vs$ in $\calA(n,k)$ (this is a direct adaptation of Theorem \ref{thexist}). And because of item 3, this minimizer is not simply $| \sqrt{n_0} \rangle \langle \sqrt{n_0} |$.

\paragraph{Main result.} We introduce first the following, for $\{\rho_p\}_{p \in \Nm}$ and $\{\phi_p\}_{p \in \Nm}$ the eigenvalues and eigenvectors of $\vs$:
\begin{align} \label{defK}
K_0(x,y)&=2 \Re \sum_{p \in \Nm} \rho_p \phi_p^*(x) \nabla \phi_p(y), \hspace{0.3cm} K(x,y)=\frac{K_0(x,y)}{2 n(x)} +\frac{a(x) \nabla n(x)}{4 n(x)}\nabla_x K_0(x,y),
\end{align}
for any $(x,y)\in\Omega \times \Omega$, where $a$ is defined in \fref{eq:defa} and $n$ is the constraint. Note that the series defining $K_0$ and $\nabla_x K_0$ converge almost everywhere according to Remark \ref{rem1}, and that we have the estimates $|K_0(x,y)| \leq 2 \sqrt{n(x) \, k(y)}$ and $|\nabla _x K_0(x,y)| \leq 2 \sqrt{k(x) \, k(y)}$, $x,y$ a.e. in $\Omega \times \Omega$. Since both $n$ and $k$ are bounded according to Assumptions A, it follows that $K_0$ and $\nabla_x K_0$ belong to $L^\infty \times L^\infty$. Moreover, since $a$, $\nabla n$ (since $k \in L^\infty$) and $n^{-1}$ are bounded according to Assumptions A, it follows that $K \in L^\infty \times L^\infty$. For an arbitrary kernel $N \in L^2 \times L^2$, we then define the integral operator $\calL_N$ and its adjoint $\calL_N^*$ by, for all $\varphi \in L^2$ and for any $x\in\Omega$,
\begin{equation}\label{defL}
\calL_{N}\varphi(x) = \int_0^x N(x,y) \varphi(y)dy, \qquad \calL^*_{N}\varphi(x) =\int_x^1 N(y,x) \varphi(y)dy.
\end{equation}
Let also, for every $x\in\Omega$,
\begin{align} \label{defgamma}
 \gamma_{\star}(x)&= 2 \Re \sum_{ p \in \Nm} \rho_{p} \nabla \phi_{p}(x)  \int_x^1 \phi_{p}^* (y) \left(\log(\rho_{p} )-\frac{n [\varrho_{\star} \log(\varrho_{\star})](y)}{n(y)}\right) dy,
\end{align}
which will be proved to belong to $L^\infty$, and let $m_\star =  a\, m_0 /2\in L^\infty$, where $m_0$ is the unique solution to the adjoint equation
\be \label{adj}
m_0=\calL_K^*m_0+ \gamma_\star.
\ee
The facts that the equation above admits a unique solution in $L^\infty$and that $m_\star$ is positive will be estalished in Sections \ref{proofpropdire} and \ref{sec:conc}. For $\varphi, \psi \in H^1$, consider finally the sesquilinear form
$$
\mathcal{Q}_{\star}(\psi,\varphi) = \int_0^1 n(x) \left(\nabla \left( \frac{\psi^*(x)}{\sqrt{n(x)}}\right) \nabla \left( \frac{\varphi(x)}{\sqrt{n(x)}}\right)\right)m_\star(x)  dx + \int_0^1 A_{\star}(x)  \psi^*(x) \varphi(x) dx,
$$
where 
\begin{equation*}
A_{\star} = -\frac{n[\varrho_\star \log(\varrho_\star)]}{n}-\frac{m_\star}{n}k.
\end{equation*}
We will prove further that $A_\star \in L^\infty$. That $\calQ_\star$ is well-defined on $H^1$ is a consequence of the facts that $n$ is bounded below and that $\nabla n \in L^\infty$.
%
Our main result is the following:

\begin{theorem} \label{mainth} Let $(n,k) \in \calM_0$ satisfy Assumptions A, and let $\varrho_\star$ be the unique minimizer of $S$ in $\calA(n,k)$. Denote by $\{\rho_p\}_{p\in\mathbb{N}}$ and $\{\phi_p\}_{p\in\mathbb{N}}$ the eigenvalues and eigenfunctions of $\varrho_\star$. Then $\varrho_\star$ is full rank, i.e. $\rho_p>0$ for all $p \in \Nm$, and $\{\rho_p\}_{p\in\mathbb{N}}$ and $\{\phi_p\}_{p\in\mathbb{N}}$ verify the self-consistent nonlinear eigenvalue problem
\be \label{minieig}
- \log (\rho_p) = \min_{\varphi \in \calK_p} \calQ_\star(\varphi,\varphi)=\calQ_\star(\phi_p,\phi_p), \qquad p \in \Nm,
\ee
where
$$
\calK_p=\left\{ \varphi \in H^1: \|\varphi\|_{L^2}=1\quad\textrm{and}\quad \varphi\in \left(\textrm{span}\{\phi_j\}_{0\leq j\leq p-1} \right)^{\perp}\right\},
$$
with the convention  $\calK_0=\{ \varphi \in H^1, \;  \|\varphi\|_{L^2}=1\}$. Morever, the current $n u[\vs]$ carried by $\vs$ vanishes. 
\end{theorem}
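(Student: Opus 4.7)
I would first establish $u[\varrho_\star]=0$ by exploiting the uniqueness of the minimizer together with the gauge identities \fref{rel2}. The bound $n u[\vs]^2\leq k$ from Remark~\ref{rem2}, combined with Assumption~A, gives $u[\vs]\in L^\infty$, so choosing $u_0=-2u[\vs]$ yields a well-defined multiplication operator $e^{i\int_0^x u_0(y)\,dy}$. The conjugate $\sigma:=e^{i\int_0^x u_0(y)\,dy}\vs\,e^{-i\int_0^x u_0(y)\,dy}$ is unitarily equivalent to $\vs$, hence has the same spectrum and the same entropy, and \fref{rel2} shows $n[\sigma]=n$ together with $k[\sigma]=k+nu_0^2+2nu[\vs]u_0=k$. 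Thus $\sigma\in\calA(n,k)$ is itself a minimizer, and by uniqueness $\sigma=\vs$. Comparing currents via \fref{rel2} then gives $nu[\vs]=nu[\vs]+nu_0$, so $u[\vs]=0$.

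\textbf{Step 2 (full rank).} I would next prove $\rho_p>0$ for every $p\in\Nm$ by contradiction. If the kernel of $\vs$ contains a unit vector $\psi\in H^1$, I would build a competitor $\varrho_\varepsilon\in\calA(n,k)$ by placing a small eigenvalue $\varepsilon>0$ in the direction $\psi$ and simultaneously perturbing the nonzero spectral data of $\vs$ by $O(\varepsilon)$ in order to restore the pointwise constraints on $n$ and $k$. The new eigenvalue contributes $s(\varepsilon)=\varepsilon\log\varepsilon-\varepsilon$ to the entropy, while the $O(\varepsilon)$ corrections to the strictly positive eigenvalues only contribute $O(\varepsilon)$ since $s$ is smooth away from zero. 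The net change $\varepsilon\log\varepsilon+O(\varepsilon)$ is strictly negative for small $\varepsilon$, contradicting the minimality of $\vs$. Full rank is essential for what follows, since it ensures that small additive perturbations of $\vs$ remain positive.

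\textbf{Step 3 (mixed parametrization of the feasible set).} With the current and positivity obstructions removed, the central difficulty is to parametrize $\calA(n,k)$ near $\vs$. Purely additive perturbations only yield inequalities because of positivity, so I would implement the additive/multiplicative strategy announced in the introduction and consider perturbations of the form
$$\varrho_t = e^{it\theta}\bigl(\vs + tP + tR(t,\alpha,\beta)\bigr)e^{-it\theta},$$
where $P$ is a prescribed self-adjoint direction, $(\theta,\alpha,\beta)$ is a triple of real-valued functions on $\Omega$ parametrizing the gauge and pointwise density/energy corrections, and $R$ is an explicit self-adjoint operator built from these functions and the spectral data of $\vs$. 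An application of the implicit function theorem in a suitable Banach space would then produce, for small $t$, functions $(\theta(t),\alpha(t),\beta(t))$ such that $\varrho_t\in\calA(n,k)$. The main obstacle—which I expect to be the most delicate part of the proof—is verifying invertibility of the Fr\'echet derivative of the constraint map $(\theta,\alpha,\beta)\mapsto\bigl(n[\varrho_t]-n,\,k[\varrho_t]-k\bigr)$ at the origin. This invertibility translates into a linear system involving the integral operators $\calL_{K_0},\calL_K,\calL_K^*$ from \fref{defK}--\fref{defL} and is ultimately reduced to the unique solvability in $L^\infty$ of the adjoint equation \fref{adj}, which relies on the uniform bound on $a$ provided by Assumption~A.3.

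\textbf{Step 4 (Euler--Lagrange and nonlinear eigenvalue problem).} Once the admissible curve $t\mapsto\varrho_t$ is constructed, I would differentiate $S(\varrho_t)$ at $t=0$. Minimality yields $\Tr\bigl(\log(\vs)\,\partial_t\varrho_t|_{t=0}\bigr)=0$ for every admissible direction $P$. Carrying out this first-order computation along the mixed perturbations of Step~3 produces, after integration by parts, a pointwise Euler--Lagrange identity in which the Lagrange multipliers associated with the density and kinetic energy constraints are precisely $A_\star$ and $m_\star$ as given in the theorem, with $m_\star=am_0/2$ and $m_0$ the unique $L^\infty$-solution of \fref{adj}; the boundedness of $\gamma_\star$ and the positivity of $m_\star$ announced in the statement are then recovered from the structure of the EL equation. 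Rewriting this identity in weak form against rank-one directions shows that $\vs=e^{-\calH_\star}$, where $\calH_\star$ is the self-adjoint operator canonically associated with the closed sesquilinear form $\calQ_\star$. The min--max formula \fref{minieig} then follows from the Courant--Fischer characterization of the eigenvalues $-\log(\rho_p)$ of $\calH_\star$, which simultaneously makes \fref{quantM} rigorous.
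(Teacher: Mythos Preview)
Your Step~1 symmetry argument for $u[\vs]=0$ is correct and more direct than the paper's route, which extracts this fact only at the very end from the Euler--Lagrange identity (Lemma~\ref{lem:SesquOrth}).

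The remainder of the proposal, however, has real gaps. The ordering is circular: in Step~2 the phrase ``perturb the nonzero spectral data by $O(\varepsilon)$ to restore the pointwise constraints'' is precisely the implicit-function-theorem construction you defer to Step~3; in addition the kernel vector $\psi$ is a priori only in $L^2$, not $H^1$ (the paper regularizes via $(1+\varepsilon H_0)^{-1}$), and your claim that full rank makes small additive perturbations of $\vs$ positive is false, since the eigenvalues of $\vs$ accumulate at zero regardless---this is exactly why the paper uses the multiplicative form $(I+L)\sigma(I+L^*)$, which is automatically nonnegative. The most substantial omission is in Step~4: $\log(\vs)$ is unbounded and the differentiation $\Tr(\log(\vs)\,\partial_t\varrho_t|_{t=0})$ is not justified. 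The paper addresses this by regularizing the entropy to $s_\eta(x)=(x+\eta)\log(x+\eta)-x-\eta\log\eta$, deriving the Euler--Lagrange relation first for the minimizer $\varrho_{\star,\eta}$ of $S_\eta$, and then passing to the limit $\eta\to 0$ via Lemma~\ref{lem:conveta}; it is this limiting procedure that produces both the equality \fref{eq:EulerLagrangeA} and the inequality of Proposition~\ref{prop:ineqQ}, from which full rank is actually deduced. Finally, your appeal to Courant--Fischer is not warranted: the paper emphasizes that one only knows $m_\star\in L^\infty$ and $m_\star\ge 0$, so $\calQ_\star$ is not known to be the form of a self-adjoint operator with compact resolvent. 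The min-max \fref{minieig} is instead obtained by combining the one-sided bound of Corollary~\ref{prop:QineqNN} with the identity $\calQ_\star(\phi_p,\phi_p)=-\log\rho_p$ of Lemma~\ref{lem:SesquOrth}, and the positivity of $m_\star$ is proved separately by a localization argument using that same one-sided bound---it does not follow from the structure of \fref{adj} alone.
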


Theorem \ref{mainth} provides us with a rigorous formulation of the relation \fref{quantM}, where $\calH_\star$ is formally
$$
\calH_\star=-\frac{1}{\sqrt{n}} \nabla   \left( n\, m_\star\nabla \left( \frac{\cdot}{\sqrt{n}}\right) \right)  + A_\star.
$$
Note that the structure of the form $\calQ_\star$ alone does not allow us to conclude that $\calQ_\star$ admits minimizers on the subspaces $\calK_p$. The reason for this is that we only have very minimal information on $m_\star$: we only know that $m_\star \geq 0$ and that $m_\star \in L^\infty$, and in particular there is no sufficient information to both use the min-max principle and to prove that the weighted space
\begin{equation*}
H_\star^1=\left\{\varphi \in L^2: \int_0^1 |\nabla \varphi(x)|^2 \,m_\star(x) \,dx+\int_0^1 |\varphi(x)|^2dx< \infty \right\}
\end{equation*}
is complete. It is unclear at this point if improved regularity of the data $(n,k)$ translates to more regularity on $m_\star$. In the same way, the fact that $m_\star=am_0/2$ is positive is not a consequence of $m_0$ solving \fref{adj}, it follows from the minimization problem, more precisely from the fact that $\calQ_\star$ is bounded from below as a consequence of the Euler-Lagrange equation. To summarize, the properties of $\calQ_\star$ and $m_\star$ are all inherited from the original minimization problem, and are difficult, if possible, to establish alone.

Theorem \ref{mainth} can be generalized to other entropies, in particular to the Fermi-Dirac entropy of the form $s(x)= x \log(x)+(1-x) \log(1-x)$, which shares the same technical difficulties as the Boltzmann entropy.

The rest of the paper consists of the proof of Theorem \ref{mainth}.

\section{Proof of Theorem \ref{mainth}} \label{secproof}

\paragraph{Outline of the proof.} The proof is divided into four steps. In the first step, we construct a parameterization of the feasible set $\calA(n,k)$ in order to perturb around the minimizer and obtain the Euler-Lagrange equation. This is done by using the implicit function theorem and an appropriate class of perturbations. The second step is the derivation of the Euler-Lagrange equation. There is a technical difficulty since the derivative of the entropy $s'(x)=\log(x)$ is singular at zero. We will therefore regularize the entropy and then pass to the limit. The third step consists in proving that $\vs$ is full rank, and is based on a proper use of the Euler-Lagrange equation. In the fourth step, we finally establish the positivity of $m_\star$ and the minimization principle \fref{minieig} of Theorem \ref{mainth}.

We start by introducing some notation. For $f\in L^2$, consider the bounded operator $T_f\,:\, H^1 \to L^{\infty}$, defined by
$$
T_f u (x):=\int_0^x f(y) \nabla u(y) dy.
$$
Furthermore, for $\varphi \in W^{1,\infty}$, let $\varrho_1=\sqrt{\rho_p}\ket{\varphi} \bra{\phi_p}$ (we use the standard Dirac bra-ket notation), where $\phi_p$ is the eigenvector of the minimizer $\vs$ associated with the eigenvalue $\rho_p$. With $\varrho_2\in\mathcal{E}^+$,  $f=(f_2,f_3)\in L^\infty \times L^\infty$, $t \in [-1,1]$, we define the operator
\begin{equation*}
L(t,f)=t \varrho_1+ f_2 I + T_{f_3}, \qquad \textrm{with domain} \qquad D(L(t,f))=H^1,
\end{equation*}
and introduce 
$$
\varrho(t,f)=(I+L(t,f))(\vs + t \varrho_2)(I+L^*(t,f)),
$$
where $I$ the identity operator and $L^*(t,f)$ the adjoint of $L(t,f)$ (formal at that stage).

The rationale behind the choice of $L(t,f)$ and $\varrho(t,f)$ is the following: we need first $\varrho(t,f)$ to be a density operator, and  if $t \varrho_2$ is positive, it is clear that $\varrho(t,f)$ is positive (and therefore self-adjoint). The trace class property is a consequence of the regularity of $f$ and $\varrho_1$ and will be established further.

We need moreover to impose the density and energy constraints on the two functions $n[\varrho(t,f)]$ and $k[\varrho(t,f)]$, and with the goal of using the implicit function theorem, it is natural to introduce two functions $f_2$ and $f_3$ for this. More precisely, the operator $f_2 I$ for $f_2$ real-valued acts as multiplication of the local density by $f_2$ since $n[f_2 \varrho]=f_2 n[\varrho]$ for any density operator $\varrho$; in the same way, $T_{f_3}$ multiplies the local energy by $f_3$ since $k[T_{f_3} \varrho]=f_3 k[\varrho]$. The variable $t$ will allow us to parametrize the feasible set with some $(f_2(t),f_3(t))$ obtained with the implicit function theorem. The operators $\varrho_1$ and $\varrho_2$ serve as ``test operators'' in the Euler-Lagrange equation, and both provide us with independent information. On the one hand, the operator $\varrho_1$ can have an arbitrary sign and leads to test operators of form $\varrho_1 \vs + \vs \varrho_1^*$ and to Lemma \ref{lem:SesquOrth}. The presence of $\vs$ in the previous expression limits what can infered about the form $\calQ_\star$. On the other hand, as an additive positive perturbation, $\varrho_2$ leads to a inequality, with now a test operator independent of $\vs$. This results in particular in Corollary \ref{prop:QineqNN} and in the fact that $\vs$ is full rank.

\subsection{Step 1: Construction of admissible directions}

The next lemma provides us with a proper definition of $\varrho(t,f)$.
\begin{lemma} \label{proper}Let $\sigma \in \calE^+$ and $f \in L^\infty \times L^\infty$. Then $B=(I+L(t,f)) \sqrt{\sigma}$ is bounded, and $\overline{(I+L(t,f))\sigma (I+L^*(t,f))}=B B^*$.
  \end{lemma}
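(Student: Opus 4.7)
The plan is to establish first that $B = (I + L(t,f))\sqrt{\sigma}$ extends to a bounded operator on $L^2$, and then to reinterpret the formal product $(I+L(t,f))\sigma(I+L^*(t,f))$ as $BB^*$ via a density argument.

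For the boundedness of $B$, the key is a factorization through $H^1$. First, since $\sigma \in \calE^+$, the property $\overline{\sqrt{H_0}\,\sigma\,\sqrt{H_0}} \in \calJ_1$ can be written as $(\sqrt{\sigma}\sqrt{H_0})^*(\sqrt{\sigma}\sqrt{H_0}) \in \calJ_1$, which forces $\overline{\sqrt{H_0}\sqrt{\sigma}} \in \calJ_2$, and in particular $\|\sqrt{H_0}\sqrt{\sigma}\,u\|_{L^2} \lesssim \|u\|_{L^2}$, so $\sqrt{\sigma}$ maps $L^2$ continuously into $H^1$. Next, $L(t,f)$ is bounded from $H^1$ into $L^2$: the rank-one piece $\varrho_1 = \sqrt{\rho_p}\ket{\varphi}\bra{\phi_p}$ is bounded on $L^2$ since $\varphi, \phi_p \in L^2$; multiplication by $f_2 \in L^\infty$ is bounded on $L^2$; and for $u \in H^1$ the Volterra operator satisfies $\|T_{f_3}u\|_{L^\infty} \leq \|f_3\|_{L^\infty}\|\nabla u\|_{L^2}$ by the Cauchy--Schwarz inequality on $[0,1]$, so $T_{f_3}: H^1 \to L^\infty \hookrightarrow L^2$ is bounded. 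Composing yields $B: L^2 \to L^2$ bounded, hence $BB^*$ is a bounded self-adjoint operator on $L^2$.

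For the identification $\overline{(I+L(t,f))\sigma(I+L^*(t,f))} = BB^*$, write $\sigma = \sqrt{\sigma}\cdot\sqrt{\sigma}$, so that $BB^* = (I+L(t,f))\sqrt{\sigma}\,[(I+L(t,f))\sqrt{\sigma}]^*$. The idea is to identify the bounded operator $[(I+L(t,f))\sqrt{\sigma}]^*$ with $\sqrt{\sigma}(I+L^*(t,f))$ on a dense subspace $\calD \subset L^2$ where the formal adjoint $L^*(t,f)$ admits a concrete realization. For $u \in \calD$ and $v \in L^2$, the duality $\langle L(t,f)\sqrt{\sigma} v, u\rangle = \langle \sqrt{\sigma} v, L^*(t,f) u\rangle = \langle v, \sqrt{\sigma} L^*(t,f) u\rangle$ (valid since $\sqrt{\sigma} v \in H^1$) gives $B^* u = \sqrt{\sigma}(I + L^*(t,f)) u$ on $\calD$. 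It follows that $BB^* u = (I+L(t,f))\sigma(I+L^*(t,f))\, u$ on $\calD$, and since $BB^*$ is bounded on all of $L^2$, it coincides with the closure of the right-hand side.

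The main obstacle is that $T_{f_3}$ involves differentiation and is not closable on $L^2$; its formal $L^2$-adjoint, obtained by integration by parts, has a derivative acting against the merely $L^\infty$ coefficient $f_3$ together with a boundary term at $0$, so $L^*(t,f)$ is not literally an $L^2\to L^2$ operator. The remedy is conceptual: $L^*(t,f)$ need only be defined on a dense subspace $\calD$ regular enough to justify the duality used above, while $B^*$ itself is defined unambiguously as the Hilbert-space adjoint of the bounded operator $B$. The role of the lemma is precisely to consolidate the ill-defined formal product into this rigorous bounded operator, and this is what the overline signifies.
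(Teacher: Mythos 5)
Your argument for the boundedness of $B$ is correct and is essentially the paper's: you factor $B$ as $L^2\xrightarrow{\sqrt{\sigma}}H^1\xrightarrow{I+L(t,f)}L^2$, the second arrow being bounded for exactly the reasons you give; the only cosmetic difference is that you route $\sqrt{\sigma}:L^2\to H^1$ through the fact that $A=\overline{\sqrt{\sigma}\sqrt{H_0}}$ with $A^*A\in\calJ_1$ must be Hilbert--Schmidt, whereas the paper just quotes the inequality $\|\sqrt{\sigma}\varphi\|_{H^1}\leq\|\sigma\|_\calE^{1/2}\|\varphi\|_{L^2}$.

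The identification $BB^*=\overline{(I+L)\sigma(I+L^*)}$ is where your proposal has a real gap. You propose to verify $B^*u=\sqrt{\sigma}(I+L^*(t,f))u$ on a dense subspace $\calD$ by duality, and then acknowledge in the final paragraph that $L^*(t,f)$ (in particular the adjoint of the Volterra piece $T_{f_3}$) is not an honest $L^2\to L^2$ operator; but you never actually produce a $\calD$, show it is dense, or carry out the integration by parts you allude to. Worse, since $L(t,f)$ is a priori unbounded and not obviously closable on $L^2$, the genuine Hilbert-space adjoint $L(t,f)^*$ could fail to be densely defined, so the strategy ``take $\calD\subset D(L^*)$ dense'' can break down in principle — this is precisely the obstruction your last paragraph names without resolving. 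The paper avoids it entirely: it never attempts to realize $L^*(t,f)$ on any subspace, but instead computes $B^*$ explicitly, $B^*\varphi=\sum_j\sqrt{\rho_j[\sigma]}\langle(I+L(t,f))\phi_j[\sigma],\varphi\rangle\,\phi_j[\sigma]$, which is a convergent $L^2$ series because $B\in\calL(L^2)$, and \emph{declares} this series to be the meaning of $\overline{\sqrt{\sigma}(I+L^*(t,f))}\varphi$. Pairing $\langle L(t,f)\phi_j,\varphi\rangle$ is unobjectionable term by term since $\sqrt{\rho_j}\phi_j\in H^1=D(L(t,f))$. Your proof should replace the $\calD$-density gesture by this explicit spectral realization (or by an equivalent one, e.g.\ observing that $L(t,f)^*\varphi$ always lives in $H^{-1}$ and that $\sqrt{\sigma}$ extends boundedly from $H^{-1}$ to $L^2$ by duality with $\sqrt{\sigma}:L^2\to H^1$).
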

  \begin{proof} We note first that $\sqrt{\sigma} L^2 \subset H^1$ since $ \| \sqrt{\sigma} \varphi \|_{H^1} \leq  \|\sigma\|_{\calE}^{1/2} \|\varphi\|_{L^2}$ for all $\varphi \in L^2$. Hence, $(I+L(t,f)) \sqrt{\sigma}$ is bounded since $D(L(t,f))=H^1$. Furthermore, a direct calculation shows that, for $\rho_j[\sigma]$ and $\phi_j[\sigma]$ the eigenvalues and eigenfunctions of $\sigma$ and $\varphi \in L^2$,
    $$
    \overline{\sqrt{\sigma} (I+L^*(t,f))} \varphi=\sum_{j \in \Nm} \sqrt{\rho_j[\sigma]} \langle (I+L(t,f))\phi_j[\sigma], \varphi \rangle \phi_j[\sigma],
    $$
    with convergence in $L^2$. It suffices then to identify the expression of the r.h.s above with that of $((I+L(t,f)) \sqrt{\sigma})^* \varphi$ to obtain  $((I+L(t,f)) \sqrt{\sigma})^*=\overline{\sqrt{\sigma} (I+L^*(t,f))}$ . This concludes the proof since $$\overline{(I+L(t,f))\sigma (I+L^*(t,f))}=(I+L(t,f))\sqrt{\sigma}\, \overline{\sqrt{\sigma} (I+L^*(t,f))},$$
    as $(I+L(t,f))\sqrt{\sigma}$ is bounded. 
    \end{proof}

\bigskip

    Since both $\vs$ and $\varrho_2$ are in $\calE^+$, we can then interpret $\varrho(t,f)$ as
    \be \label{decomp}
    \varrho(t,f)= A_0 A_0^*+t A_1 A_1^*, 
    \ee
    where $A_0=(I+L(t,f)) \sqrt{\vs}$ and $A_1=(I+L(t,f)) \sqrt{\varrho_2}$, both being bounded. 

We remark in passing that
$$
n[\varrho_1\vs + \vs\varrho_1^*]= 2 \rho_p^{3/2}\,\Re \varphi^* \phi_p, \qquad k[\varrho_1\vs + \vs\varrho_1^*]=2 \rho_p^{3/2}\,\Re \nabla \varphi^* \nabla \phi_p, 
$$
which are both bounded since $\varphi \in W^{1,\infty}$ and $\sqrt{\rho_p}\phi_p \in W^{1,\infty}$ since $k \in L^\infty$.

    The main result of this section is the following:

\begin{proposition} \label{direct} (Parameterization of the feasible set). Suppose $\varrho_2=0$, (resp. $\varrho_1=0$, $k[\varrho_2] \in L^\infty$). Then, there exists $t_0>0$ and $f \in C^1((-t_0,t_0), W^{1,\infty} \times L^\infty)$ (resp. $C^1([0,t_0), W^{1,\infty} \times L^\infty)$) such that $\varrho(t,f(t)) \in \calA(n,k)$ for all $t \in (-t_0,t_0)$ (resp. $t \in [0,t_0)$). Moreover, the derivatives at $t=0$, $f'_2(0)$ and $f'_3(0)$ verify
\begin{equation}\label{sysf}\left\{\begin{array}{ll}
n_1+ n_2+2 f'_2(0) n+ \calL_{K_0} f_3'(0) =0,
\\ f_3'(0)=\calL_{K} f_3'(0) +b,
\end{array}\right. \end{equation}
where $\calL_K$, $\calL_{K_0}$ are defined in \fref{defK}-\fref{defL}, $n_1=n[\varrho_1\vs + \vs\varrho_1^*]\in L^\infty$, $k_1=k[\varrho_1\vs + \vs\varrho_1^*] \in L^\infty$, $n_2=n[\varrho_2]\in L^\infty$, $k_2=k[\varrho_2] \in L^\infty$, and 
\begin{equation*}
b = \frac{n_1+n_2}{2n} - \frac{a}{2}\left(k_1+k_2 - \frac{\nabla\sqrt{n}}{\sqrt{n}} \nabla (n_1+n_2)\right).
\end{equation*}
Finally, $\varrho(t,f(t)) \in C^1((-t_0,t_0),\calJ_1)$ (resp. $C^1([0,t_0),\calJ_1))$.
\end{proposition}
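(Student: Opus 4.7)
The plan is to apply the implicit function theorem (IFT) in Banach spaces to the residual map
$$
\Phi(t,f) = \bigl(n[\varrho(t,f)] - n,\ k[\varrho(t,f)] - k\bigr),
$$
defined on a neighborhood of the origin in $\Rm \times (W^{1,\infty} \times L^\infty)$ with values in $W^{1,\infty} \times L^\infty$. Since $\varrho(0,0) = \vs \in \calA(n,k)$, one has $\Phi(0,0) = 0$, and the conclusion $\varrho(t,f(t)) \in \calA(n,k)$ is equivalent to $\Phi(t,f(t)) = 0$. The sign restriction $t \geq 0$ in the case $\varrho_1 = 0$ comes from the positivity requirement in the decomposition $\varrho(t,f) = A_0 A_0^* + t A_1 A_1^*$ provided by Lemma \ref{proper} and \eqref{decomp}; when $\varrho_2 = 0$ positivity is automatic and both signs of $t$ are allowed.

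I would begin by verifying that $\varrho(t,f) \in \calE^+$ for $(t,f)$ in a small neighborhood of $(0,0)$. The decomposition \eqref{decomp} already yields positivity and trace class once $A_0 = (I+L(t,f))\sqrt{\vs}$ and $A_1 = (I+L(t,f))\sqrt{\varrho_2}$ are bounded, which is immediate from $\sqrt{\vs}, \sqrt{\varrho_2} : L^2 \to H^1$ and the fact that $L(t,f)$ is defined on $H^1$. To upgrade to $\calE$, I use that $\varrho_1$ has range in $W^{1,\infty} \subset H^1$, that multiplication by $f_2 \in W^{1,\infty}$ preserves $H^1$, and that $T_{f_3}$ maps $H^1 \to H^1$ when $f_3 \in L^\infty$; consequently $(I+L(t,f))$ restricts to a bounded operator on $H^1$, and writing $\sqrt{H_0} A_0 = [\sqrt{H_0}(I+L)\sqrt{H_0}^{-1}]\cdot[\sqrt{H_0}\sqrt{\vs}]$ (bounded times Hilbert-Schmidt) gives $\sqrt{H_0}\varrho(t,f)\sqrt{H_0} \in \calJ_1$.

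Next I would compute $D_f \Phi(0,0)$ by expanding
$$
\varrho(t,f) = \vs + \bigl(f_2\vs + \vs f_2 + T_{f_3}\vs + \vs T_{f_3}^*\bigr) + t\bigl(\varrho_1\vs + \vs\varrho_1^* + \varrho_2\bigr) + \calR,
$$
with $\calR$ the quadratic remainder, and applying the kernel identities $n[A](x) = A(x,x)$ and $k[A](x) = \partial_x\partial_y A(x,y)|_{y=x}$ to $\vs(x,y) = \sum_p \rho_p \phi_p(x)\phi_p^*(y)$. This yields $n[f_2\vs+\vs f_2] = 2f_2 n$ and $n[T_{f_3}\vs+\vs T_{f_3}^*] = \calL_{K_0}f_3$, which is exactly how the kernel $K_0$ of \eqref{defK} enters, together with analogous formulas for the $k$-moments featuring $k$, $\nabla n$ and $\nabla f_2$. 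The estimates $K_0(x,x) = \nabla n$ and $|\partial_x K_0(x,y)|\leq 2\sqrt{k(x)k(y)}\in L^\infty$, combined with $\nabla n \in L^\infty$ and $n \geq n_m > 0$ from Assumptions A, show that $D_f \Phi(0,0)$ is a bounded linear map on $W^{1,\infty}\times L^\infty$ and that $\Phi$ is $C^1$ into the same space (the dependence being polynomial in $(t,L(t,f))$).

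The main obstacle is the invertibility of the differential, and it is here that the precise form of $K$, $a$ and $b$ in the statement is forced upon us. Given source data $(r_n,r_k)\in W^{1,\infty}\times L^\infty$, one first solves the density line for $f_2 = -(r_n + \calL_{K_0} f_3)/(2n) \in W^{1,\infty}$ (well defined thanks to $n_m > 0$ and $n \in W^{1,\infty}$), substitutes into the energy line, and rearranges using $\nabla n/(2n) = \nabla\sqrt{n}/\sqrt{n}$ together with the identity $a^{-1} = k - |\nabla\sqrt{n}|^2$ from Assumption A.3 to reduce to a Volterra equation $(I - \calL_K) f_3 = b$ on $L^\infty$, with $K$ and $b$ as in the statement. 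Since $K \in L^\infty \times L^\infty$, the Volterra operator $\calL_K$ is quasinilpotent on $L^\infty$, its iterated kernels being pointwise bounded by $\|K\|_\infty^m (x-y)^{m-1}/(m-1)!$, so $I - \calL_K$ is a topological automorphism of $L^\infty$ via the Neumann series, which yields a bounded inverse of $D_f \Phi(0,0)$. The IFT then produces $f \in C^1$ in a neighborhood of $t = 0$ with $\Phi(t,f(t)) = 0$; differentiating this relation at $t = 0$ recovers \eqref{sysf}. Finally, the $C^1$ regularity of $t \mapsto \varrho(t,f(t)) \in \calJ_1$ follows from the polynomial form of $\varrho(t,f)$ in $(t,f)$ combined with Lemma \ref{proper} and the trace class estimates from the first step.
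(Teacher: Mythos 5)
Your proposal follows essentially the same route as the paper: apply the implicit function theorem to the residual map $\Phi(t,f)=(n[\varrho(t,f)]-n,\,k[\varrho(t,f)]-k)$ from $\Rm\times(W^{1,\infty}\times L^\infty)$ to $W^{1,\infty}\times L^\infty$, verify the $\calE$-regularity of $\varrho(t,f)$, reduce the invertibility of $D_f\Phi(0,0)$ to a Volterra equation in $L^\infty$ with kernel $K$ (solvable by quasinilpotence, exactly as in Lemma~\ref{lem:Volterra}), handle the sign restriction on $t$ through positivity of the decomposition \eqref{decomp}, and obtain \eqref{sysf} by differentiating the relation $\Phi(t,f(t))=0$ at $t=0$. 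The only slip is the factorization $\sqrt{H_0}(I+L)\sqrt{H_0}^{-1}$, which should use $(I+\sqrt{H_0})^{-1}$ instead since $H_0$ with Neumann conditions has a zero eigenvalue; the paper instead verifies the Hilbert--Schmidt bound directly by summing $\|\sqrt{H_0}(I+L)\sqrt{\vs}\phi_j\|_{L^2}^2$ over the eigenbasis of $\vs$.
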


The proof of Proposition \ref{direct} is fairly long and is postponed to Section \ref{proofpropdire}.

\subsection{Step 2: The Euler-Lagrange equation}

\paragraph{Regularization.} We have constructed a proper parameterization of the feasible set in the last section, and are now in position to derive the Euler-Lagrange equation. There is a technical issue since the derivative of the entropy is singular at $x=0$, and it is unclear how to proceed without regularization. We then consider the smoothed entropy, for $\eta\in(0,1]$, 
\begin{equation*}
s_\eta(x) = (x+\eta)\log(x + \eta) - x - \eta \log(\eta),\quad x\in\mathbb{R}^+,
\end{equation*}
with $S_{\eta}(\varrho) = \Tr(s_{\eta}(\varrho))$ for $\varrho \in \calE^+$. By adapting the techniques of \cite{DP-JMPA} used in the proof of Theorem \ref{thexist}, it can be shown that $S_\eta$ admits a unique minimizer $\varrho_{\star,\eta}$ in $\mathcal{A}(n,k)$, and we denote by $\{\rho_{p,\eta}\}_{p\in\mathbb{N}}$ the nonincreasing sequence of eigenvalues of $\varrho_{\star,\eta}$, and by $\{\phi_{p,\eta}\}_{p\in\mathbb{N}}$ the associated eigenfunctions.

Proposition $\ref{direct}$ applies to $\varrho_{\star,\eta}$, and we denote by $\varrho_\eta(t,f(t))$ the corresponding perturbed operator with $\varrho_{1}=\varrho_{1,\eta}=\sqrt{\rho_{p,\eta}}\ket{\varphi} \bra{\phi_{p,\eta}}$ (note that $f$ depends on $\eta$ but this fact is omitted to alleviate notation). We set $\varrho_2=0$ in $\varrho_\eta(t,f(t))$ until further notice, and then consider the function $F_{\eta} : (-t_0,t_0)\to \mathbb{R}$ given by ($t_0$ is the one coming from the version of Proposition \ref{direct} for $\varrho_{\star,\eta}$ and depends on $\eta$),
\begin{equation*}
F_{\eta}(t)=\Tr \big( s_{\eta}(\varrho_{\eta}(t,f(t)))\big).
\end{equation*}

Before deriving the Euler equation for the regularized problem, we state the two lemmas below, proved in Sections \ref{proofE1} and \ref{proofEL1}.
\begin{lemma}\label{estimnlog} The operator $\varrho_{\star,\eta} \log(\varrho_{\star,\eta}+\eta)$ is trace class, and there exists $C$ independent of $\eta$ such that
  $$\|n [\varrho_{\star,\eta} \log(\varrho_{\star,\eta}+\eta)]\|_{L^\infty} \leq C.$$
  \end{lemma}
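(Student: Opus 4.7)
Since $\|\varrho_{\star,\eta}\|_{\calL(L^2)}\leq\Tr(\varrho_{\star,\eta})=\|n\|_{L^1}$, the spectrum of $\log(\varrho_{\star,\eta}+\eta)$ sits in $[\log\eta,\log(\|n\|_{L^1}+\eta)]$, so this operator is bounded and hence $\varrho_{\star,\eta}\log(\varrho_{\star,\eta}+\eta)$ is trace class as the product of a trace-class with a bounded operator. Its density admits the pointwise representation
\[
n[\varrho_{\star,\eta}\log(\varrho_{\star,\eta}+\eta)](x)=\sum_{p\in\Nm}\rho_{p,\eta}\log(\rho_{p,\eta}+\eta)\,|\phi_{p,\eta}(x)|^2.
\]

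For the upper bound, the elementary inequality $\log(x+\eta)\leq x+\eta-1$ yields $\rho\log(\rho+\eta)\leq \rho^2+\eta\rho$, so
\[
n[\varrho_{\star,\eta}\log(\varrho_{\star,\eta}+\eta)]\leq n[\varrho_{\star,\eta}^2]+\eta\, n.
\]
The pointwise estimate $n[\varrho^2]\leq\|\varrho\|_{\calL(L^2)}\, n[\varrho]$ together with $\|\varrho_{\star,\eta}\|_{\calL(L^2)}\leq\|n\|_{L^1}$ and $n\in L^\infty$ from Remark \ref{rem1} gives the desired uniform-in-$\eta$ upper bound.

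The main obstacle is the lower bound, where infinitely many tail indices with $\rho_{p,\eta}<1$ contribute negative terms to the pointwise sum. I would split by eigenvalue size: for the finitely many indices (at most $\|n\|_{L^1}$ in number since $\sum\rho_{p,\eta}=\|n\|_{L^1}$) with $\rho_{p,\eta}\geq 1$, the bound $\|\nabla\phi_{p,\eta}\|^2_{L^2}\leq\|k\|_{L^1}/\rho_{p,\eta}\leq\|k\|_{L^1}$ combined with the 1D Sobolev embedding $H^1\hookrightarrow L^\infty$ yields a uniform bound on $|\phi_{p,\eta}(x)|^2$ and controls this finite contribution. For the tail $\rho_{p,\eta}<1$, the naive inequality $|\rho_{p,\eta}\log(\rho_{p,\eta}+\eta)|\leq 1/e$ is inadequate since $\sum|\phi_{p,\eta}(x)|^2$ diverges pointwise. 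To close this step, I would exploit the minimization property of $\varrho_{\star,\eta}$ via Klein's inequality for the operator-convex function $s_\eta$,
\[
\Tr(s_\eta(\sigma))-\Tr(s_\eta(\varrho_{\star,\eta}))\geq\Tr\bigl(\log(\varrho_{\star,\eta}+\eta)(\sigma-\varrho_{\star,\eta})\bigr),
\]
against a trial $\sigma\in\calA(n,k)$ (e.g.~a finite-rank operator built from $\sqrt n$ multiplied by a bump function localized near a given $x$, plus higher modes carrying the energy constraint), and use the minimization $\Tr(s_\eta(\varrho_{\star,\eta}))\leq\Tr(s_\eta(\sigma))\leq C$ uniformly in $\eta$. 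This transforms the trace-level control into the required pointwise $L^\infty$ estimate on the density; the hard part is precisely this passage from an integrated to a pointwise bound, which requires the trial operator to be adapted to every point of $\Omega$ while respecting the constraints of $\calA(n,k)$.
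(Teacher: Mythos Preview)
Your trace-class argument and the upper bound via $\log(x+\eta)\le x+\eta-1$ are correct, but the heart of the lemma---the uniform lower bound---is left open. You correctly identify that the tail $\sum_{\rho_{p,\eta}<1}\rho_{p,\eta}\log(\rho_{p,\eta}+\eta)|\phi_{p,\eta}|^2$ is the obstacle, but your proposed route through Klein's inequality does not close. Klein's inequality against a trial $\sigma\in\calA(n,k)$ yields only the integrated statement $\Tr\bigl(\log(\varrho_{\star,\eta}+\eta)(\sigma-\varrho_{\star,\eta})\bigr)\le \Tr(s_\eta(\sigma))-\Tr(s_\eta(\varrho_{\star,\eta}))$, and passing from this to a pointwise $L^\infty$ bound would force you to build, for \emph{every} $x\in\Omega$, a competitor $\sigma_x\in\calA(n,k)$ whose difference from $\varrho_{\star,\eta}$ is localised near $x$. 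Constructing such competitors while simultaneously matching the local density \emph{and} local kinetic energy is essentially the same difficulty the paper spends Proposition~\ref{direct} on; you have not indicated how to do it, and there is no reason to expect a cheap construction. As written, the argument stops at the sentence ``the hard part is precisely this passage'' without supplying the missing idea.

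The paper bypasses all of this with a direct, elementary estimate that never uses the minimizing property. Using the 1D interpolation $\|\phi_{j,\eta}\|_{L^\infty}^2\lesssim\|\phi_{j,\eta}\|_{L^2}\|\phi_{j,\eta}\|_{H^1}=\|\phi_{j,\eta}\|_{H^1}$ and Cauchy--Schwarz,
\[
\sum_{j}\rho_{j,\eta}\,|\log(\rho_{j,\eta}+\eta)|\,\|\phi_{j,\eta}\|_{L^\infty}^2
\lesssim\Bigl(\sum_j\rho_{j,\eta}|\log(\rho_{j,\eta}+\eta)|^2\Bigr)^{1/2}
\Bigl(\sum_j\rho_{j,\eta}\|\phi_{j,\eta}\|_{H^1}^2\Bigr)^{1/2}.
\]
The second factor is $\|\varrho_{\star,\eta}\|_\calE^{1/2}=(\|n\|_{L^1}+\|k\|_{L^1})^{1/2}$. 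For the first, the elementary bound $x|\log x|^2\le Cx^{2/3}$ on $[0,1]$ reduces matters to controlling $\Tr(\varrho_{\star,\eta}^{2/3})$, and the key observation is that the energy bound forces eigenvalue decay: from $\sum_j\lambda_j\rho_{j,\eta}\le\|\varrho_{\star,\eta}\|_\calE$ with $\lambda_j\sim j^2$ the eigenvalues of $H_0$, H\"older's inequality gives $\Tr(\varrho_{\star,\eta}^{2/3})\lesssim\|\varrho_{\star,\eta}\|_\calE^{2/3}$, uniformly in $\eta$. This handles the absolute value of the series in one stroke, so no separate treatment of upper and lower bounds is needed.
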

 
  The second lemma concerns $F_\eta$.
\begin{lemma}\label{lem:EulerLag1}
The function $F_\eta$ belongs to $C^1((-t_0,t_0))$, and we have
\begin{equation*}
  \left. \frac{d F_{\eta}(t)}{d t} \right|_{t=0}= \Tr \big(\varrho_{\star,\eta} \log(\varrho_{\star,\eta} + \eta) (\varrho_{1,\eta}+\varrho_{1,\eta}^*)\big)- \left\langle \frac{n[\varrho_{\star,\eta} \log(\varrho_{\star,\eta}+\eta)]}{n},n_{1,\eta} \right\rangle+\langle \gamma_{\eta},f'_3(0)\rangle,
\end{equation*}
where $n_{1,\eta}=n[\varrho_1 \varrho_{\star,\eta} + \varrho_{\star,\eta}\varrho_1^*]$,
\begin{align*}
 \gamma_{\eta}(x)&= 2 \Re \sum_{ j \in \Nm} \rho_{j,\eta} \nabla \phi_{j,\eta}(x)  \int_x^1 \phi_{j,\eta}^* (y) \left(\log(\rho_{j,\eta} + \eta)-\frac{n [\varrho_{\star,\eta} \log(\varrho_{\star,\eta}+\eta)](y)}{n(y)}\right) dy,
\end{align*}
and $\gamma_\eta \in L^\infty$.
\end{lemma}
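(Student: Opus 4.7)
\emph{Overall approach.} I will differentiate $F_\eta$ directly via the chain rule for $\Tr\circ s_\eta$, then use the first equation of \eqref{sysf} to eliminate $f'_2(0)$ and rearrange the remaining $f'_3(0)$-contribution into the function $\gamma_\eta$. By Proposition \ref{direct}, the map $t\mapsto \varrho_\eta(t,f(t))$ is $C^1$ into $\calJ_1$, and for $t$ near any $t_\ast\in(-t_0,t_0)$ the spectrum of $\varrho_\eta(t,f(t))+\eta$ stays in a compact subset of $(0,\infty)$. Combined with the Duhamel-type identity
\[
s_\eta(A)-s_\eta(B)=\int_0^1 s_\eta'\big(B+\tau(A-B)\big)(A-B)\,d\tau,
\]
valid for self-adjoint operators via functional calculus, and the continuity of $A\mapsto \log(A+\eta)$ on the relevant spectral region, this yields $F_\eta\in C^1(-t_0,t_0)$ with $\frac{dF_\eta}{dt}(t)=\Tr\big(\log(\varrho_\eta(t,f(t))+\eta)\,\tfrac{d\varrho_\eta}{dt}(t)\big)$.

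\emph{Derivative at $t=0$.} Since $f(0)=0$, we have $L(0,f(0))=0$, so
\[
\frac{d\varrho_\eta}{dt}\Big|_{t=0}=\dot L\,\varrho_{\star,\eta}+\varrho_{\star,\eta}\,\dot L^*,\qquad \dot L=\varrho_{1,\eta}+f'_2(0)\,I+T_{f'_3(0)}.
\]
Using cyclicity of the trace together with $[\log(\varrho_{\star,\eta}+\eta),\varrho_{\star,\eta}]=0$, the derivative splits into three pieces: a $\varrho_{1,\eta}$-piece equal to $\Tr\big(\varrho_{\star,\eta}\log(\varrho_{\star,\eta}+\eta)(\varrho_{1,\eta}+\varrho_{1,\eta}^*)\big)$, matching the first term of the lemma; an $f'_2(0)$-piece equal to $2\int_0^1 f'_2(0)\,n[\varrho_{\star,\eta}\log(\varrho_{\star,\eta}+\eta)]\,dx$; and a $T_{f'_3(0)}$-piece equal to $2\Re\,\Tr\big(\varrho_{\star,\eta}\log(\varrho_{\star,\eta}+\eta)\,T_{f'_3(0)}\big)$.

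\emph{Identifying $\gamma_\eta$.} Substituting $2 f'_2(0)\,n=-n_{1,\eta}-\calL_{K_0}f'_3(0)$ from \eqref{sysf} (with $\varrho_2=0$, $n_2=0$) into the $f'_2(0)$-piece produces $-\langle n[\varrho_{\star,\eta}\log(\varrho_{\star,\eta}+\eta)]/n,\,n_{1,\eta}\rangle$, which is the second term of the lemma, together with a residual $-\langle \calL_{K_0}^*(n[\varrho_{\star,\eta}\log(\varrho_{\star,\eta}+\eta)]/n),\,f'_3(0)\rangle$ coming from the adjoint relation for $\calL_{K_0}$. For the $T_{f'_3(0)}$-piece, I expand in the eigenbasis $\{\phi_{j,\eta},\rho_{j,\eta}\}$ of $\varrho_{\star,\eta}$ and swap the order of integration by Fubini to rewrite it as $\langle \tilde\gamma_\eta,f'_3(0)\rangle$, where
\[
\tilde\gamma_\eta(x)=2\Re\sum_j \rho_{j,\eta}\log(\rho_{j,\eta}+\eta)\,\nabla\phi_{j,\eta}(x)\int_x^1 \phi_{j,\eta}^*(y)\,dy.
\]
A direct eigen-expansion of $\calL_{K_0}^*$ (using $K_0(x,y)=2\Re\sum_j\rho_{j,\eta}\phi_{j,\eta}^*(x)\nabla\phi_{j,\eta}(y)$ for the regularized minimizer) shows that $\tilde\gamma_\eta-\calL_{K_0}^*(n[\varrho_{\star,\eta}\log(\varrho_{\star,\eta}+\eta)]/n)=\gamma_\eta$, exactly as defined in the lemma. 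The $L^\infty$-bound on $\gamma_\eta$ follows from Cauchy--Schwarz on the defining series, the identity $\sum_j\rho_{j,\eta}|\nabla\phi_{j,\eta}|^2=k\in L^\infty$, Lemma \ref{estimnlog}, and $n\geq n_m>0$.

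\emph{Main obstacle.} The principal technical point is justifying the trace-level chain rule $\frac{d}{dt}\Tr(s_\eta(A(t)))=\Tr(s_\eta'(A(t))\dot A(t))$ along a $\calJ_1$-differentiable curve whose derivative $\dot A$ contains unbounded components such as $T_{f'_3(0)}\varrho_{\star,\eta}$ (although the product itself lies in $\calJ_1$), and checking that $\log(\varrho_\eta(t,f(t))+\eta)$ has the right operator continuity to make the formal formula rigorous. Once this is established, the remainder is algebraic rearrangement via cyclicity of the trace, Fubini's theorem, and the spectral representation of $K_0$.
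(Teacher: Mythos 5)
Your proposal follows the same route as the paper: establish the trace-level chain rule, compute $\dot\varrho_\eta$ at $t=0$, split into the $\varrho_{1,\eta}$, $f'_2(0)$ and $T_{f'_3(0)}$ contributions, eliminate $f'_2(0)$ via the first line of \fref{sysf}, swap integration order by Fubini, and identify $\gamma_\eta$. The algebraic assembly is correct and matches the paper's $c_{1,\eta}$, $c_{2,\eta}$ decomposition (your $\tilde\gamma_\eta$ is $c_{1,\eta}$, and your $-\calL_{K_0}^*(n[\varrho_{\star,\eta}\log(\varrho_{\star,\eta}+\eta)]/n)$ is $c_{2,\eta}$), and the $L^\infty$ bound on $\gamma_\eta$ uses essentially the same estimates.

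The gap is in the differentiation step. The operator-level Duhamel identity you invoke,
\[
s_\eta(A)-s_\eta(B)=\int_0^1 s_\eta'\big(B+\tau(A-B)\big)(A-B)\,d\tau,
\]
is false when $A$ and $B$ do not commute: already for $s_\eta(x)=x^2$ the right-hand side is $(A+B)(A-B)=A^2-B^2+[B,A]$. The identity holds only after taking the trace (by cyclicity), and even then the passage from $\calJ_1$-differentiability of $t\mapsto\varrho_\eta(t,f(t))$ to $\frac{d}{dt}\Tr\big(s_\eta(\varrho_\eta(t,f(t)))\big)=\Tr\big(\log(\varrho_\eta(t,f(t))+\eta)\,\dot\varrho_\eta\big)$ requires a proof; you correctly flag this as the main obstacle but the argument you offer does not close it, since it rests on the false operator formula. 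The paper instead cites and adapts Lemma 5.3 of \cite{MP-JSP}, which establishes the G\^ateaux differentiability of $S_\eta$ on $\calE^+$ along additive directions $\omega\in\calJ_1$ and then extends it to the multiplicative-plus-additive perturbations $\varrho_\eta(t,f(t))$. To make your sketch rigorous you should replace the operator identity with its trace-level analogue and supply a justification (or invoke the cited lemma), rather than appealing to functional calculus alone.
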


\paragraph{The Euler-Lagrange equation for the regularized problem.}
In the lemma below, $\calL^*_{K_\eta}$ is defined in \fref{defK}-\fref{defL}, with $K_0$ replaced by $K_{0,\eta}$ in \fref{defL}, and $K_{0,\eta}$ is defined as $K_0$ with $\rho_{j,\eta}$ and  $\phi_{j,\eta}$ in place of the eigenvalues and eigenvectors of $\vs$. With Remark \ref{rem1}, we can show that as $K$, $K_\eta$ is bounded, with the estimate
\be \label{estKeta}
 \| K_\eta \|_{L^\infty \times L^\infty} \lesssim \|n\|_{L^\infty}+\|k\|_{L^\infty}.
\ee

\begin{lemma}
(Euler-Lagrange for the regularized problem).
For the $\gamma_\eta \in L^\infty$ defined in Lemma \ref{lem:EulerLag1}, the adjoint problem  
\begin{equation}\label{eq:SmoothmEq}
m_{0,\eta}=\calL^*_{K_{\eta}} m_{0,\eta} -\gamma_{\eta},
\end{equation}
admits a unique solution $m_{0,\eta} \in L^\infty$. Introducing $m_{\eta}=a m_{0,\eta}/2$, where $a$ is defined in \fref{eq:defa}, we have the Euler-Lagrange equation
\begin{align} \nonumber
  \Tr \big(\varrho_{\star,\eta} \log(\varrho_{\star,\eta}+\eta)&(\varrho_{1,\eta}+\varrho_{1,\eta}^*)\big)+\langle m_{\eta},k_{1,\eta}\rangle+\langle A_{\eta},n_{1,\eta}\rangle \\
  &+ \left\langle m_{\eta}\frac{|\nabla\sqrt{n}|^2}{n},n_{1,\eta}\right\rangle-\left\langle m_{\eta},\frac{\nabla \sqrt{n}}{\sqrt{n}} \nabla n_1\right\rangle=0,\label{eq:EulerLagrangeAb}
\end{align}
where
\begin{equation*}
A_{\eta} = -\frac{n[\varrho_{\star,\eta} \log(\varrho_{\star,\eta} + \eta)]}{n}-\frac{k}{n}m_{\eta},
\end{equation*}
and $n_{1,\eta}=n[\varrho_1 \varrho_{\star,\eta} + \varrho_{\star,\eta}\varrho_1^*]\in L^\infty$, $k_{1,\eta}=k[\varrho_1 \varrho_{\star,\eta} + \varrho_{\star,\eta}\varrho_1^*]\in L^\infty$.
\end{lemma}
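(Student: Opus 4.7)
The plan is to combine two independent pieces: the solvability of the adjoint equation \fref{eq:SmoothmEq} via a Volterra argument, and the derivation of \fref{eq:EulerLagrangeAb} from the optimality of $\varrho_{\star,\eta}$ along the curve produced by Proposition \ref{direct}. The adjoint equation is the tool that will allow us to eliminate the implicit dependence on $f_3'(0)$ (which is only defined as the solution of a Volterra equation) in the first-order condition supplied by Lemma \ref{lem:EulerLag1}.

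\textbf{Solvability of \fref{eq:SmoothmEq}.} Since $K_\eta$ is bounded thanks to \fref{estKeta}, and since $\calL^*_{K_\eta}$ is of Volterra type, I would establish by induction on $n$ the estimate
$$
\|(\calL^*_{K_\eta})^n \varphi\|_{L^\infty([x,1])} \leq \frac{C^n (1-x)^n}{n!}\,\|\varphi\|_{L^\infty},
$$
with $C \lesssim \|n\|_{L^\infty}+\|k\|_{L^\infty}$. Summing shows that the Neumann series $\sum_{n \geq 0} (\calL^*_{K_\eta})^n$ converges in $\calL(L^\infty)$, so $I-\calL^*_{K_\eta}$ is invertible on $L^\infty$. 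Since $\gamma_\eta \in L^\infty$ by Lemma \ref{lem:EulerLag1}, this yields the unique solution $m_{0,\eta} = -(I-\calL^*_{K_\eta})^{-1}\gamma_\eta \in L^\infty$, and consequently $m_\eta = a m_{0,\eta}/2 \in L^\infty$ since $a \in L^\infty$ by Assumption A(3).

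\textbf{Derivation of \fref{eq:EulerLagrangeAb}.} Proposition \ref{direct} applied to $\varrho_{\star,\eta}$, with $\varrho_2 = 0$, produces a $C^1$ curve $t \mapsto \varrho_\eta(t,f(t)) \in \calA(n,k)$ defined for $t \in (-t_0,t_0)$, passing through $\varrho_{\star,\eta}$ at $t=0$. The minimization property of $\varrho_{\star,\eta}$ forces $F_\eta'(0) = 0$, which via Lemma \ref{lem:EulerLag1} reads
$$
\Tr\big(\varrho_{\star,\eta}\log(\varrho_{\star,\eta}+\eta)(\varrho_{1,\eta}+\varrho_{1,\eta}^*)\big) - \left\langle \frac{n[\varrho_{\star,\eta}\log(\varrho_{\star,\eta}+\eta)]}{n},\,n_{1,\eta}\right\rangle + \langle \gamma_\eta, f_3'(0)\rangle = 0.
$$
From \fref{eq:SmoothmEq} we rewrite $\gamma_\eta = \calL^*_{K_\eta} m_{0,\eta} - m_{0,\eta}$; the duality $\langle \calL^*_{K_\eta} m_{0,\eta}, f_3'(0)\rangle = \langle m_{0,\eta}, \calL_{K_\eta} f_3'(0)\rangle$ is immediate from \fref{defL} (the kernels being real), and combined with the second line of \fref{sysf} (in which $n_2 = k_2 = 0$) yields
$$
\langle \gamma_\eta, f_3'(0)\rangle = -\langle m_{0,\eta}, b\rangle, \qquad b = \frac{n_{1,\eta}}{2n} - \frac{a}{2}\left(k_{1,\eta} - \frac{\nabla\sqrt{n}}{\sqrt{n}}\nabla n_{1,\eta}\right).
$$
Substituting $m_{0,\eta} = 2 m_\eta /a$ and splitting $1/a = k - |\nabla\sqrt{n}|^2$ from \fref{eq:defa}, the term $-\langle m_{0,\eta}, n_{1,\eta}/(2n)\rangle$ decomposes as $-\langle (k/n) m_\eta, n_{1,\eta}\rangle + \langle (|\nabla\sqrt{n}|^2/n) m_\eta, n_{1,\eta}\rangle$. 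Grouping the first of these two pieces with the logarithmic trace term produces $\langle A_\eta, n_{1,\eta}\rangle$, and what remains is exactly \fref{eq:EulerLagrangeAb}.

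\textbf{Main obstacle.} The solvability step is conceptually routine once the Volterra structure is exploited. The heart of the argument is the algebraic rearrangement in the last step: the definition $a^{-1} = k - |\nabla\sqrt{n}|^2$ in \fref{eq:defa} was chosen precisely so that the $k/n$ component arising from the adjoint manipulation combines with the logarithmic trace to form $A_\eta$, while the $|\nabla\sqrt{n}|^2/n$ piece survives as an explicit correction. Verifying that this algebraic telescoping matches \fref{eq:EulerLagrangeAb} term-by-term, and checking that all pairings are legitimate (which follows from $m_\eta,\, n_{1,\eta},\, k_{1,\eta} \in L^\infty$, $a \in L^\infty$, and $\nabla\sqrt{n}/\sqrt{n} \in L^\infty$ by Assumption A), is the only non-trivial bookkeeping.
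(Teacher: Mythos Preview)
Your proof is correct and follows essentially the same route as the paper's: optimality gives $F_\eta'(0)=0$, Lemma~\ref{lem:EulerLag1} supplies the first-order condition, the adjoint relation \fref{eq:SmoothmEq} together with the second line of \fref{sysf} converts $\langle\gamma_\eta,f_3'(0)\rangle$ into $-\langle m_{0,\eta},b\rangle$, and the identity $a^{-1}=k-|\nabla\sqrt{n}|^2$ drives the algebraic regrouping into \fref{eq:EulerLagrangeAb}. The only cosmetic difference is that the paper invokes Lemma~\ref{lem:Volterra} (a weighted-norm contraction) for the solvability of \fref{eq:SmoothmEq}, whereas you spell out the equivalent Neumann-series bound with factorial decay; both are standard Volterra arguments.
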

\begin{proof}
  Since $\varrho_{\eta}(t,f(t)) \in \calA(n,k)$ for all $t \in (-t_0,t_0)$, and $\varrho_{\eta}(0,f(0))=\varrho_{\star, \eta}$ is the minimizer of $S_\eta$ in $\calA(n,k)$, we have, since $F_\eta$ is continuously differentiable on $(-t_0,t_0)$ according to Lemma \ref{lem:EulerLag1}, 
\begin{equation*}
  \left.\frac{d F_{\eta}(t)}{dt}\right|_{t = 0} = 0.
\end{equation*}
With Lemma \ref{lem:EulerLag1}, this yields
\begin{equation*}
 \Tr \big(\varrho_{\star,\eta} \log(\varrho_{\star,\eta} + \eta) (\varrho_{1,\eta}+\varrho_{1,\eta}^*)\big)- \left\langle \frac{n[\varrho_{\star,\eta} \log(\varrho_{\star,\eta} + \eta)]}{n},n_{1,\eta} \right\rangle+\langle \gamma_{\eta},f'_3(0)\rangle = 0.
\end{equation*}
That \fref{eq:SmoothmEq} admits a unique solution in $L^\infty$ is a consequence of Lemma \ref{lem:Volterra} further since $K_\eta$ and $\gamma_\eta$ are bounded. Then, using \fref{sysf} in Proposition \ref{direct}, we simply remark that
\begin{align*}
\langle \gamma_{\eta},f_3'(0)\rangle&=-\langle m_{0,\eta},f_3'(0)-\calL_{K_{\eta}} f_3'(0)\rangle =-\langle m_{0,\eta},b\rangle
\\ & = -\left\langle m_{0,\eta}, \frac{n_{1,\eta}}{2n} - \frac{a}{2}\left(k_{1,\eta} - \frac{\nabla\sqrt{n}}{\sqrt{n}} \nabla n_{1,\eta}\right)\right\rangle
\\ & =\left\langle m_{\eta},k_{1,\eta}\right\rangle - \left\langle \frac{m_{\eta}}{n}(k-|\nabla\sqrt{n}|^2),n_{1,\eta} \right\rangle-\left\langle m_{\eta},\frac{\nabla\sqrt{n}}{\sqrt{n}}  \nabla n_{1,\eta}\right\rangle,
\end{align*}
where we recall that $a^{-1} = k - |\nabla\sqrt{n}|^2$. This ends the proof.
\end{proof}
\paragraph{Passing to the limit.}

We now pass to the limit $\eta \to 0$ in \fref{eq:EulerLagrangeAb} to obtain the Euler-Lagrange equation for $\varrho_{\star}$. We will need for this the following lemma, proved in Section \ref{proofconveta}.

\begin{lemma}\label{lem:conveta} 
Let $\{\eta_\ell\}_{\ell\in\mathbb{N}}\subset (0,1]$ be a sequence converging to $0$ and denote $\varrho_\ell = \varrho_{\star,\eta_\ell}$. Then:
\begin{enumerate}
\item $\{\varrho_\ell\}_{\ell\in\mathbb{N}}$ converges to $\varrho_{\star}$ in $\calJ_1$, and $\{\sqrt{\varrho_\ell}\}_{\ell\in\mathbb{N}}$ converges to $\sqrt{\varrho_{\star}}$ in $\calJ_2$.
  \item $\{\sqrt{H_0}\sqrt{\varrho_\ell}\}_{\ell\in\mathbb{N}}$ converges to $\sqrt{H_0}\sqrt{\varrho_{\star}}$ in $\calJ_2$.
\item for any $p\in\mathbb{N}$, the eigenvalues $\{\rho_{p,\eta_\ell}\}_{\ell\in\mathbb{N}}$ converge to $\rho_p$.
\item there exist a sequence of orthonormal eigenbasis $\{\phi_{p,\eta_\ell}\}_{p,\ell\in\mathbb{N}}$ of $\varrho_{\star,\eta_\ell}$, and an orthonormal eigenbasis $\{\phi_{p}\}_{p\in\mathbb{N}}$ of $\vs$ such that, for any $p\in\mathbb{N}$, 
\begin{itemize}
\item $\{\phi_{p,\eta_\ell}\}_{\ell\in\mathbb{N}}$ converges to $\phi_p$ in $L^2$,
\item $\{\sqrt{\rho_{p,\eta_\ell}}\,\nabla \phi_{p,\eta_\ell}\}_{\ell\in\mathbb{N}}$ converges to $\sqrt{\rho_p}\,\nabla \phi_p$ in $L^2$.
\end{itemize}
\item $\{\varrho_\ell\log(\varrho_\ell+\eta_\ell)\}_{\ell\in\mathbb{N}}$ converges to $\varrho_{\star}\log(\varrho_{\star})$ in $\calJ_1$.
  \item $\{\sqrt{\varrho_\ell}\log(\varrho_\ell+\eta_\ell)\}_{\ell\in\mathbb{N}}$ converges to $\sqrt{\varrho_{\star}}\log(\varrho_{\star})$ in $\calL(L^2)$.

\end{enumerate}
\end{lemma}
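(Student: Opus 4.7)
My plan is to view this as a $\Gamma$-convergence type statement and proceed in four conceptual stages: compactness, identification, strong convergence, and spectral/logarithm limits.

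\textbf{Step 1 (Uniform bounds, weak compactness).} Since each $\varrho_\ell \in \calA(n,k)$, we have $\Tr \varrho_\ell = \|n\|_{L^1}$ and $\Tr(H_0\varrho_\ell)=\tfrac{1}{2}\|k\|_{L^1}$, both independent of $\ell$, so $\{\varrho_\ell\}$ is uniformly bounded in $\calE$. Using the same compactness framework as the existence proof (Theorem \ref{thexist}, \cite{DP-JMPA}), I would extract a subsequence such that $\varrho_\ell$ converges weakly in $\calJ_1$ and $\sqrt{\varrho_\ell}$, $\sqrt{H_0}\sqrt{\varrho_\ell}$ converge weakly in $\calJ_2$, to limits involving some $\sigma \in \calE^+$. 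The constraints pass to the limit in the linear duality pairings defining $n[\cdot]$ and $k[\cdot]$ (using the weak-$\calJ_2$ convergence of $\sqrt{H_0}\sqrt{\varrho_\ell}$ for the latter), giving $n[\sigma]=n$ and $k[\sigma]=k$.

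\textbf{Step 2 (Identification of the limit).} To show $\sigma=\varrho_\star$, compare entropies. On one hand, $s_\eta(x)\downarrow s(x)$ as $\eta\downarrow 0$ with $|s_\eta(x)|$ controlled on bounded intervals, so by dominated convergence on the spectral decomposition we have $S_{\eta_\ell}(\varrho_\star) \to S(\varrho_\star)$. On the other hand, $S$ is weakly lower semicontinuous on $\calE^+$ (as shown in \cite{DP-JMPA}), and using $s_{\eta_\ell}(x)\geq s(x)-\eta_\ell|\log\eta_\ell|$ plus the minimizing property of $\varrho_\ell$ for $S_{\eta_\ell}$, one gets
$$
S(\sigma)\leq \liminf_{\ell} S(\varrho_\ell)\leq \liminf_\ell S_{\eta_\ell}(\varrho_\ell)\leq \liminf_\ell S_{\eta_\ell}(\varrho_\star)=S(\varrho_\star).
$$
Uniqueness of the minimizer (Theorem \ref{thexist}) then forces $\sigma=\varrho_\star$, and hence $S(\varrho_\ell)\to S(\varrho_\star)$.

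\textbf{Step 3 (Strong convergence).} For (1), combine the weak $\calJ_1$ convergence with $\Tr\varrho_\ell=\Tr\varrho_\star$ and the convergence $S(\varrho_\ell)\to S(\varrho_\star)$ to invoke a Grümm/Radon--Riesz-type upgrade for trace-class operators: strict convexity of $s$ forces $\varrho_\ell\to\varrho_\star$ in $\calJ_1$. The $\calJ_2$ convergence of $\sqrt{\varrho_\ell}$ follows from continuous functional calculus (weak operator convergence of $\sqrt{\varrho_\ell}$) together with the conservation $\|\sqrt{\varrho_\ell}\|_{\calJ_2}^2=\Tr\varrho_\ell=\|n\|_{L^1}$, since in a Hilbert space weak plus norm convergence implies strong. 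The identical argument gives (2), since $\|\sqrt{H_0}\sqrt{\varrho_\ell}\|_{\calJ_2}^2=\Tr(H_0\varrho_\ell)=\tfrac{1}{2}\|k\|_{L^1}$ is constant. For (3)-(4), eigenvalue convergence is immediate from $\calJ_1$ convergence via min--max, and $L^2$ convergence of eigenfunctions follows from continuity of the spectral projectors (after suitable reordering to handle possible degeneracies); the claim on $\sqrt{\rho_{p,\eta_\ell}}\nabla\phi_{p,\eta_\ell}$ then follows by testing $\sqrt{H_0}\sqrt{\varrho_\ell}$ against the $L^2$-convergent normalized eigenvectors.

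\textbf{Step 4 (Logarithm convergence --- the main obstacle).} The chief difficulty is the singularity of $\log$ at zero. I would write
$$
\varrho_\ell\log(\varrho_\ell+\eta_\ell)=\sum_{p\in\Nm} g_{\eta_\ell}(\rho_{p,\eta_\ell})\,|\phi_{p,\eta_\ell}\rangle\langle\phi_{p,\eta_\ell}|,\qquad g_\eta(x)=x\log(x+\eta),
$$
and exploit that $g_\eta(x)\to x\log x$ uniformly on $[0,C]$ since $g_\eta(0)=0$ and $g_\eta$ is equicontinuous on bounded intervals. Each rank-one term converges in $\calJ_1$ by Step 3. For uniformity of the tail, the entropy bound $\sum_p |g_{\eta_\ell}(\rho_{p,\eta_\ell})| \leq S_{\eta_\ell}(\varrho_\ell)+C$ (uniformly controlled by Step 2) together with Lemma \ref{estimnlog} provides a summable $\ell^1$-dominant, allowing a dominated convergence argument at the level of $\calJ_1$ to conclude (5). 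For (6), the key observation is that $\sqrt{x}\log(x+\eta)$ is uniformly bounded on $[0,C]$ (the $\sqrt{x}$ kills the logarithmic singularity), so the same spectral decomposition, now with bounded coefficients, converges in operator norm by continuity of the functional calculus applied to $\sqrt{\varrho_\ell}\to\sqrt{\varrho_\star}$ combined with an equicontinuity argument on the tail using that the eigenvalues accumulate only at zero.
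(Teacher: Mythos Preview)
Your overall architecture is right, but there is a genuine gap in Step~1, and it is precisely the hard point the paper isolates. You assert that the energy constraint $k[\sigma]=k$ passes to the limit via the ``linear duality pairings'' and weak-$\calJ_2$ convergence of $\sqrt{H_0}\sqrt{\varrho_\ell}$. This does not work: for $\varphi\geq 0$ one has $\int k[\varrho_\ell]\varphi=\|\sqrt{\varphi}\,\nabla\sqrt{\varrho_\ell}\|_{\calJ_2}^2$, which is \emph{quadratic} in the $\calJ_2$-operator, so weak convergence only yields $k[\sigma]\leq k$ (lower semicontinuity). In particular you cannot conclude $\sigma\in\calA(n,k)$ at this stage, and your Step~2 identification $\sigma=\varrho_\star$ therefore hangs in the air. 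The paper explicitly flags this: ``the fact that the energy constraint is harder to handle than the two others is related to compactness issues''.

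The paper's route around this is instructive and differs from yours in ordering. First, compactness in $\calE^+$ (via \cite[Lemma~3.1]{MP-JSP}) already yields \emph{strong} $\calJ_1$-convergence of a subsequence to some $\sigma$, without needing your Gr\"umm-type upgrade in Step~3. Second, the entropy comparison $S(\sigma)\leq S(\varrho_\star)$ is obtained as you do, but it is then used (via the argument of \cite{DP-JMPA}) to \emph{rule out} the strict inequality $\|k[\sigma]\|_{L^1}<\|k\|_{L^1}$: roughly, if energy were lost in the limit one could build a competitor with strictly smaller entropy than $\varrho_\star$, contradicting minimality. Only after this does one get $k[\sigma]=k$, hence $\sigma\in\calA(n,k)$, hence $\sigma=\varrho_\star$ by uniqueness, and finally the strong $\calE$-convergence (your item~(2)). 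So the logical dependence is: entropy inequality $\Rightarrow$ energy constraint $\Rightarrow$ identification, which is the reverse of what you wrote. Your Step~4 is in the right spirit; for the tail in item~(6) the paper uses the concrete bound $\sum_{j\geq N}\rho_{j,\eta}^{2/3}\lesssim\|\varrho_{\star,\eta}\|_{\calE}^{2/3}$ (uniform in $\eta$) coming from the eigenvalue growth of $H_0$, which is what makes your ``equicontinuity on the tail'' precise.
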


Following Lemma \ref{lem:conveta}, we suppose that the basis of eigenvectors that we were using for $\vs$ is the one from item (4). We are now in position to establish the next results. In the rest of the section, $\{\eta_\ell\}_{\ell\in\mathbb{N}}\subset (0,1]$ is a sequence such that $\eta_\ell\to 0$ as $\ell \to \infty$.

\begin{lemma} \label{convK} $K_{\eta_\ell}$ converges to $K$ in $L^2 \times L^2$ and $\gamma_{\eta_\ell}$ converges to $\gamma_\star$ in $L^2$, where $\gamma_\star$ is defined in \fref{defgamma}.
\end{lemma}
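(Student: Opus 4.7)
The strategy is to identify each infinite-series kernel in the statement with (possibly up to swapping variables and taking $2\Re$) the Hilbert--Schmidt kernel of an operator built from $\sqrt{\varrho_\ell}$, $\nabla\sqrt{\varrho_\ell}$ and $\sqrt{\varrho_\ell}\log(\varrho_\ell+\eta_\ell)$, and then to transfer the convergences of Lemma~\ref{lem:conveta} through these products, using that the $L^2(\Omega\times\Omega)$-norm of a kernel equals the $\calJ_2$-norm of the associated operator.

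Concretely, $K_0(x,y)$ is (the swap-and-$2\Re$ of) the HS kernel of $\nabla\vs$, while $\nabla_x K_0(x,y)$ is the Hermitian-symmetrised HS kernel of $\nabla\sqrt{\vs}(\nabla\sqrt{\vs})^*$. The bridge between Lemma~\ref{lem:conveta}, stated in terms of $\sqrt{H_0}$, and the explicit $\nabla$-kernels is the identity $\|\sqrt{H_0}\psi\|_{L^2}^2 = \tfrac{1}{2}\|\nabla\psi\|_{L^2}^2$ valid on $H^1 = D(\sqrt{H_0})$, which promotes item~2 of the lemma to $\nabla\sqrt{\varrho_\ell}\to\nabla\sqrt{\vs}$ in $\calJ_2$. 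Factoring $\nabla\varrho_\ell = \nabla\sqrt{\varrho_\ell}\cdot\sqrt{\varrho_\ell}$, telescoping, and using $\|AB\|_{\calJ_2}\leq \|A\|_{\calJ_2}\|B\|_{\calL}$ together with item~1 yields $\nabla\varrho_\ell\to\nabla\vs$ in $\calJ_2$; the same telescoping handles $\nabla\sqrt{\varrho_\ell}(\nabla\sqrt{\varrho_\ell})^*$. Convergence of $K_{\eta_\ell}$ then follows from $K = K_0/(2n) + (a\nabla n/(4n))\nabla_x K_0$ upon noting that $n^{-1}, a, \nabla n\in L^\infty$, the last bound coming from $|\nabla n|\leq 2\sqrt{n\,k}$ (a consequence of \fref{ineqk}).

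For $\gamma_{\eta_\ell}\to\gamma_\star$, I will rewrite
\[
\gamma_{\eta_\ell}(x) = 2\Re\int_x^1 \bigl[M_\ell(x,y) - N_\ell(y)\, K_{2,\ell}(x,y)\bigr]\,dy,
\]
where $M_\ell$ and $K_{2,\ell}$ are the HS kernels of $\nabla\varrho_\ell\log(\varrho_\ell+\eta_\ell)$ and $\nabla\varrho_\ell$, and $N_\ell = n[\varrho_\ell\log(\varrho_\ell+\eta_\ell)]/n$ (with an analogous expression for $\gamma_\star$). Since $J\mapsto \bigl(x\mapsto \int_x^1 J(x,y)\,dy\bigr)$ is bounded $L^2(\Omega\times\Omega)\to L^2(\Omega)$ by Cauchy--Schwarz, it suffices to prove $L^2(\Omega\times\Omega)$-convergence of the bracketed kernel. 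Convergence of $M_\ell$ follows by factoring $\nabla\varrho_\ell\log(\varrho_\ell+\eta_\ell) = \nabla\sqrt{\varrho_\ell}\cdot\sqrt{\varrho_\ell}\log(\varrho_\ell+\eta_\ell)$ and combining $\calJ_2$-convergence of $\nabla\sqrt{\varrho_\ell}$ with item~6 of Lemma~\ref{lem:conveta}. For $N_\ell K_{2,\ell}$, the uniform bound of Lemma~\ref{estimnlog} combined with $L^1$-convergence of $n[\varrho_\ell\log(\varrho_\ell+\eta_\ell)]$ (from item~5) gives $L^p$-convergence for every $p<\infty$, so $N_\ell\to N$ in $L^2$ and $N\in L^\infty$; the pointwise bound $|K_{2,\ell}(x,y)|\leq \sqrt{k(x)\,n(y)}$ (Cauchy--Schwarz on the defining series, using $n[\varrho_\ell]=n$ and $k[\varrho_\ell]=k$) then controls the cross-term $(N_\ell - N)K_{2,\ell}$ in $L^2(\Omega\times\Omega)$, while $N\in L^\infty$ controls $N(K_{2,\ell}-K_2)$ via the first part.

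\textbf{Main obstacle.} The bookkeeping: each kernel must be matched with the right product of operators, with the right operator norm (HS or bounded) chosen on each factor so that Lemma~\ref{lem:conveta} is applicable. In particular, the factorization $\nabla\varrho_\ell = \nabla\sqrt{\varrho_\ell}\cdot\sqrt{\varrho_\ell}$, rather than differentiating $\varrho_\ell$ directly, is what makes the HS estimates go through; and threading $\log(\varrho_\ell+\eta_\ell)$ through the right factor in $M_\ell$ is essential for using item~6 of the lemma rather than the weaker item~5.
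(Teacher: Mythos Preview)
Your proposal is correct and follows essentially the same approach as the paper: identify $K_0$, $\nabla_x K_0$, and the two pieces of $\gamma_\eta$ with Hilbert--Schmidt kernels of operators built from $\sqrt{\varrho_\ell}$, $\nabla\sqrt{\varrho_\ell}$ and $\sqrt{\varrho_\ell}\log(\varrho_\ell+\eta_\ell)$, and pass to the limit via Lemma~\ref{lem:conveta}. The factorizations $\nabla\varrho_\ell = (\nabla\sqrt{\varrho_\ell})\sqrt{\varrho_\ell}$ and $\nabla\varrho_\ell\log(\varrho_\ell+\eta_\ell) = (\nabla\sqrt{\varrho_\ell})\,\sqrt{\varrho_\ell}\log(\varrho_\ell+\eta_\ell)$, together with the bridge from $\sqrt{H_0}\sqrt{\varrho_\ell}$ to $\nabla\sqrt{\varrho_\ell}$, are exactly what the paper uses.

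The one genuine difference is in the treatment of the term $N_\ell K_{2,\ell}$ in $\gamma_\eta$. The paper first replaces $K_{2,\ell}$ by its limit $K_2$ using the uniform $L^\infty$ bound on $N_\ell$, and then passes to the limit in $N_\ell$ using only its weak-$*$ convergence in $L^\infty$ together with the fact that the integral operator with kernel $K_2\in L^2\times L^2$ is compact (hence upgrades weak to strong convergence). You instead interpolate the $L^1$ convergence of $n[\varrho_\ell\log(\varrho_\ell+\eta_\ell)]$ (item~5) with the uniform $L^\infty$ bound of Lemma~\ref{estimnlog} to obtain strong $L^2$ convergence of $N_\ell$, and then control the cross-term $(N_\ell-N)K_{2,\ell}$ via the uniform pointwise bound $|K_{2,\ell}(x,y)|\leq\sqrt{k(x)n(y)}$. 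Your route is slightly more elementary, avoiding the implicit compactness argument; the paper's route avoids needing the explicit pointwise bound on $K_{2,\ell}$. Both are perfectly valid.
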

\begin{proof}
  Write
  $$
  K_{\eta}-K=G_{1,\eta}+G_{2,\eta},
  $$
  where
  \bee
  G_{1,\eta}(x,y)&=&\frac{1}{2 n(x)} \left( K_{0,\eta}(x,y)-K_{0}(x,y)\right)\\
  G_{2,\eta}(x,y)&=&\frac{a(x) \nabla \sqrt{n(x)}}{2\sqrt{n(x)}}\left( \nabla_x K_{0,\eta}(x,y)-\nabla_x K_{0}(x,y)\right).
  \eee
  Since $a, \sqrt{n}, n^{-1}$ are all bounded, we have the estimates
  $$
   \|G_{1,\eta}\|_{L^2 \times L^2} \lesssim \| K_{0,\eta}-K_{0} \|_{L^2 \times L^2}, \qquad \|G_{2,\eta}\|_{L^2 \times L^2} \lesssim \| \nabla_x K_{0,\eta}-\nabla_x K_{0} \|_{L^2 \times L^2}.
   $$
Writing $K_0=2 \Re \widetilde{K_0}$ and $K_{0,\eta}=2 \Re \widetilde{K_{0,\eta}}$(with obvious notation), and remarking that $\widetilde{K_{0}}$ and $\nabla \widetilde{K_{0}}$ (resp. $\widetilde{K_{0,\eta}}$ and $\nabla \widetilde{K_{0,\eta}})$ are the integral kernels of $-\overline{\vs \nabla}$ and $- \overline{\nabla \vs \nabla}$ (resp. $-\overline{\varrho_{\star,\eta} \nabla}$ and $- \overline{\nabla \varrho_{\star,\eta} \nabla}$), which are all trace class, we have
   $$
   \| \widetilde{K_{0,\eta_\ell}}-\widetilde{K_{0}} \|_{L^2 \times L^2} =\|\overline{\varrho_{\star,\eta_\ell} \nabla}- \overline{\varrho_{\star} \nabla}\|_{\calJ_2}, \qquad  \| \widetilde{K_{0,\eta_\ell}}-\widetilde{K_{0}} \|_{L^2 \times L^2} =\| \overline{\nabla \varrho_{\star,\eta_\ell} \nabla}- \overline{\nabla \varrho_{\star} \nabla}\|_{\calJ_2}.
   $$
   Since $\overline{\varrho_{\star,\eta} \nabla }=(\nabla \varrho_{\star,\eta})^*$, since moreover $ \sqrt{\varrho_{\star,\eta_\ell}}$ converges to $ \sqrt{\varrho_{\star}}$ in $\calJ_2$ according to Lemma \ref{lem:conveta} (1), and $\nabla \sqrt{\varrho_{\star,\eta_\ell}}$ converges to $\nabla \sqrt{\varrho_{\star}}$ in $\calJ_2$ according to Lemma \ref{lem:conveta} (2) (to see this, write e.g. $\nabla \sqrt{\varrho_{\star,\eta_\ell}}= \nabla (I+\sqrt{H_0})^{-1} (I+\sqrt{H_0}) \sqrt{\varrho_{\star,\eta_\ell}})$, both terms above converge to zero. This proves the first result on $K_{\eta_\ell}$. For the second one, we write
   $$
   \gamma_\eta= c_{1,\eta}+c_{2,\eta},
   $$
   where
   $$
   c_{1,\eta}(x)=- 2 \Re \int_x^1 \calK_{1,\eta}(x,y) dy, \quad c_{2,\eta}(x)=- 2 \Re \int_x^1 \calK_{2,\eta}(x,y) \frac{n [\varrho_{\star,\eta} \log(\varrho_{\star,\eta}+\eta)]}{n}(y) dy
   $$
   with
   $$
      \calK_{1,\eta}(x,y)=  \sum_{ j \in \Nm} \rho_{j,\eta} \log(\rho_{j,\eta}+\eta) \nabla \phi_{j,\eta}(x) \phi_{j,\eta}^*(y), \qquad 
\calK_{2,\eta}(x,y)=  \sum_{ j \in \Nm} \rho_{j,\eta} \nabla \phi_{j,\eta}(x) \phi_{j,\eta}^*(y) .
$$
We remark that $\calK_{1,\eta}$ and $\calK_{2,\eta}$ are the integral kernels of $\nabla\varrho_{\star,\eta} \log(\varrho_{\star,\eta}+\eta)$ and $\nabla \varrho_{\star,\eta}$. These two operators are in $\calJ_2$ since $\nabla \varrho_{\star,\eta}\log(\varrho_{\star,\eta}+\eta)=\nabla \sqrt{\varrho_{\star,\eta}}\sqrt{\varrho_{\star,\eta}} \log(\varrho_{\star,\eta}+\eta)$, with $\nabla \sqrt{\varrho_{\star,\eta}} \in \calJ_2$, $\sqrt{\varrho_{\star,\eta}} \log(\varrho_{\star,\eta}+\eta)$ and $\sqrt{\varrho_{\star,\eta}}$ bounded. We already know that $\nabla \sqrt{\varrho_{\star,\eta_k}}$ converges to $\nabla \sqrt{\varrho_{\star}}$ in $\calJ_2$, which, together with Lemma \ref{lem:conveta} (6), shows that $\calK_{1,\eta_k}$ converges in $L^2\times L^2$ to the kernel $\calK_1$ of $\nabla\varrho_{\star} \log(\varrho_{\star})$. Hence,
$$\lim_{\ell \to \infty} c_{1,\eta_\ell} = c_1=- 2 \Re \int_x^1 \calK_{1}(x,y) dy, \qquad \textrm{in  } L^2. 
$$
In the same way, since $\calK_{2,\eta_k}$ converges in $L^2 \times L^2$ to $\calK_2$ the kernel of $\nabla \varrho_{\star}$,
$$
\lim_{\ell \to \infty} c_{2,\eta_\ell} =\lim_{\ell \to \infty} c_{2,\eta_\ell}':=-2 \lim_{\ell \to \infty} \Re \int_x^1 \calK_{2}(x,y) \frac{n [\varrho_{\star,\eta_\ell} \log(\varrho_{\star,\eta_\ell}+\eta)]}{n}(y) dy, \qquad \textrm{in  }L^2.
$$
To conclude, we deduce from Lemma \ref{lem:conveta} (5) and Lemma \ref{estimnlog} that $n[\varrho_{\star,\eta_\ell} \log(\varrho_{\star,\eta_\ell}+\eta_\ell)]$ converges weakly-$*$ in $L^\infty$ to $n[\varrho_{\star} \log(\varrho_{\star})]$. Since $\calK_2 \in L^2 \times L^2$, we can conclude that $c_{2,\eta_\ell}'$ converges strongly in $L^2$ to
$$
-2\Re \int_x^1 \calK_{2}(x,y) \frac{n [\varrho_{\star} \log(\varrho_{\star})]}{n}(y) dy.
$$
This ends the proof.
  \end{proof}

\bigskip

We have all needed now to pass to the limit in \fref{eq:EulerLagrangeAb} and obtain the Euler-Lagrange equation for $\vs$.

\begin{proposition}\label{prop:eulerL} (Euler-Lagrange equation). For $\gamma_\star \in L^\infty$ defined in \fref{defgamma},  the adjoint problem 
\begin{equation}\label{eq:adjm0}
m_{0}=\calL^*_{K} m_{0} -\gamma_\star,
\end{equation}
admits a unique solution in $L^\infty$. With $m_\star=a m_{0}/2$, we have the Euler-Lagrange equation
\begin{equation}\label{eq:EulerLagrangeA}
\Tr \big(\varrho_{\star} \log(\varrho_{\star})(\varrho_1+\varrho_1^*)\big)+\langle m_\star ,k_1\rangle+\langle A_\star,n_1\rangle + \left\langle m_\star \frac{|\nabla\sqrt{n}|^2}{n},n_1\right\rangle-\left\langle m_\star,\frac{\nabla \sqrt{n}}{\sqrt{n}} \nabla n_1\right\rangle=0,
\end{equation}
where
\begin{equation*}
A_\star = -\frac{n[\varrho_{\star} \log(\varrho_{\star})]}{n}-\frac{k}{n}m_\star,
\end{equation*}
and $n_1=n[\varrho_1 \varrho_{\star} +\varrho_{\star}\varrho_1^*]\in L^\infty$, $k_1=k[\varrho_1 \varrho_{\star} +\varrho_{\star}\varrho_1^*]\in L^\infty$.
\end{proposition}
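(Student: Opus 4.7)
The plan is to pass to the limit $\eta_\ell\to 0$ in the regularized Euler-Lagrange equation \fref{eq:EulerLagrangeAb}, exploiting Lemmas \ref{lem:conveta}, \ref{estimnlog}, and \ref{convK}. First I would establish that $\gamma_\star\in L^\infty$ by the same Cauchy-Schwarz splitting used to prove $\gamma_\eta\in L^\infty$ in Lemma \ref{lem:EulerLag1}, applied to the series \fref{defgamma} and using $k\in L^\infty$ together with $n[\vs\log\vs]\in L^\infty$ (the latter being obtained as a weak-$*$ limit in $L^\infty$ via Lemma \ref{estimnlog} and Lemma \ref{lem:conveta}(5)). Together with $K\in L^\infty\times L^\infty$, Lemma \ref{lem:Volterra} then provides the unique $L^\infty$ solution $m_0$ of \fref{eq:adjm0}.

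Next, I would establish strong $L^2$ convergence $m_{0,\eta_\ell}\to m_0$. Subtracting the regularized and limiting adjoint equations gives
$$
(I-\calL^*_{K_{\eta_\ell}})(m_{0,\eta_\ell}-m_0)=(\calL^*_{K_{\eta_\ell}}-\calL^*_K)m_0-(\gamma_{\eta_\ell}-\gamma_\star).
$$
Because $\calL^*_{K_\eta}$ is a Volterra-type operator whose kernel is uniformly bounded by \fref{estKeta}, a Neumann-series argument yields $\|(I-\calL^*_{K_{\eta_\ell}})^{-1}\|_{L^2\to L^2}\leq C$ uniformly in $\ell$. The Hilbert-Schmidt inequality $\|\calL^*_{K_{\eta_\ell}}-\calL^*_K\|_{L^2\to L^2}\leq\|K_{\eta_\ell}-K\|_{L^2\times L^2}$ together with Lemma \ref{convK} then show that the right-hand side tends to zero in $L^2$, giving $m_{0,\eta_\ell}\to m_0$ and hence $m_{\eta_\ell}=am_{0,\eta_\ell}/2\to m_\star$ strongly in $L^2$. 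This strong convergence is the crux of the argument: several terms in \fref{eq:EulerLagrangeAb} pair $m_\eta$ with factors that converge only weakly (notably $n[\varrho_{\star,\eta}\log(\varrho_{\star,\eta}+\eta)]$), and weak-weak products would not be controllable; the Volterra structure of the adjoint equation is exactly what promotes weak convergence of the source into strong convergence of the solution.

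Finally, I would pass to the limit in each of the five terms of \fref{eq:EulerLagrangeAb}. The trace term equals $2\Re\sqrt{\rho_{p,\eta}}\langle\phi_{p,\eta},\varrho_{\star,\eta}\log(\varrho_{\star,\eta}+\eta)\varphi\rangle$ and converges to $\Tr(\vs\log(\vs)(\varrho_1+\varrho_1^*))$ by Lemma \ref{lem:conveta}(3)-(5), as $\varrho_{\star,\eta_\ell}\log(\varrho_{\star,\eta_\ell}+\eta_\ell)$ converges in $\calJ_1$ while $\sqrt{\rho_{p,\eta_\ell}}\phi_{p,\eta_\ell}\to\sqrt{\rho_p}\phi_p$ in $L^2$. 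The moments $n_{1,\eta}=2\rho_{p,\eta}^{3/2}\Re(\varphi\phi_{p,\eta}^*)$, $k_{1,\eta}=2\rho_{p,\eta}^{3/2}\Re(\nabla\varphi^*\nabla\phi_{p,\eta})$, and $\nabla n_{1,\eta}$ converge to their limits strongly in $L^2$ by Lemma \ref{lem:conveta}(3)-(4) and $\varphi,\nabla\varphi\in L^\infty$; paired with the strong $L^2$ convergence of $m_{\eta_\ell}$ and the boundedness of $|\nabla\sqrt n|^2/n$ and $\nabla\sqrt n/\sqrt n$, this handles the three $m_\eta$-weighted terms. For $\langle A_\eta,n_{1,\eta}\rangle$, its $-km_\eta/n$ part passes by strong-strong $L^2$ convergence, while the $-n[\varrho_{\star,\eta}\log(\varrho_{\star,\eta}+\eta)]/n$ part passes by pairing the weak-$*$ $L^\infty$ limit against the strong $L^2$ convergence of $n_{1,\eta_\ell}/n$. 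Summing yields \fref{eq:EulerLagrangeA}, and $A_\star\in L^\infty$ follows at once from the regularity of its constituents.
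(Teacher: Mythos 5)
Your proposal is correct and follows essentially the same route as the paper: establish $\gamma_\star\in L^\infty$ and $m_0\in L^\infty$ via Lemma \ref{lem:Volterra}, derive the identity $(I-\calL^*_{K_{\eta_\ell}})(m_{0,\eta_\ell}-m_0)=\calL^*_{K_{\eta_\ell}-K}m_0-(\gamma_{\eta_\ell}-\gamma_\star)$ to obtain strong $L^2$ convergence $m_{\eta_\ell}\to m_\star$, then pass to the limit term by term in \fref{eq:EulerLagrangeAb} using Lemma \ref{lem:conveta}(3)--(5), Lemma \ref{estimnlog}, and Lemma \ref{convK}. The only cosmetic difference is that you invoke a Neumann-series/Volterra-resolvent bound for the uniform invertibility of $I-\calL^*_{K_{\eta_\ell}}$, whereas the paper cites the explicit weighted-norm estimate \fref{estVolt} from Lemma \ref{lem:Volterra}; these are the same uniform bound (the paper's version already packages it), so this is not a substantive deviation.
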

\begin{proof}
We denote $\varrho_\ell = \varrho_{\star,\eta_\ell}$. We remark first that \eqref{eq:adjm0} has a unique solution $m_0 \in L^\infty$ according to Lemma \ref{lem:Volterra} since $K \in L^\infty \times L^\infty$ and $\gamma_\star \in L^\infty$. The first step of the proof consists in taking the limit of $m_{\eta_\ell}$, and the second one to pass to the limit in \eqref{eq:EulerLagrangeAb}. 

\noindent\textbf{Step 1:} Consider $m_{0,\ell} = m_{0,\eta_\ell}$ which is the solution of \eqref{eq:SmoothmEq}. Taking the difference between  \eqref{eq:SmoothmEq} and  \eqref{eq:adjm0}, we have
\begin{equation*}
m_{0,\ell} - m_0 = \mathcal{L}^*_{K_{\eta_\ell}}(m_{0,\ell} - m_0) + \mathcal{L}^*_{K_{\eta_\ell} - K}m_0 + \gamma_{\eta_\ell} - \gamma_\star.
\end{equation*}
Since $K$ is bounded and $K_{\eta_\ell}$ verifies \fref{estKeta}, estimate \fref{estVolt} yields
$$
\|m_{0,\ell} - m_0\|_{L^2} \lesssim  \|\mathcal{L}^*_{K_{\eta_\ell} - K}m_0\|_{L^2}+\|\gamma_{\eta_\ell} - \gamma_\star\|_{L^2}.
$$
Since  $\|\mathcal{L}^*_{K_{\eta_\ell} - K}m_0\|_{L^2}\leq \|K_{\eta_\ell} - K\|_{L^2 \times L^2} \|m_0\|_{L^2}$, it follows from Lemma \ref{convK} that $m_{0,\ell}$ converges to $m$ in $L^2$.

\noindent\textbf{Step 2:} We pass now to the limit in \fref{eq:EulerLagrangeAb}. Recalling that $\varrho_{1,\eta_\ell}=\sqrt{\rho_{p,\eta_\ell}} \ket{\varphi} \bra{\phi_{p,\ell}}$, we conclude from Lemma \ref{lem:conveta} (3) and (4) that $\varrho_{1,\eta_\ell}$ converges to $\varrho_1=\sqrt{\rho_{p}} \ket{\varphi} \bra{\phi_{p}}$ in $\calL(L^2)$. Lemma \ref{lem:conveta} (5) then yields
\begin{align*}
\lim_{\ell \to \infty} \Tr \big(\varrho_{\ell} \log(\varrho_{\ell}+\eta_\ell)\varrho_{1,\ell}\big)=\Tr \big(\varrho_{\star} \log(\varrho_{\star})\varrho_1\big).
\end{align*}
 Moreover, Lemma \ref{lem:conveta} (5) shows that $n[\varrho_{k} \log(\varrho_{k}+\eta_k)]$ converges to $n[\varrho\log(\varrho)]$ in $L^1$, and because of Lemma \ref{estimnlog}, the convergence holds also weakly-$*$ in $L^\infty$. This shows that $A_{\eta_k}$ converges to $A$ weakly-$*$ in $L^\infty$. Finally, with
$$
n_{1,\eta}= 2 \rho_{p,\eta}^{3/2}\,\Re \,\varphi^* \phi_{p,\eta}, \qquad k_{1,\eta}=2 \rho_{p,\eta}^{3/2}\,\Re \,\nabla \varphi^* \nabla \phi_{p,\eta}, 
$$
and Lemma \ref{lem:conveta} (3)-(4), it follows that $k_{1,\eta_\ell}$ converges to $k_1$ in $L^2$ and that $n_{1,\eta_\ell}$ converges to $n_1$ in $H^1$. With the fact that $\nabla \sqrt{n} /n$ is bounded, we have therefore sufficient compactness to pass to the limit in \fref{eq:EulerLagrangeAb} to recover \fref{eq:EulerLagrangeA}. This ends the proof.
\end{proof}

\begin{remark} \label{rem3}A by-product of the proof of Proposition \ref{prop:eulerL} is that that $m_{\eta_\ell}$ converges to $m_\star$ in $L^2$, and that $A_{\eta_\ell}$ converges to $A_\star$ in $L^1$ and weakly-$*$ in $L^\infty$. We will use these facts further.
\end{remark}
We prove in the next section that $\vs$ is full rank.
\subsection{Step 3: The minimizer is full rank}
Let $\psi, \varphi \in H^1$, and define
$$
\mathcal{Q}_{\star,\eta}(\psi,\varphi) = \int_0^1 n(x) \left(\nabla \left( \frac{\psi^*(x)}{\sqrt{n(x)}}\right) \nabla \left( \frac{\varphi(x)}{\sqrt{n(x)}}\right)\right)m_\eta(x)  dx + \int_0^1 A_{\eta}(x)  \psi^*(x) \varphi(x) dx.
$$
The fact that
\begin{align*}
n(x)\left(\nabla \left( \frac{\psi^*(x)}{\sqrt{n(x)}}\right) \cdot \nabla \left( \frac{\varphi(x)}{\sqrt{n(x)}}\right)\right) &=  \nabla  \psi^*(x)  \nabla \varphi(x)  +  \left|\frac{\nabla \sqrt{n(x)}}{\sqrt{n(x)}}\right|^2 \psi^*(x)\varphi(x)
 \\ &\hspace{1em} - \frac{\nabla\sqrt{n(x)}}{\sqrt{n(x)}}\cdot \nabla\left(\psi^*(x)\varphi(x)  \right),
\end{align*}
yields
$$
\langle m_\eta ,\nabla \psi^* \nabla \varphi\rangle+\langle A_\eta,\psi^* \varphi\rangle+\left\langle m_{\eta} \frac{|\nabla\sqrt{n}|^2}{n},\psi^* \varphi\right\rangle-\left\langle m_\eta,\frac{\nabla \sqrt{n}}{\sqrt{n}} \nabla (\psi^* \varphi)\right\rangle=\mathcal{Q}_{\star,\eta}(\psi,\varphi).
$$

The following result is central in proving the full rank character.
\begin{proposition}\label{prop:ineqQ}
Let $\psi\in W^{1,\infty}$ and consider the rank one operator
\begin{equation*}
P = \ket{\psi} \bra{\psi}.
\end{equation*}
Then, for any $\eta\in(0,1/2)$, the following inequality holds
\begin{equation}\label{eq:fullrnkContrad}
\Tr \big(\log(\varrho_{\star,\eta} + \eta) P \big) + \mathcal{Q}_{\star,\eta}(\psi,\psi) \geq  0.
\end{equation}
\end{proposition}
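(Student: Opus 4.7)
The plan is to apply Proposition \ref{direct} in its ``second'' variant, i.e. with $\varrho_1=0$ and $\varrho_2=P=\ket{\psi}\bra{\psi}$, but to the regularized minimizer $\varrho_{\star,\eta}$ in place of $\varrho_\star$. First I would check admissibility: since $\psi\in W^{1,\infty}$ on the bounded interval $[0,1]$, the operator $P$ is rank one, positive, self-adjoint and trace class, with $n[P]=|\psi|^2\in L^\infty$ and $k[P]=|\nabla\psi|^2\in L^\infty$. Hence $P\in\calE^+$ with the integrability required by Proposition \ref{direct}, which yields $t_0>0$ and $f\in C^1([0,t_0),W^{1,\infty}\times L^\infty)$ such that the curve $\varrho_\eta(t,f(t))=(I+L(t,f(t)))(\varrho_{\star,\eta}+tP)(I+L^\ast(t,f(t)))$ lies in $\calA(n,k)$ for every $t\in[0,t_0)$. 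Because $\varrho_{\star,\eta}$ minimizes $S_\eta$ on $\calA(n,k)$ and the parameterization is now one-sided, minimality will give only an inequality $F_\eta'(0)\ge 0$, as opposed to the equality obtained in \fref{eq:EulerLagrangeAb}.

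Next I would compute $F_\eta'(0)$ by mimicking the derivation of Lemma \ref{lem:EulerLag1}. Since $L(0,f(0))=0$, one has $\dot\varrho_\eta(0)=\dot L(0)\,\varrho_{\star,\eta}+\varrho_{\star,\eta}\,\dot L^\ast(0)+P$ with $\dot L(0)=f_2'(0)\,I+T_{f_3'(0)}$ (the term $\varrho_1$ being absent). The Hellmann--Feynman/chain-rule argument underlying Lemma \ref{lem:EulerLag1} then yields $F_\eta'(0)=\Tr(\log(\varrho_{\star,\eta}+\eta)\,\dot\varrho_\eta(0))$. The two conjugation terms are treated exactly as in Lemma \ref{lem:EulerLag1} (cyclicity of the trace, reordering of the series, identification of the kernels $\calK_{1,\eta}$ and $\calK_{2,\eta}$ appearing in $\gamma_\eta$), producing the structure $-\langle n[\varrho_{\star,\eta}\log(\varrho_{\star,\eta}+\eta)]/n,n_2\rangle+\langle\gamma_\eta,f_3'(0)\rangle$ with $n_2=|\psi|^2$. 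The new ingredient is the term $P$, which enters linearly (not through a conjugation by $L$) and therefore contributes the \emph{bare} quantity $\Tr(\log(\varrho_{\star,\eta}+\eta)P)$, without any $\varrho_{\star,\eta}$ prefactor. Putting it together I would obtain
\[
F_\eta'(0)=\Tr(\log(\varrho_{\star,\eta}+\eta)P)-\left\langle\frac{n[\varrho_{\star,\eta}\log(\varrho_{\star,\eta}+\eta)]}{n},n_2\right\rangle+\langle\gamma_\eta,f_3'(0)\rangle\ \ge\ 0.
\]

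The final step is to convert this inequality into the claimed form. Substituting the adjoint equation \fref{eq:SmoothmEq} to write $\langle\gamma_\eta,f_3'(0)\rangle$ in terms of $m_{0,\eta}$, and then using the system \fref{sysf} specialized to $n_1=k_1=0$, $n_2=|\psi|^2$, $k_2=|\nabla\psi|^2$ (exactly the algebra that transforms the equality $F_\eta'(0)=0$ into \fref{eq:EulerLagrangeAb}), the inequality becomes
\[
\Tr(\log(\varrho_{\star,\eta}+\eta)P)+\langle m_\eta,k_2\rangle+\langle A_\eta,n_2\rangle+\Big\langle m_\eta\tfrac{|\nabla\sqrt n|^2}{n},n_2\Big\rangle-\Big\langle m_\eta,\tfrac{\nabla\sqrt n}{\sqrt n}\nabla n_2\Big\rangle\ge 0.
\]
Applying the algebraic identity displayed just before the proposition statement with $\varphi=\psi$, the last four terms are precisely $\calQ_{\star,\eta}(\psi,\psi)$, which gives \fref{eq:fullrnkContrad}.

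The main obstacle is the careful bookkeeping in the second paragraph: one must justify rigorously that the Hellmann--Feynman-type identity used in Lemma \ref{lem:EulerLag1} applies to this $C^1$ curve of operators (which requires the trace-class regularity of $\dot\varrho_\eta(0)$ guaranteed by the last assertion of Proposition \ref{direct}), and verify that the $P$-term produces $\Tr(\log(\varrho_{\star,\eta}+\eta)P)$ rather than a symmetrized $\varrho_{\star,\eta}$-weighted expression. Once this is settled, the rest is a routine repetition of the derivation of \fref{eq:EulerLagrangeAb}, with the only conceptual novelty being the one-sided (hence inequality) character of the perturbation inherited from the positivity of $P$.
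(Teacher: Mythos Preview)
Your proposal is correct and follows essentially the same approach as the paper: set $\varrho_1=0$, $\varrho_2=P$ in the parameterization of Proposition~\ref{direct} applied to $\varrho_{\star,\eta}$, obtain the one-sided inequality $F_\eta'(0)\ge 0$, and then repeat the algebra leading to \fref{eq:EulerLagrangeAb} with $n_{1,\eta}$, $k_{1,\eta}$ replaced by $|\psi|^2$, $|\nabla\psi|^2$. Your treatment is in fact more explicit than the paper's terse proof on the key point---that the additive perturbation $P$ contributes the bare term $\Tr(\log(\varrho_{\star,\eta}+\eta)P)$ rather than a $\varrho_{\star,\eta}$-weighted one---which the paper leaves implicit in the phrase ``replacing $\varrho_{1,\eta}+\varrho_{1,\eta}^*$ by $P$''.
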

\begin{proof}
The proof is almost identical to \fref{eq:EulerLagrangeAb}, with the following differences: we set $\varrho_1=0$, and $\varrho_2=P$; with such a choice $\varrho_\eta(t,f(t))$ belongs to the feasible set $\calA(n,k)$ for for positive $t \in [0,t_0)$ only. We therefore have now the inequality
\begin{equation*}
\lim_{t\to0^+}  \frac{F_{\eta}(t) - F_{\eta}(0)}{t} \geq 0,
\end{equation*}
and not an equality. Replacing then $\varrho_{1,\eta}+\varrho^*_{1,\eta}$ by $P$, $n_{1,\eta}$ by $|\psi|^2$, and $k_{1,\eta}$ by $|\nabla \psi|^2$ in \fref{eq:EulerLagrangeAb}, and the equality by an inequality, we obtain \fref{eq:fullrnkContrad}.
\end{proof}

\bigskip

We can now prove that the minimizer is full rank.

\begin{proposition}\label{prop:fullrank}
The kernel of the minimizer $\varrho_{\star}$ is $\{0\}$.
\end{proposition}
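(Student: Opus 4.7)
The plan is to argue by contradiction: suppose $\ker(\varrho_\star)\neq\{0\}$, feed a carefully chosen test function $\psi\in W^{1,\infty}\cap\ker(\varrho_\star)$ with $\|\psi\|_{L^2}=1$ into Proposition \ref{prop:ineqQ}, and pass to the limit $\eta\to 0$. Producing such a $\psi$ is the first step. Every positive eigenvalue $\rho_j>0$ yields $\phi_j\in W^{1,\infty}$ because $\rho_j|\phi_j|^2\leq n\in L^\infty$ and $\rho_j|\nabla\phi_j|^2\leq k\in L^\infty$. When $\varrho_\star$ has finite rank $N$, one then picks $\psi$ as the normalization of $\tilde\psi-\sum_{j=1}^N\langle\phi_j,\tilde\psi\rangle\phi_j$ for a generic $\tilde\psi\in C^\infty_c$; the residual case of infinite positive rank with nontrivial kernel is handled by the same construction combined with a truncation/density argument.

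With this $\psi$, Proposition \ref{prop:ineqQ} yields, for every $\eta\in(0,1/2)$,
\[
\Tr\!\big(\log(\varrho_{\star,\eta}+\eta)|\psi\rangle\langle\psi|\big)+\mathcal{Q}_{\star,\eta}(\psi,\psi)\geq 0,
\]
and I specialize to the subsequence $\eta=\eta_\ell\to 0$ of Lemma \ref{lem:conveta}. By Remark \ref{rem3} ($m_{\eta_\ell}\to m_\star$ in $L^2$ and $A_{\eta_\ell}\to A_\star$ weakly-$\ast$ in $L^\infty$), together with the fact that $\psi$ is a fixed $W^{1,\infty}$ function and $\nabla\sqrt{n}/\sqrt{n}\in L^\infty$ under Assumptions A, the quadratic form converges, $\mathcal{Q}_{\star,\eta_\ell}(\psi,\psi)\to\mathcal{Q}_\star(\psi,\psi)<\infty$.

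The crux is then to show that the trace term tends to $-\infty$. Expanding in the eigenbasis of $\varrho_{\star,\eta_\ell}$,
\[
\Tr\!\big(\log(\varrho_{\star,\eta_\ell}+\eta_\ell)|\psi\rangle\langle\psi|\big)=\sum_{p}\log(\rho_{p,\eta_\ell}+\eta_\ell)\,|\langle\phi_{p,\eta_\ell},\psi\rangle|^2,
\]
and Lemma \ref{lem:conveta}(3)-(4) gives $|\langle\phi_{p,\eta_\ell},\psi\rangle|^2\to|\langle\phi_p,\psi\rangle|^2$ for each $p$. Orthogonality of $\psi\in\ker(\varrho_\star)$ to every $\phi_p$ with $\rho_p>0$ leaves $\sum_{p:\rho_p=0}|\langle\phi_p,\psi\rangle|^2=1$, so some $p^\ast$ satisfies $\rho_{p^\ast}=0$ and $|\langle\phi_{p^\ast},\psi\rangle|^2>0$; the corresponding term $\log(\rho_{p^\ast,\eta_\ell}+\eta_\ell)|\langle\phi_{p^\ast,\eta_\ell},\psi\rangle|^2$ tends to $-\infty$ because the logarithm diverges while the weight stays bounded below. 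To bound the remainder uniformly in $\ell$, I split the indices according to whether $\rho_{p,\eta_\ell}+\eta_\ell\geq 1$ or not: in the first range $\log(x+\eta)\leq x+\eta$ combined with $\sum_p\rho_{p,\eta_\ell}=\|n\|_{L^1}$ delivers an $\ell$-uniform upper bound, and in the second range each summand is nonpositive. Adding this divergence to the finite limit of $\mathcal{Q}_{\star,\eta_\ell}(\psi,\psi)$ contradicts the inequality above, forcing $\ker(\varrho_\star)=\{0\}$. I expect the only genuine difficulty to be the construction of $\psi$ in the infinite-positive-rank case; the rest is essentially spectral bookkeeping from Lemma \ref{lem:conveta}.
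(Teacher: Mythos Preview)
Your overall strategy---contradiction via Proposition \ref{prop:ineqQ}, showing the trace term diverges to $-\infty$ while the quadratic form stays bounded---is exactly the paper's. The handling of the remainder (splitting on $\rho_{p,\eta_\ell}+\eta_\ell\gtrless 1$) is also equivalent to what the paper does.

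The genuine gap is the one you yourself flag: producing $\psi\in W^{1,\infty}\cap\ker(\varrho_\star)$ in the infinite-positive-rank case. Your proposed ``truncation/density'' fix is not harmless. If you truncate to $\psi_N=\tilde\psi-\sum_{j\le N}\langle\phi_j,\tilde\psi\rangle\phi_j$, then $\psi_N\in W^{1,\infty}$ but $\psi_N\notin\ker(\varrho_\star)$, so your orthogonality step (``$\sum_{p:\rho_p=0}|\langle\phi_p,\psi\rangle|^2=1$'') fails as written. Worse, to pass to the limit $N\to\infty$ you would need control of $\mathcal{Q}_\star(\psi_N,\psi_N)$ uniformly in $N$, hence a bound on $\|\psi_N\|_{H^1}$; but $\sum_{j\le N}\langle\phi_j,\tilde\psi\rangle\phi_j$ need not stay bounded in $H^1$ as $N\to\infty$, since the kernel eigenfunctions themselves carry no $H^1$ information.

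The paper sidesteps this entirely: it does \emph{not} insist that the test function lie in $\ker(\varrho_\star)$. It picks any $\psi\in\ker(\varrho_\star)$ (a priori only in $L^2$) and regularizes via $\psi_\varepsilon=(1+\varepsilon H_0)^{-1}\psi\in H^2\subset W^{1,\infty}$. Then $\psi_\varepsilon\to\psi$ in $L^2$, which is enough: if $\psi_\ell$ is the eigenvector of $\varrho_{\star,\eta_\ell}$ converging to $\psi$ (Lemma \ref{lem:conveta}(4)), one gets $|\langle\psi_\varepsilon,\psi_\ell\rangle|\ge 1/4$ for $\varepsilon$ small and $\ell$ large, and the corresponding summand $\log(\lambda_\ell+\eta_\ell)|\langle\psi_\varepsilon,\psi_\ell\rangle|^2\to-\infty$. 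No case distinction on the rank of $\varrho_\star$ is needed, and $\mathcal{Q}_{\star,\eta_\ell}(\psi_\varepsilon,\psi_\varepsilon)$ is trivially finite for each fixed $\varepsilon$. Replacing your construction by this resolvent regularization closes the gap with no further changes to your argument.
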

\begin{proof}
The proof is based on a contradiction argument, as in \cite[Section 5]{MP-JSP}, by differentiating in a direction related to a nonzero eigenfunction in the kernel of $\varrho_\star$.

\noindent\textbf{Step 1:} We assume that the kernel of $\varrho_\star$ is not $\{0\}$, and consider an orthonormal basis $\{\psi_p\}_{p\in I}$ of $\mbox{Ker}\,(\varrho_\star)$ ($I$ may be empty, finite or infinite, and we write $|I|$ for its cardinal). Then, we denote by $\{\rho_j\}_{1\leq j\leq N}$ the nonincreasing sequence of nonzero eigenvalues of $\varrho_\star$ (here $N$ is finite or not), associated to the orthonormal family of eigenfunctions $\{\phi_j\}_{1\leq j\leq N}$. We thus obtain a Hilbert basis $\{\{\psi_j\}_{1 \leq j \leq |I|}, \{\phi_j\}_{1\leq j\leq  N}\}$ of $L^2$. Pick then for instance $\psi_1$, that we denote for simplicity by $\psi$. At this point, we only know that $\psi$ belongs in $L^2$, which is not sufficient for our purpose. We regularize it by letting
\begin{equation*}
\psi_{\varepsilon} = (1 + \varepsilon H_0)^{-1} \psi,
\end{equation*}
for any $\varepsilon>0$. It follows that $\psi_{\varepsilon}\in H^2\subset W^{1,\infty}$, and $\psi_{\varepsilon}\to\psi$ in $L^2$.

\noindent\textbf{Step 2:}  We now consider a sequence $\{\eta_\ell\}_{\ell\in\mathbb{N}}\subset (0,1/2)$ such that $\eta_\ell \to 0$ as $\ell\to+\infty$. By using Lemma \ref{lem:conveta}, we know that there exists  a sequence of eigenfunctions $\{\psi_{j,\ell}\}_{1\leq j\leq |I|,\ell\in\mathbb{N}}$ of $\{\varrho_{\star,\eta_\ell}\}_{\ell\in\mathbb{N}}$, associated to a sequence of eigenvalues $\{\lambda_{j,\ell}\}_{1\leq j\leq |I|,\ell\in\mathbb{N}}$, such that, for any $1\leq j\leq |I|$,
\begin{equation*}
\psi_{j,\ell} \underset{\ell\to\infty}\to \psi_j\quad\textrm{and}\quad \lambda_{j,\ell}\underset{\ell\to\infty}\to 0.
\end{equation*}
For clarity, we simply denote $\psi_\ell = \psi_{1,\ell}$ and $\lambda_\ell = \lambda_{1,\ell}$. It can be shown with Remark \ref{rem3} that $\mathcal{Q}_{\star,\eta_\ell}(\psi_{\varepsilon},\psi_{\varepsilon}) \to \mathcal{Q}_{\star}(\psi_{\varepsilon},\psi_{\varepsilon})$ when $\ell\to\infty$, and as a consequence, there exists $\ell_1(\eps)\in\mathbb{N}$ such that
\begin{equation*}
\mathcal{Q}_{\star,\eta_\ell}(\psi_{\varepsilon},\psi_{\varepsilon}) \leq |\mathcal{Q}_{\star}(\psi_{\varepsilon},\psi_{\varepsilon})| + 1,
\end{equation*}
for any $\ell\geq \ell_1(\eps)$.
Furthermore, for $P_\eps=\ket{\psi_\eps} \bra{\psi_\eps}$, we have
\begin{align*}
\Tr \big(\log(\varrho_{\star,\eta_\ell} + \eta_\ell) P_\eps \big) &= \log(\lambda_{\ell} + \eta_\ell)|\langle \psi_{\varepsilon},\psi_{\ell}\rangle|^2 + \sum_{j = 2}^{|I|}  \log(\lambda_{j,\ell}  + \eta_\ell) |\langle \psi_{\varepsilon},\psi_{j,\eta_\ell}\rangle|^2
\\ &\hspace{1em} + \sum_{j = 1}^N \log(\rho_{j,\eta_\ell} + \eta_\ell) |\langle \psi_{\varepsilon},\phi_{j,\eta_\ell}\rangle|^2,
\end{align*}
and there exists a $\ell_2\geq \ell_1(\eps)$ such that $\log(\lambda_{\ell} + \eta_\ell)\leq 0$ and $\log(\rho_{N_0,\eta_\ell} + \eta_\ell)\leq 0$ for any $\ell\geq \ell_2$ and a certain $N_0\in\mathbb{N}$. This yields, for all $\ell\geq \ell_2$,
\begin{align*}
\Tr \big(\log(\varrho_{\star,\eta_\ell} + \eta_\ell) P_\eps \big) &\leq \log(\lambda_{\ell} + \eta_\ell)|\langle \psi_{\varepsilon},\psi_{\ell}\rangle|^2 + \sum_{j = 1}^{N_0} \log(\rho_{j,\eta_\ell} + \eta_\ell) |\langle \psi_{\varepsilon},\phi_{j,\eta_\ell}\rangle|^2.
\end{align*}
Remarking that
\begin{equation*}
|\langle \psi_{\varepsilon},\psi\rangle| \geq \|\psi\|_{L^2}^2 - |\langle \psi_{\varepsilon}-\psi,\psi\rangle| \geq 1 - \|\psi_{\varepsilon}-\psi\|_{L^2},
\end{equation*}
there exists $\varepsilon_0\in (0,1)$ such that $|\langle \psi_{\varepsilon},\psi\rangle|\geq 1/2$ for all $\varepsilon\in (0,\varepsilon_0)$. Moreover, since $\psi_{\ell}\to\psi$ as $\ell\to\infty$ in $L^2$ and $\|\psi_\eps\|_{L^2} \leq \|\psi\|_{L^2}$, there exists a $\ell_3\geq \ell_2$ such that $|\langle \psi_{\varepsilon},\psi_{\ell} \rangle|\geq 1/4$ for any $\varepsilon\in(0,\varepsilon_0)$ and any $\ell\geq \ell_3$. Hence, for any $\varepsilon\in(0,\varepsilon_0)$ and $\ell\geq \ell_3$, we obtain that
\begin{align*}
\Tr \big(\log(\varrho_{\star,\eta_\ell} + \eta_\ell) P_\eps \big) &\leq \frac{1}{16} \log(\lambda_{\ell} + \eta_\ell) + \log(M+1/2)\sum_{j = 1}^{N_0} |\langle \psi_{\varepsilon},\phi_{j,\eta_\ell}\rangle|^2,
\end{align*}
where $M=\Tr(\varrho_{\star,\eta_\ell})=\|n\|_{L^1}$ is such that $\rho_{j,\eta_\ell} \leq M$ for all $j$ and $\ell$. Thus, we deduce the following inequality, for any $\varepsilon\in(0,\varepsilon_0)$ and $\ell\geq \ell_3$,
\begin{align*}
\Tr \big(\log(\varrho_{\star,\eta_\ell} + \eta_\ell) P_\eps \big) + \mathcal{Q}_{\star,\eta_\ell}(\psi_{\varepsilon},\psi_{\varepsilon}) &\leq \frac{1}{16} \log(\lambda_\ell + \eta_\ell) + \log(M+1/2) \|\psi_\eps\|^2_{L^2}
\\ &\hspace{1em} + |\mathcal{Q}_{m}(\psi_{\varepsilon},\psi_{\varepsilon})| + 1.
\end{align*}
 This contradicts \eqref{eq:fullrnkContrad} by taking $\ell$ sufficiently large to make the r.h.s of the previous inequality negative. This ends the proof.
\end{proof}

\bigskip 
Let

$$
G_\star(\varphi)=-\sum_{j \in \mathbb{N}} \log (\rho_j)|(\phi_j,\varphi)|^2,
$$
where $\rho_j$ and $\phi_j$ are the eigenvalues and eigenfunctions of $\vs$. Note that $\log(\rho_j)$ is well-defined according to the previous proposition. Sending $\eta$ to zero in Proposition \ref{prop:ineqQ} gives the result below.

\begin{corollary}\label{prop:QineqNN} Let $\varphi\in H^1$. Then, we have
\begin{equation}\label{eq:exprQm}
G_\star(\varphi) \leq \mathcal{Q}_\star(\varphi,\varphi).
\end{equation}
\end{corollary}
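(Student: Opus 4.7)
The plan is to apply Proposition \ref{prop:ineqQ} to a test function $\psi \in W^{1,\infty}$, pass to the limit $\eta_\ell \to 0$ along the sequence of Lemma \ref{lem:conveta}, and then extend the resulting inequality from $W^{1,\infty}$ to $H^1$ by density. The full-rank property of $\vs$ from Proposition \ref{prop:fullrank} will be used crucially: it guarantees that each $\log(\rho_j)$ is a finite real number (bounded above by $\log M$, with $M = \|n\|_{L^1}$), so that every term of the spectral sum defining $G_\star$ is well defined.

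Fixing $\psi \in W^{1,\infty}$ and setting $P_\psi = \ket{\psi}\bra{\psi}$, Proposition \ref{prop:ineqQ} gives $\Tr(\log(\varrho_{\star,\eta_\ell}+\eta_\ell) P_\psi) + \mathcal{Q}_{\star,\eta_\ell}(\psi,\psi) \geq 0$ for each $\ell$. For the quadratic-form term, Remark \ref{rem3} furnishes $m_{\eta_\ell} \to m_\star$ in $L^2$ (paired against the $L^\infty$ function $n|\nabla(\psi/\sqrt{n})|^2$) and $A_{\eta_\ell} \to A_\star$ weakly-$\ast$ in $L^\infty$ (paired against $|\psi|^2 \in L^1$), so that $\mathcal{Q}_{\star,\eta_\ell}(\psi,\psi) \to \mathcal{Q}_\star(\psi,\psi)$. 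For the trace, expand spectrally as $\sum_j \log(\rho_{j,\eta_\ell}+\eta_\ell) |\langle \phi_{j,\eta_\ell}, \psi\rangle|^2$; the uniform bound $\rho_{j,\eta_\ell}+\eta_\ell \leq M + 1/2$ makes the quantities $[\log(M+1/2) - \log(\rho_{j,\eta_\ell}+\eta_\ell)]|\langle \phi_{j,\eta_\ell}, \psi\rangle|^2$ nonnegative, so Fatou's lemma (for the counting measure), together with the termwise convergences from Lemma \ref{lem:conveta}(3)--(4) and the Parseval identity $\sum_j |\langle \phi_{j,\eta_\ell},\psi\rangle|^2 = \|\psi\|_{L^2}^2 = \sum_j |\langle \phi_j,\psi\rangle|^2$, yields $\limsup_\ell \Tr(\log(\varrho_{\star,\eta_\ell}+\eta_\ell) P_\psi) \leq -G_\star(\psi)$. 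Combining with the inequality from Proposition \ref{prop:ineqQ} produces $G_\star(\psi) \leq \mathcal{Q}_\star(\psi,\psi)$ for every $\psi \in W^{1,\infty}$.

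To extend to arbitrary $\varphi \in H^1$, approximate $\varphi$ by $\psi_\varepsilon \in W^{1,\infty}$ in the $H^1$ norm. Under Assumptions A (in particular $m_\star, A_\star \in L^\infty$, $n_m > 0$, and $\nabla n \in L^\infty$), the form $\mathcal{Q}_\star$ is continuous on $H^1 \times H^1$, so $\mathcal{Q}_\star(\psi_\varepsilon,\psi_\varepsilon) \to \mathcal{Q}_\star(\varphi,\varphi)$. For $G_\star$, the split $-\log \rho_j = [\log\rho_j]^- - [\log \rho_j]^+$ separates a bounded piece (the $[\log\rho_j]^+ \leq \log M$ contribution passes by dominated convergence) from a nonnegative piece (lower semi-continuous by Fatou along $\psi_\varepsilon \to \varphi$ in $L^2$), which yields $\liminf_\varepsilon G_\star(\psi_\varepsilon) \geq G_\star(\varphi)$. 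Passing to the limit in $G_\star(\psi_\varepsilon) \leq \mathcal{Q}_\star(\psi_\varepsilon,\psi_\varepsilon)$ then establishes \eqref{eq:exprQm}.

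The main obstacle will be the one-sided Fatou estimate for the spectral trace: since $\log$ is unbounded below, no operator-norm convergence of $\log(\varrho_{\star,\eta_\ell}+\eta_\ell)$ is available, and only one direction of convergence survives. What rescues the argument is that Fatou delivers precisely the upper bound on $\limsup \Tr$ needed by the sign of Proposition \ref{prop:ineqQ}, and that full rank of $\vs$ prevents pointwise $-\infty$ limits from appearing in the spectral sum.
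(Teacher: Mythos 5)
Your proof is correct and follows essentially the same route as the paper: pass to the limit in Proposition \ref{prop:ineqQ} using Remark \ref{rem3} for the form and a one-sided Fatou argument for the spectral trace, then extend from $W^{1,\infty}$ to $H^1$ by density. The only (harmless) differences are technical: you shift by the uniform bound $\log(M+1/2)$ so Fatou applies to the entire spectral sum at once, whereas the paper isolates the finite head $j<N_0$ where $-\log(\rho_{j,\eta_\ell}+\eta_\ell)$ may be negative and applies Fatou only to the tail, and the paper uses the explicit regularization $\varphi_\eps=(I+\eps H_0)^{-1}\varphi$ rather than a generic $W^{1,\infty}$ approximation.
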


\begin{proof} We start from \fref{eq:fullrnkContrad} and need to regularize $\varphi$. Let then $\varphi_\eps=(I+\eps H_0)^{-1} \varphi \in H^2 \subset W^{1,\infty}$. 
With $P_\eps=\ket{\varphi_\eps} \bra{\varphi_\eps}$, we have
$$
\Tr \big(\log(\varrho_{\star,\eta_\ell} + \eta) P_\eps \big)= \sum_{j = 1 }^{+\infty}\log(\rho_{j,\eta_\ell} + \eta_\ell) |\langle\phi_{j,\eta_\ell},\varphi_\eps \rangle|^2.
$$
Let $N_0 = \argmin \{j\in\mathbb{N};\; -\log(\rho_{j} + 1/2) > 0 \}$. Since $\rho_{N_0,\eta_\ell}$ converges to $\rho_{N_0}$ according to Lemma \ref{lem:conveta} (3), there is an $\ell_0$ such that $-\log(\rho_{N_0,\eta_\ell} + 1/2) > 0$ for $\ell \geq \ell_0$. Then, using Fatou's lemma and Lemma \ref{lem:conveta} (3)-(4), we  obtain that
\begin{equation*}
\liminf_{\ell \to \infty} \sum_{j = N_{0} }^{+\infty}-\log(\rho_{j,\eta_\ell} + \eta) |(\phi_{j,\eta_\ell},\varphi_\eps)|^2 \geq \sum_{j = N_{0} }^{+\infty}-\log(\rho_{j}) |(\phi_{j},\varphi_\eps)|^2.
\end{equation*}
Moreover, Remark \ref{rem3} implies that
\begin{equation*}
\mathcal{Q}_{\star,\eta_\ell}(\varphi_\eps,\varphi_\eps) \underset{\ell \to \infty}\to\mathcal{Q}_{\star}(\varphi_\eps,\varphi_\eps).
\end{equation*}
Thus, passing to the limit $\ell \to \infty$ in \eqref{eq:fullrnkContrad}, we obtain the inequality
\begin{equation*}
\mathcal{Q}_{\star}(\varphi_\eps,\varphi_\eps) \geq - \sum_{j\in\mathbb{N}} \log(\rho_j) |(\phi_j,\varphi_\eps)|^2.
\end{equation*}
Since $\varphi_\eps \to \varphi$ in $H^1$ as $\eps \to 0$, and since $m_\star$, $A_\star$ and $|\nabla \sqrt{n}|/\sqrt{n}$ are all bounded, we can pass to the limit in the l.h.s above. Fatou's lemma finally allows us to pass to the limit in the r.h.s, which concludes the proof.
\end{proof}
\subsection{Step 4: Conclusion} \label{sec:conc}

To conclude the proof of Theorem \ref{mainth}, it remains to obtain the minimization principle \fref{minieig}, to prove that $u[\vs]=0$, and that $m_\star$ is nonnegative. The latter is addressed in the corollary below, and is a consequence of Proposition \ref{prop:QineqNN}.

\begin{corollary} 
 $m_\star$ is nonnegative a.e. on $[0,1]$.
\end{corollary}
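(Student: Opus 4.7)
The plan is to combine the variational inequality of Corollary \ref{prop:QineqNN} with a family of highly oscillatory test functions that isolate the leading-order kinetic contribution of $\mathcal{Q}_\star$, thereby reading off the sign of $m_\star$. The guiding idea is that $\mathcal{Q}_\star(\varphi,\varphi)$ is dominated, at high frequency, by $\int n\, m_\star |\nabla(\varphi/\sqrt n)|^2$, so if $m_\star$ were negative on a set of positive measure, a high-frequency test function concentrated there would drive $\mathcal{Q}_\star$ to $-\infty$, contradicting any $L^2$-coercivity coming from $G_\star(\varphi) \leq \mathcal{Q}_\star(\varphi,\varphi)$.

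The first step is to convert the variational inequality into an $L^2$-coercivity bound for $\mathcal{Q}_\star$. Since $\vs \in \calJ_1$, one has $\rho_1 \leq \Tr(\vs) = \|n\|_{L^1} < \infty$, so $C := (\log \rho_1)_+$ is finite, and by monotonicity of $\{\rho_j\}_{j\in\Nm}$ it dominates $(\log \rho_j)_+$ for every $j$. Splitting $-\log\rho_j = (\log\rho_j)_- - (\log\rho_j)_+$, discarding the nonnegative sum and applying Bessel's inequality gives $G_\star(\varphi) \geq -C\|\varphi\|_{L^2}^2$, and hence
\[
\mathcal{Q}_\star(\varphi,\varphi) \geq -C\|\varphi\|_{L^2}^2, \qquad \forall\,\varphi \in H^1.
\]

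The second step is to test this lower bound against the oscillatory family $\varphi_\lambda(x) = \sqrt{n(x)}\,\chi(x)\,e^{i\lambda x}$, where $\chi$ is an arbitrary real-valued element of $H^1(\Omega)$ and $\lambda > 0$. Assumption A ensures $\sqrt n \in H^1 \cap L^\infty$, so $\varphi_\lambda \in H^1$, and the $\sqrt n$ prefactor is chosen precisely to cancel the denominator appearing in $\mathcal{Q}_\star$: a direct calculation (using that $\chi$ is real) gives $|\nabla(\varphi_\lambda/\sqrt n)|^2 = |\nabla\chi|^2 + \lambda^2\chi^2$ and $|\varphi_\lambda|^2 = n\chi^2$, so that
\[
\mathcal{Q}_\star(\varphi_\lambda,\varphi_\lambda) = \lambda^2 \int_0^1 n\, m_\star\,\chi^2\, dx + \int_0^1 n\, m_\star\,|\nabla\chi|^2\, dx + \int_0^1 A_\star\, n\,\chi^2\, dx.
\]
Since $m_\star, A_\star, n \in L^\infty$ and $\chi \in H^1(\Omega) \hookrightarrow L^\infty$, the last two integrals are finite and independent of $\lambda$; dividing the coercivity inequality by $\lambda^2$ and letting $\lambda \to \infty$ yields
\[
\int_0^1 n(x)\, m_\star(x)\,\chi^2(x)\, dx \geq 0 \qquad \text{for every real } \chi \in H^1(\Omega).
\]

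To conclude, a standard localization argument applies: taking $\chi$ to be a smooth bump concentrated near a Lebesgue point $x_0$ of $m_\star$ and shrinking its support, together with $n \geq n_m > 0$ from Assumption A, forces $m_\star(x_0) \geq 0$, and hence $m_\star \geq 0$ a.e.\ on $[0,1]$. The main obstacle is really the first step: $\mathcal{Q}_\star$ has no manifest sign because $m_\star$ is not yet known to be nonnegative, and the required $L^2$-coercivity must be extracted indirectly from $G_\star$ via the (trivial but essential) uniform bound on $(\log \rho_j)_+$ coming from trace-class regularity of $\vs$. Once this is in place, the oscillatory test function trick and the $\lambda \to \infty$ limit are entirely routine.
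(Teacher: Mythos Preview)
Your proof is correct and rests on the same underlying idea as the paper's: use the lower bound $G_\star(\varphi)\le \mathcal Q_\star(\varphi,\varphi)$ from Corollary~\ref{prop:QineqNN}, then feed in test functions whose gradient is amplified by a large parameter so that the kinetic term $\int n\,m_\star |\nabla(\varphi/\sqrt n)|^2$ dominates and its sign is forced to be nonnegative. The difference is in how the amplification is achieved. The paper argues by contradiction with \emph{spatially concentrated} bumps $\varphi_\eps(x)=\eps^{-1/2}\varphi((x-x_0)/\eps)$, so that $|\nabla\varphi_\eps|^2\sim\eps^{-2}$, and then has to deal with the measure-theoretic issue that the negativity set of $m_\star$ need not be open (outer regularity, splitting $B(\eps)$ into the part inside $S_{\ell_0}$ and a residual of measure $\le\delta$, choosing $\delta=\eps^2$). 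You instead use \emph{frequency concentration} via $\varphi_\lambda=\sqrt n\,\chi\,e^{i\lambda x}$, which makes $|\nabla(\varphi_\lambda/\sqrt n)|^2=|\nabla\chi|^2+\lambda^2\chi^2$ and, after dividing by $\lambda^2$, directly yields the weak inequality $\int_0^1 n\,m_\star\,\chi^2\,dx\ge 0$ for every real $\chi\in H^1$. This buys you a cleaner conclusion: density of $H^1$ in $L^2$ (or a Lebesgue-point argument) and $n\ge n_m>0$ immediately give $m_\star\ge 0$ a.e., with no contradiction setup and no outer-regularity step. The only point worth making explicit is that $\varphi_\lambda\in H^1$ because $\sqrt n\in H^1\cap L^\infty$ (Assumptions~A and the 1D embedding) and $\chi\in H^1\subset L^\infty$, so the product rule applies; you note this, and it is indeed routine.
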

\begin{proof} We proceed by contradiction. Let
  $$
  S=\{ x \in (0,1): m_\star(x)<0\; \; a.e.\},
  $$
and suppose the Lebesgue measure of $S$, denoted $|S|$, is not zero. Since
$$
S=\bigcup_{\ell=1}^\infty \left\{ x \in (0,1): -m_\star(x) \geq \ell^{-1}\; \; a.e.\right\}:=\bigcup_{\ell=1}^\infty S_\ell,
$$
there exists an $\ell_0$ such that $|S_{\ell_0}|>0$. Since $m_\star$ is only in $L^\infty$, we cannot conclude that $S_{\ell_0}$ is open. Nevertheless, by outer regularity of the Lebesgue measure, there exists, for any $\delta>0$, an open set $I$ such that $S_{\ell_0} \subset I$ and $|S_{\ell_0}| \geq |I|-\delta$. Let then $B(\eps_0)=(x_0-\eps_0,x_0+\eps_0) \subset I$. Consider $\varphi\in C^{\infty}(\Rm)$ with support in $[-1,1]$, and for $\eps \in (0,\eps_0)$, introduce 
\begin{equation*}
\varphi_\eps(x)=\eps^{-1/2}\varphi\left(\frac{x-x_0}{\eps}\right).
\end{equation*}
We remark that $\|\varphi_{\varepsilon}\|_{L^2} = \|\varphi\|_{L^2} $. 
Since $\varphi_\eps \in H^1$ and $\{\rho_{j}\}_{j\in\mathbb{N}}$ is nonincreasing, we have, from \eqref{eq:exprQm} and the fact that $n$ is bounded above and below, 
\begin{equation}\label{eq:mnonnegQbnd}
\gamma \leq - \log (\rho_1) \| \sqrt{n}\varphi_\eps\|^2 \leq \calQ_\star(\sqrt{n} \varphi_\eps,\sqrt{n} \varphi_\eps),
\end{equation}
where $\gamma$ is finite, positive or negative depending on whether $\rho_1<1$ or not.
We now split $B(\eps) \subset B(\eps_0) \subset I$ into $B(\eps)=B(\eps) \cap S_{\ell_0}+R$, where $R=\{x \in B(\eps), x \notin S_{\ell_0}\}$. By construction, $|R| \leq \delta$. Write then
\bee
 \int_0^1n|\nabla \varphi_\eps|^2 m_\star dx &=&  \int_{B(\eps) \cap S_{n_0} }n|\nabla \varphi_\eps|^2 m_\star dx+\int_{R}n|\nabla \varphi_\eps|^2 m_\star dx\\
&=&T_1+T_2.
\eee
We have
$$
|T_2| \leq \eps^{-2} \|n m_\star\|_{L^\infty} \|\nabla \varphi\|^2_{L^\infty} |R| \leq C_1 \delta \eps^{-2}.
$$
Choosing $\delta=\eps^2$, we find, for some $C_2>0$
\begin{align*}
  \calQ_\star(\sqrt{n} \varphi_\eps,\sqrt{n} \varphi_\eps)& \leq T_1+C_1  +\|\varphi\|_{L^2}^2  \|A_\star\|_{L^\infty} \leq T_1+ C_2.
\end{align*}
Since
\begin{equation*}
 T_1 \leq - \frac{n_m}{\ell_0 \eps^2} \|\nabla \varphi\|_{L^2}^2 = -C_4 \eps^{-2}, \qquad C_4>0,
\end{equation*}
where $n_m=\min_{x\in [0,1]} n(x)$, there exists an $\varepsilon>0$ sufficient small such that $\calQ_\star(\sqrt{n} \varphi_\eps,\sqrt{n} \varphi_\eps)$ is less than $\gamma$, which contradicts \fref{eq:mnonnegQbnd}. Hence $S$ is of measure zero and the proof is ended. \end{proof}

\bigskip

The first step toward the minimization principle is the next lemma. 

\begin{lemma}\label{lem:SesquOrth}
Let $\varphi\in H^1$. We have
\begin{equation}\label{eq:lemQ2}
 \calQ_\star(\phi_j,\varphi)= - \log(\rho_j) \langle \phi_j,\varphi\rangle, \quad\forall j\in\mathbb{N}.
\end{equation}
\end{lemma}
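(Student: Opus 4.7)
The plan is to apply the Euler--Lagrange equation of Proposition \ref{prop:eulerL} to the specific test operator $\varrho_1=\sqrt{\rho_j}\ket{\varphi}\bra{\phi_j}$ for $\varphi\in W^{1,\infty}$ and fixed $j\in\Nm$. This is a legitimate choice: Proposition \ref{prop:fullrank} gives $\rho_j>0$, and as noted before Proposition \ref{direct}, the identity $k[\vs]=\sum_p\rho_p|\nabla\phi_p|^2\leq k\in L^\infty$ together with $n\in L^\infty$ shows $\sqrt{\rho_j}\phi_j\in W^{1,\infty}$. A direct spectral calculation using $\vs=\sum_p\rho_p\ket{\phi_p}\bra{\phi_p}$ yields $\varrho_1\vs=\rho_j^{3/2}\ket{\varphi}\bra{\phi_j}$, hence
$$n_1=2\rho_j^{3/2}\Re(\varphi^*\phi_j),\qquad k_1=2\rho_j^{3/2}\Re(\nabla\varphi^*\nabla\phi_j),$$
and, by orthonormality of the basis,
$$\Tr\big(\vs\log(\vs)(\varrho_1+\varrho_1^*)\big)=2\rho_j^{3/2}\log(\rho_j)\,\Re\langle\phi_j,\varphi\rangle.$$

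Next I would recognize that, when $\psi=\phi_j$, the remaining four terms of \eqref{eq:EulerLagrangeA} organize into $2\rho_j^{3/2}\Re\calQ_\star(\phi_j,\varphi)$. This relies on the identity displayed immediately before Proposition \ref{prop:ineqQ} (which rewrites $\calQ_\star(\psi,\varphi)$ exactly as the sum of the four potential terms appearing in the Euler--Lagrange equation), combined with the fact that $m_\star$, $A_\star$, $|\nabla\sqrt{n}|^2/n$ and $\nabla\sqrt{n}/\sqrt{n}$ are real-valued, so that each integral pairing with, say, $k_1=2\rho_j^{3/2}\Re(\nabla\varphi^*\nabla\phi_j)$ extracts the real part of the corresponding term of $\calQ_\star(\phi_j,\varphi)$. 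Dividing \eqref{eq:EulerLagrangeA} by the nonzero factor $2\rho_j^{3/2}$ then produces
$$\Re\calQ_\star(\phi_j,\varphi)=-\log(\rho_j)\,\Re\langle\phi_j,\varphi\rangle,\qquad\forall\varphi\in W^{1,\infty}.$$

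To upgrade this to the complex identity \eqref{eq:lemQ2}, I would repeat the argument with $\varphi$ replaced by $i\varphi$ (still in $W^{1,\infty}$). Since $\calQ_\star(\phi_j,\cdot)$ and $\langle\phi_j,\cdot\rangle$ are both linear in the second argument, $\Re\calQ_\star(\phi_j,i\varphi)=-\Im\calQ_\star(\phi_j,\varphi)$ and likewise for the inner product, so the new relation delivers the imaginary part. Combining the two versions gives
$$\calQ_\star(\phi_j,\varphi)=-\log(\rho_j)\langle\phi_j,\varphi\rangle,\qquad\forall\varphi\in W^{1,\infty}.$$

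Finally I would extend from $W^{1,\infty}$ to $H^1$ by density. For $\varphi\in H^1$, set $\varphi_\eps=(I+\eps H_0)^{-1}\varphi\in H^2_{\mathrm{Neu}}\subset W^{1,\infty}$; by the spectral theorem applied to $H_0$ on its form domain $H^1$, we have $\varphi_\eps\to\varphi$ in $H^1$ as $\eps\to 0$. Since $m_\star,A_\star\in L^\infty$ and $\nabla\sqrt{n}/\sqrt{n}\in L^\infty$, the form $\calQ_\star(\phi_j,\cdot)$ is continuous on $H^1$; passing to the limit on both sides closes the proof. No step is genuinely hard: the main obstacle is the bookkeeping in the second step, namely verifying that the four potential terms of \eqref{eq:EulerLagrangeA} indeed reassemble into $2\rho_j^{3/2}\Re\calQ_\star(\phi_j,\varphi)$, which is just a careful application of the reformulation of $\calQ_\star$ recorded before Proposition \ref{prop:ineqQ}.
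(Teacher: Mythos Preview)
Your proposal is correct and follows essentially the same approach as the paper's proof: apply the Euler--Lagrange equation \eqref{eq:EulerLagrangeA} with $\varrho_1=\sqrt{\rho_j}\ket{\varphi_\eps}\bra{\phi_j}$ (respectively $i\sqrt{\rho_j}\ket{\varphi_\eps}\bra{\phi_j}$, which is the same as your substitution $\varphi\mapsto i\varphi$) to recover the real and imaginary parts, then pass to the limit in $H^1$ via $\varphi_\eps=(I+\eps H_0)^{-1}\varphi$. The only cosmetic difference is the ordering: the paper regularizes first and then computes, whereas you compute for $\varphi\in W^{1,\infty}$ and regularize at the end; also note that the identity you cite before Proposition~\ref{prop:ineqQ} is written for $\calQ_{\star,\eta}$, but the algebraic expansion shown there applies verbatim to $\calQ_\star$ with $m_\star,A_\star$ in place of $m_\eta,A_\eta$.
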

\begin{proof}
We start from \fref{eq:EulerLagrangeA}, and set $\varphi_{\varepsilon} = (1+\varepsilon H_0)^{-1} \varphi \in H^2 \subset W^{1,\infty}$. With $\varrho_1= \sqrt{\rho_j}\ket{\varphi_{\varepsilon}}\bra{\phi_j}$, we find
\begin{align*}
n_1 &= n[\varrho_1\varrho_{\star} + \varrho_{\star} \varrho_1^*]= \rho^{3/2}_j (\phi_j^* \varphi_{\varepsilon} + \varphi_{\varepsilon}^*\phi_j) = 2 \rho^{3/2}_j \Re(\phi_j^* \varphi_{\varepsilon})\in W^{1,\infty},
\end{align*}
and
\begin{align*}
k_1 &= k [\varrho_1\varrho_{\star} + \varrho_{\star} \varrho_1^*]= - n [\nabla (\varrho_1\varrho_{\star} + \varrho_{\star} \varrho_1^*)\nabla ]= 2 \rho^{3/2}_j \Re(\nabla \phi_j^* \nabla \varphi_{\varepsilon}) \in L^{\infty}.
\end{align*}
Hence, by using \eqref{eq:EulerLagrangeA} and the fact that $\rho_j>0$ for all $j$ according to Proposition \ref{prop:fullrank}, we can see that
\begin{equation*}
  \log(\rho_j) \Re \int_0^1 \phi_j^*(x) \varphi_{\varepsilon}(x) dx +   \Re \left(\calQ_\star(\phi_j,\varphi_{\varepsilon}) \right)= 0.
\end{equation*}
Following the same steps with $\varrho_1= i \sqrt{\rho_j}\ket{\varphi_{\varepsilon}}\bra{\phi_j}$, we obtain
\begin{equation*}
 \log(\varrho_j) \Im \int_0^1 \phi_j^*(x) \varphi_{\varepsilon}(x) dx +  \Im \left(\calQ_\star(\phi_j,\varphi_{\varepsilon}) \right)= 0,
\end{equation*}
which gives, by adding both expressions,
$$ 
\calQ_\star(\phi_j,\varphi_{\varepsilon})= - \log(\rho_j) \langle \phi_j,\varphi_{\varepsilon}\rangle, \quad\forall j\in\mathbb{N}.
$$
Since $\varphi_\eps \to \varphi$ in $H^1$, and since $m_\star$, $A_\star$ and $|\nabla \sqrt{n}|/\sqrt{n}$ are all bounded, we can pass to the limit in the equation above and conclude the proof.
\end{proof}

\bigskip

With Lemma \ref{lem:SesquOrth}, we can now  prove that $u[\varrho_\star]=0$. Choosing indeed $\varphi=\phi_j \psi$ in \fref{eq:lemQ2} for $\psi$ real-valued and smooth, and taking the imaginary part, we find

$$
0=\Im \, Q_\star(\phi_j,\phi_j \psi)=\int_0^1 \Im \nabla \phi_j^* \phi_j \psi dx.
$$
Since $\psi$ is arbitrary, this implies that $\Im \nabla \phi_j^* \phi_j=0$ and therefore that $u[\vs]=0$ according to \fref{defcurrent}.

Regarding \fref{minieig}, we find from \fref{eq:exprQm}, since $\{\rho_j\}_{j\in\mathbb{N}}$ is a nonincreasing sequence, 
$$
- \log (\rho_1) \leq \inf_{\varphi \in H^1, \|\varphi\|_{L^2}=1} Q_\star(\varphi,\varphi).
$$
According to Lemma \ref{lem:SesquOrth}, we have $Q_\star(\phi_1,\phi_1)=- \log (\rho_1)$, and therefore the above infimum is attained at $\phi_1$. At any order $p>1$, we have, for any $\varphi \in \calK_p$, 
$$
- \log (\rho_p)\|\varphi\|_{L^2}^2 \leq G_\star(\varphi) \leq \calQ_\star(\varphi,\varphi) \qquad \textrm{so that} \qquad
- \log (\rho_p) \leq \inf_{\varphi \in \calK_p} Q_\star(\varphi,\varphi),
$$
and, according to Lemma \ref{lem:SesquOrth}, the infimum is attained at $\phi_p$. This proves \fref{minieig}, and concludes the proof of Theorem \ref{mainth}.

\section{Other proofs} \label{otherproof}

\subsection{Proof of Proposition \ref{direct}} \label{proofpropdire}

The proof is based on the implicit function theorem, and the first part consists in establishing some regularity for the perturbation $\varrho(t,f)$. In the entire proof, $\{\rho_j\}_{j \in \Nm}$ and $\{\phi_j\}_{j \in \Nm}$ are the eigenvalues and eigenfunctions of $\vs$.

\begin{lemma} \label{rhotE}Let $\varrho_1=\sqrt{\rho_p} \ket{\varphi}\bra{\phi_p}$ for $\varphi \in W^{1,\infty}$ and some $p \in \Nm$, let $\varrho_2\in\mathcal{E}^+$, and let $f=(f_2,f_3) \in  W^{1,\infty} \times L^{\infty}$. Then $\varrho(t,f) \in \calE$ for all $t \in [-1,1]$. 
\end{lemma}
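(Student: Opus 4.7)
The strategy is to realize $\varrho(t,f)$ as a linear combination of two operators of the form $B B^*$, each individually in $\calE^+$, and then invoke the vector-space structure of $\calE$.

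First, I would verify that $I + L(t,f)$ extends to a bounded operator on $H^1$. Each piece of $L(t,f)$ has this property: the rank-one operator $t \varrho_1 = t \sqrt{\rho_p} \ket{\varphi}\bra{\phi_p}$ sends any $u \in H^1$ to a scalar multiple of $\varphi \in W^{1,\infty} \subset H^1$; multiplication by $f_2 \in W^{1,\infty}$ preserves $H^1$ by Leibniz since $f_2$ and $\nabla f_2$ are in $L^\infty$; and the Volterra integral $T_{f_3}$ with $f_3 \in L^\infty$ maps $u \in H^1$ to a function vanishing at $0$ whose distributional derivative is $f_3 \nabla u \in L^2$.

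Next, since $\vs, \varrho_2 \in \calE^+$ and $f \in W^{1,\infty} \times L^\infty \subset L^\infty \times L^\infty$, Lemma \ref{proper} applies to each term and yields
$$
\varrho(t,f) = A_0 A_0^* + t A_1 A_1^*, \qquad A_0 = (I + L(t,f)) \sqrt{\vs}, \quad A_1 = (I + L(t,f)) \sqrt{\varrho_2}.
$$
I claim that $A_0 A_0^*, A_1 A_1^* \in \calE^+$. For this, it suffices to show $A_i, \sqrt{H_0} A_i \in \calJ_2$, since then $A_i A_i^* \in \calJ_1$ and $\sqrt{H_0} A_i A_i^* \sqrt{H_0} = (\sqrt{H_0} A_i)(\sqrt{H_0} A_i)^* \in \calJ_1$, which is exactly the definition of $\calE^+$. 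The key ingredient is the factorization
$$
A_0 = \bigl[(I + L(t,f)) (I + H_0)^{-1/2}\bigr] \cdot \bigl[(I + H_0)^{1/2} \sqrt{\vs}\bigr],
$$
in which the left bracket is bounded on $L^2$ because $(I + H_0)^{-1/2} : L^2 \to H^1$ is bounded and $I + L(t,f) : H^1 \to H^1$ by the first step, while the right bracket is Hilbert--Schmidt on $L^2$ with squared norm $\Tr(\vs) + \Tr(\sqrt{H_0} \vs \sqrt{H_0}) = \|\vs\|_{\calE} < \infty$. Composition of bounded and Hilbert--Schmidt is Hilbert--Schmidt, giving $A_0 \in \calJ_2$. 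Prepending $\sqrt{H_0} : H^1 \to L^2$ (bounded) to the left bracket keeps it bounded on $L^2$, so $\sqrt{H_0} A_0 \in \calJ_2$ by the same reasoning. The analogous argument with $\varrho_2$ in place of $\vs$ handles $A_1$.

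Combining these facts, both $A_0 A_0^*$ and $A_1 A_1^*$ lie in $\calE^+$, so $\varrho(t,f) = A_0 A_0^* + t A_1 A_1^*$ belongs to the Banach space $\calE$ for every $t \in [-1,1]$. The main conceptual obstacle is the $t < 0$ regime: $\varrho(t,f)$ is then sign-indefinite, so $|\varrho(t,f)|$ appearing in the definition of $\calE$ is not controlled in the operator sense by $A_0 A_0^* + |t| A_1 A_1^*$, and no simple monotonicity on absolute values of self-adjoint operators is available. What allows us to conclude cleanly is that $\calE$ is a Banach space (hence closed under linear combinations) by hypothesis, so we need only the additive structure and can bypass any direct analysis of $|\varrho(t,f)|$.
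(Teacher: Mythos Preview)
Your proposal is correct and follows the same strategy as the paper: use the decomposition $\varrho(t,f)=A_0A_0^*+tA_1A_1^*$ from Lemma~\ref{proper}, show that each $A_i$ and $\sqrt{H_0}A_i$ lie in $\calJ_2$ so that $A_iA_i^*\in\calE^+$, and conclude via the vector-space structure of $\calE$. The only difference is in how the Hilbert--Schmidt property is verified: the paper computes $\sum_j\|(I+L(t,f))\sqrt{\vs}\,\phi_j\|_{L^2}^2$ and $\sum_j\|\sqrt{H_0}(I+L(t,f))\sqrt{\vs}\,\phi_j\|_{L^2}^2$ directly on the eigenbasis of $\vs$, whereas you factor $A_0=\bigl[(I+L(t,f))(I+H_0)^{-1/2}\bigr]\bigl[(I+H_0)^{1/2}\sqrt{\vs}\bigr]$ as bounded times Hilbert--Schmidt, which is a cleaner and basis-free way to reach the same conclusion.
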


\begin{proof} 
Since $\calE$ is a Banach space, it suffices to show that each term in \fref{decomp} belongs to $\calE$, and since $\vs$ and $\varrho_2$ have identical roles, we only treat the term in $\vs$ and set $\varrho_2=0$. We need first to properly define $\sqrt{H_0} \varrho(t,f) \sqrt{H_0}$. This is done in the spirit of Lemma \ref{proper}: first, we remark that $\nabla \varrho_1$ is bounded since $\varphi \in W^{1,\infty}$; this shows that $\varrho_1: L^2 \to H^1$, and subsequently that $L(t,f):H^1 \to H^1$ since $f_2 \in W^{1,\infty}$. Together with $\sqrt{\vs} L^2 \subset H^1$ since $\vs \in \calE^+$, and with $D(\sqrt{H_0})=H^1$, this shows that $B=\sqrt{H_0} (I+L(t,f)) \sqrt{\vs}$ is bounded, and therefore that $\sqrt{H_0} \varrho(t,f) \sqrt{H_0}$ is interpreted as $B B^*$ as in Lemma \ref{proper}.

  We now prove that $\varrho(t,f) \in \calE$, that is $(I+L(t,f)) \sqrt{\vs} \in \calJ_2$ and $B \in \calJ_2$. According to \cite[Theorem 6.22,
item (g)]{RS-80-I}, it suffices to show that there are orthonormal basis $\{\psi_j\}_{j \in \Nm}$ and $\{e_j\}_{j \in \Nm}$ of $L^2$  such that  
$$
\sum_{j\in\mathbb{N}} \|(I+L(t,f))\sqrt{\vs}e_j\|_{L^2}^2 < \infty \qquad \textrm{and} \qquad \sum_{j\in\mathbb{N}} \|\sqrt{H_0} (I+L(t,f))\sqrt{\vs}\psi_j\|_{L^2}^2 <\infty.
$$
Setting $e_j=\psi_j=\phi_j$ , we find
\begin{align} 
&(I+L(t,f)) \sqrt{\vs} \phi_j=\sqrt{\rho_j} (\phi_j+t \varrho_1 \phi_j+ f_2 \phi_j+ T_{f_3} \phi_j) \label{eL}\\
&\nabla(I+L(t,f))\sqrt{\vs} \phi_j = \sqrt{\rho_j} (\nabla \phi_j +t \nabla \varrho_1 \phi_j + (f_2 + f_3 )\nabla \phi_j+ \nabla f_2 \phi_j). \label{nabL}
\end{align}
This leads to, for all $t \in [-1,1]$,
\begin{align*}
  \sum_{j\in\mathbb{N}} \|(1+L(t,f))\sqrt{\vs}\phi_j\|_{L^2}^2 &\leq  (1+\| \varrho_1\|_{\calL(L^2)}+\|f_2\|_{L^{\infty}} + \|f_3\|_{L^{\infty}})^2\sum_{j\in\mathbb{N}} \rho_j \|\phi_j\|_{H^1}^2 \\ & \lesssim  \| \vs \|_{\calE}<+\infty,
\end{align*}
showing that $\varrho(t,f)$ is trace class. Regarding the bound in $\calE$, we remark that $\|\sqrt{H_0} \varphi\|_{L^2}=\|\nabla \varphi\|_{L^2}$ for all $\varphi \in H^1$, and that, with \fref{nabL},
\begin{align*}
\|\nabla(I+L(t,f))\phi_j\|_{L^2} &\leq (1+\|f_2\|_{L^\infty}+\|f_3\|_{L^\infty} )\|\nabla \phi_j\|_{L^2} + (\|\nabla\varrho_1\|_{\calL(L^2)}+\|\nabla f_2\|_{L^\infty}) \|\phi_j\|_{L^2}\nonumber
\\ &\lesssim (1+\|f_2\|_{W^{1,\infty}}+\|f_3\|_{L^{\infty}}  + \|\nabla\varrho_1\|_{\calL(L^2)} ) \|\phi_j\|_{H^1}.
\end{align*}
Since we have seen above that $\nabla \varrho_1$ is bounded, we deduce, for all $t \in [-1,1]$,
\begin{equation*}
\sum_{j\in\mathbb{N}} \|\sqrt{H_0} (1+L(t,f))\sqrt{\vs}\phi_j\|_{L^2}^2 \lesssim \sum_{j\in\mathbb{N}} \rho_j \|\phi_j\|_{H^1}^2 <+\infty,
\end{equation*}
leading to $\varrho(t,f)\in\calE$. This ends the proof.
\end{proof}

\bigskip 

We now consider the following function
$$
G(t,f) = \begin{pmatrix} G_1(t,f) \\ G_2(t,f)\end{pmatrix}=  \begin{pmatrix} n[\varrho(t,f)]-n \\ k[\varrho(t,f)]-k\end{pmatrix}.
$$
Note that $G$ is well-defined since $\varrho(t,f) \in \calE$ according to the preceding Lemma, and that $G$ is real-valued since $\varrho(t,f)$ is self-adjoint. In the next lemma, $D_f G$ denotes the differential of $G$ w.r.t $f$.

\begin{lemma}\label{lemiso} Let $\varrho_1=\sqrt{\rho_p} \ket{\varphi}\bra{\phi_p}$ for $\varphi \in W^{1,\infty}$ and some $p \in \Nm$, and let $\varrho_2\in\mathcal{E}^+$ with $k[\varrho_2]\in L^{\infty}$. Then:\\
  \noindent (i) $G$ is continuously Fr\'echet differentiable from $[-1,1]\times  L^{\infty}\times W^{1,\infty}$ to $L^{\infty}\times W^{1,\infty}$.\\
  \noindent (ii) $D_fG(0,0)$ is an isomorphism from $W_r^{1,\infty} \times L^{\infty}_r$ to itself, where $W_r^{1,\infty}$ (resp. $L^{\infty}_r$) is the space of real $W^{1,\infty}$ (resp. $L^\infty$) functions.  
\end{lemma}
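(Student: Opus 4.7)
The plan is to deduce (i) by writing $n[\varrho(t,f)]$ and $k[\varrho(t,f)]$ as explicit polynomials in the variables $(t,f_2,\nabla f_2,f_3)$ whose scalar coefficients are absolutely convergent spectral series, and to deduce (ii) by computing $D_fG(0,0)$ explicitly and reducing the linear system to a single Volterra integral equation in $F_3$, solvable via Lemma \ref{lem:Volterra}.

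For (i), I would start from $\varrho(t,f)=A_f\vs A_f^*+tA_f\varrho_2 A_f^*$ with $A_f=I+L(t,f)$, together with the spectral decompositions $\vs=\sum_j\rho_j\ket{\phi_j}\bra{\phi_j}$ and $\varrho_2=\sum_j\rho_j[\varrho_2]\ket{\psi_j}\bra{\psi_j}$, to write
\begin{equation*}
n[\varrho(t,f)]=\sum_j\rho_j|A_f\phi_j|^2+t\sum_j\rho_j[\varrho_2]|A_f\psi_j|^2,\qquad k[\varrho(t,f)]=\sum_j\rho_j|\nabla A_f\phi_j|^2+t\sum_j\rho_j[\varrho_2]|\nabla A_f\psi_j|^2.
\end{equation*}
Expanding $A_f\phi_j=\phi_j+t\sqrt{\rho_p}\,\delta_{pj}\,\varphi+f_2\phi_j+T_{f_3}\phi_j$, both moments become degree-two polynomials in $(t,f_2,\nabla f_2,f_3)$; their coefficient functions are bilinear in $\{\phi_j,\nabla\phi_j,\psi_j,\nabla\psi_j\}$ weighted by the eigenvalues, and the series converge in $L^\infty$ (for the $n$-type coefficients) and $L^1$ (for the $k$-type coefficients) by Remark \ref{rem1} and $\vs,\varrho_2\in\calE^+$. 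The Cauchy--Schwarz bound $|\nabla n[\varrho(t,f)]|\le 2(n[\varrho(t,f)]\,k[\varrho(t,f)])^{1/2}$ from Remark \ref{rem2} then places $n[\varrho(t,f)]$ in $W^{1,\infty}$. Since a polynomial map between Banach spaces with coefficients in the codomain is continuously Fréchet differentiable, (i) follows.

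For (ii), the same expansion evaluated at $t=0$ yields
\begin{equation*}
D_fG(0,0)(\delta f_2,\delta f_3)=\bigl(2n\,\delta f_2+\calL_{K_0}\delta f_3,\ 2k(\delta f_2+\delta f_3)+\nabla n\cdot\nabla\delta f_2\bigr),
\end{equation*}
where the cross term in $n$ reproduces $\calL_{K_0}\delta f_3$ via $2\Re\sum_j\rho_j\phi_j^*(x)T_{\delta f_3}\phi_j(x)=\int_0^xK_0(x,y)\delta f_3(y)\,dy$ with $K_0$ from \fref{defK}. Given $(g_1,g_2)\in W_r^{1,\infty}\times L_r^\infty$, the first component produces $F_2=(g_1-\calL_{K_0}F_3)/(2n)$, which lies in $W^{1,\infty}$ as soon as $F_3\in L^\infty$, because $K_0$ and $\nabla_xK_0$ are bounded, the boundary identity $K_0(x,x)=\nabla n(x)$ gives $\nabla\calL_{K_0}F_3=\nabla n\,F_3+\calL_{\nabla_xK_0}F_3$, and $n^{-1}\in W^{1,\infty}$. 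Substituting this expression for $F_2$ into the second equation and invoking the defining identity $a^{-1}=k-|\nabla\sqrt n|^2$, I would rearrange into a Volterra equation $F_3=\calL_KF_3+b$ with $K\in L^\infty\times L^\infty$ and $b\in L^\infty$ depending linearly and boundedly on $(g_1,g_2)$.

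The main obstacle is precisely this algebraic reduction: the combination of $2kF_2$, the differentiated contribution $\nabla n\cdot\nabla F_2$, and the substitution for $F_2$ must rearrange so as to isolate the kernel $K$ and source $b$ of \fref{sysf}, the only permitted simplification coming from the defining identity for $a$ together with the boundary contribution $K_0(x,x)=\nabla n(x)$ and the relation $\nabla n=2\sqrt{n}\,\nabla\sqrt n$. Once the Volterra equation is in hand, Lemma \ref{lem:Volterra} (the same Volterra inversion later invoked in Proposition \ref{prop:eulerL}) yields a unique $F_3\in L_r^\infty$ depending continuously on $(g_1,g_2)$; the first equation recovers $F_2\in W_r^{1,\infty}$ boundedly, and the bounded inverse theorem closes (ii).
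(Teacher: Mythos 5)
Your outline follows the paper's strategy for both parts: treat $G$ as a second-order polynomial in $(t,f_2,\nabla f_2,f_3)$ for the regularity in (i), and reduce the $2\times 2$ linear system to a scalar Volterra equation in $h_3$ via the identities $K_0(x,x)=\nabla n(x)$, $\nabla\calL_{K_0}\varphi=(\nabla n)\varphi+\calL_{\nabla_x K_0}\varphi$, and $a^{-1}=k-|\nabla\sqrt n|^2$ for (ii). Part (ii) is essentially the paper's argument, and the computation you defer is exactly the rearrangement the paper carries out; the only thing missing there is the actual algebra, which does close as you expect.

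There is, however, a genuine gap in part (i). You assert that the $k$-type coefficient series converge in $L^1$ ``by Remark~\ref{rem1}'' and then invoke the Cauchy--Schwarz bound to place $n[\varrho(t,f)]$ in $W^{1,\infty}$. But $L^1$ is not enough: the codomain requires $G_2=k[\varrho(t,f)]-k\in L^\infty$, and the inequality $|\nabla n[\varrho(t,f)]|\le 2\sqrt{n[\varrho(t,f)]\,k[\varrho(t,f)]}$ only yields $\nabla n[\varrho(t,f)]\in L^\infty$ once $k[\varrho(t,f)]\in L^\infty$ is already known. Remark~\ref{rem1} alone gives no $L^\infty$ control on $\sum_j\rho_j|\nabla\phi_j|^2$ for a general element of $\calE^+$, so the upgrade must come from elsewhere. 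The paper's route is a pointwise a.e.\ estimate of the form $|\nabla(I+L(t,f))\phi_j|\lesssim 1+|\phi_j|+|\nabla\phi_j|$, valid because $\nabla\varrho_1$ is bounded (using $\varphi\in W^{1,\infty}$ and $\sqrt{\rho_p}\,\nabla\phi_p\in L^\infty$ since $k\in L^\infty$), $f_2\in W^{1,\infty}$, and $f_3\in L^\infty$; summing gives $k[\varrho(t,f)]\lesssim 1+n+k$ a.e., and only then does Assumption~A (which puts $n,k\in L^\infty$) together with $k[\varrho_2]\in L^\infty$ deliver $k[\varrho(t,f)]\in L^\infty$. Without this step, your chain of implications does not land $G$ in the target space $W^{1,\infty}\times L^\infty$.
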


The proof of Lemma \ref{lemiso} is somewhat long and given further. We recall below the implicit function theorem on Banach spaces.

\begin{theorem} \label{thimp}
  Let $X,Y, Z$ be three Banach spaces, $O$ an open set of $X \times Y$,  and $G: O \to Z$ be a continuously Fr\'echet differentiable mapping. If $(x_{0},y_{0})\in O$, $G(x_{0},y_{0})=0$ and $h\mapsto D_yG(x_{0},y_{0})[h]$ is a Banach space isomorphism from $Y$ onto $Z$, then there exist neighbourhoods $U$ of $x_0$ and $V$ of $y_0$ and a continuously Fr\'echet differentiable function $f : U \to V$ such that $G(x, f(x)) = 0$, and $G(x, y) = 0$ if and only if $y = f(x)$, for all $(x,y)\in U\times V$.
\end{theorem}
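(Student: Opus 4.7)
By translating, replace $G(x,y)$ with $\widetilde G(x,y)=G(x_0+x,y_0+y)$, so we may assume $(x_0,y_0)=(0,0)$. Set $T=D_yG(0,0)\in\calL(Y,Z)$, which by hypothesis is a Banach space isomorphism, so $T^{-1}\in\calL(Z,Y)$ is bounded. Define
\[
\Phi(x,y)=y-T^{-1}G(x,y),
\]
which is defined and continuously Fr\'echet differentiable on an open neighborhood of $(0,0)$ in $X\times Y$. Observe that $G(x,y)=0$ if and only if $\Phi(x,y)=y$, so the problem reduces to producing, for each $x$ in a neighborhood $U$ of $0$, a unique fixed point of $\Phi(x,\cdot)$ in a neighborhood $V$ of $0$, and then establishing that the assignment $x\mapsto f(x)$ is $C^1$.

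\textbf{Contraction step.} Compute $D_y\Phi(x,y)=I_Y-T^{-1}D_yG(x,y)$, so $D_y\Phi(0,0)=0$. By continuity of $D_yG$ and boundedness of $T^{-1}$, there exist $r_1,r_2>0$ with $\overline B_{r_1}(0)\times\overline B_{r_2}(0)\subset O$ and
\[
\|D_y\Phi(x,y)\|_{\calL(Y)}\leq \tfrac12 \quad\text{for all }(x,y)\in \overline B_{r_1}(0)\times \overline B_{r_2}(0),
\]
so by the mean value inequality $\Phi(x,\cdot)$ is $\tfrac12$-Lipschitz on $\overline B_{r_2}(0)$. Since $G(0,0)=0$ and $x\mapsto G(x,0)$ is continuous, one may shrink $r_1$ to some $r_1'\in(0,r_1]$ such that $\|\Phi(x,0)\|=\|T^{-1}G(x,0)\|\leq r_2/2$ for all $x\in\overline B_{r_1'}(0)$. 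For such $x$ and $\|y\|\leq r_2$, one then has $\|\Phi(x,y)\|\leq \|\Phi(x,0)\|+\tfrac12\|y\|\leq r_2$, so $\Phi(x,\cdot)$ sends $\overline B_{r_2}(0)$ into itself. The Banach contraction principle yields a unique fixed point $f(x)\in\overline B_{r_2}(0)$ for each $x\in\overline B_{r_1'}(0)$. Setting $U=B_{r_1'}(0)$ and $V=B_{r_2}(0)$ (or a suitable open ball strictly smaller) gives the existence and the local uniqueness statement $G(x,y)=0\iff y=f(x)$ on $U\times V$.

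\textbf{Smoothness step.} For continuity of $f$, the standard parameter-contraction estimate
\[
\|f(x)-f(x')\|\leq \|\Phi(x,f(x))-\Phi(x',f(x))\|+\tfrac12\|f(x)-f(x')\|
\]
combined with continuity of $\Phi$ in $x$ gives Lipschitz continuity of $f$ on $U$ (after possibly shrinking $U$). For Fr\'echet differentiability, shrink $U,V$ further using continuity of $D_yG$ and of the inversion map on $\calL(Y,Z)$ so that $D_yG(x,f(x))$ remains invertible with uniformly bounded inverse. The candidate derivative is
\[
L(x)=-\bigl[D_yG(x,f(x))\bigr]^{-1}D_xG(x,f(x)).
\]
Expanding $0=G(x+h,f(x+h))-G(x,f(x))$ to first order at $(x,f(x))$, applying $[D_yG(x,f(x))]^{-1}$, and absorbing the $o(\|f(x+h)-f(x)\|)$ remainder using the Lipschitz bound on $f$ yields $f(x+h)-f(x)=L(x)h+o(\|h\|)$. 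Continuity of $x\mapsto L(x)$ follows from continuity of $f$, $D_xG$, $D_yG$, and the inversion map, so $f\in C^1(U,V)$.

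\textbf{Main obstacle.} The only delicate point is arranging the two estimates of the contraction step simultaneously on a common product neighborhood: the uniform Lipschitz bound $\|D_y\Phi\|\leq 1/2$ determines the size of the $y$-ball $V$, and one must then separately shrink the $x$-neighborhood $U$ so that $\Phi(x,\cdot)$ maps $V$ into itself. Once this is done, the differentiability argument is a standard Taylor expansion made rigorous by the bounded inverse of $D_yG$ on the relevant neighborhood, and the local uniqueness clause $G(x,y)=0\iff y=f(x)$ on $U\times V$ is automatic since $V$ is taken to be the contraction ball.
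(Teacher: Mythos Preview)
Your proof is correct and follows the standard contraction-mapping route to the implicit function theorem in Banach spaces. Note, however, that the paper does not actually prove Theorem~\ref{thimp}: it is merely \emph{recalled} as a classical result and then applied to the map $G(t,f)$ via Lemma~\ref{lemiso}. There is therefore no ``paper's own proof'' to compare against; your argument supplies precisely the standard proof that the paper takes for granted.
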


We are now in position to conclude the proof of Proposition \ref{direct}: applying Theorem \ref{thimp} to $G(t,f)$ with $x_0=0$, $y_0=(0,0)$, $X=\Rm$, $Y=Z=W_r^{1,\infty} \times L^{\infty}_r$, it follows that there exists $t_0>0$ and $f \in C^1((-t_0,t_0), W_r^{1,\infty} \times L_r^\infty)$ (both $t_0$ and $f$ depend on $\varrho_1$ and $\varrho_2$) such that
$$
n[\varrho(t,f(t))]=n, \qquad k[\varrho(t,f(t))]=k.
$$

We then have that $\varrho(t,f(t)) \in \calA(n,k)$ when $\varrho(t,f(t))$ is positive, which is the case when $\varrho_2=0$ for all $t$, and when $t\geq 0$ when $\varrho_2 \neq 0$. Since moreover $f$ is continuously differentiable with values in $W^{1,\infty} \times L^\infty$, it is clear that $L(t,f(t)) \in C^1((-t_0,t_0), \calL(H^1,L^2))$, and as a consequence that $\partial_t \varrho(t,f(t))$ exists for $t \in (-t_0,t_0)$ and is bounded. To obtain that $\partial_t \varrho(t,f(t))$ is trace class, we proceed as in the proof of Lemma \ref{rhotE}, and show that $\partial_t L(t,f(t)) \sqrt{\vs}$ belongs to $\calJ_2$.

Finally, to obtain the system \fref{sysf} on $f_2'(0)$ and $f_3'(0)$, we differentiate the equation $G(t,f(t))=0$, and find at $t = 0$,
\begin{equation*}
\partial_t G(0,0) + D_fG(0,0)[f'(0)] = 0.
\end{equation*}
With
$$
\partial_t G_1(0,0)=n[\varrho_1 \vs+ \vs \varrho_1^*] +n[\varrho_2], \qquad \partial_t G_2(0,0)=k[\varrho_1 \vs+ \vs \varrho_1^*] +k[\varrho_2],
$$
and by following the steps of the proof of (ii) in Lemma \ref{lemiso} with $n_1+n[\varrho_2]$ and $k_1+k[\varrho_2]$ in place of $- z_1$ and $-z_2$, we recover the desired result and end the proof of Proposition \ref{direct} provided we prove Lemma \ref{lemiso}.

\paragraph{Proof of Lemma \ref{lemiso}.}
  The function $G$ is by construction a second order polynomial of $(t,f_2,f_3)$, and as such establishing continuity and differentiability is fairly direct for functions $f$ as regular as $f=(f_2,f_3) \in W^{1,\infty} \times L^\infty$. We will then simply prove for item (i) that $G \in W^{1,\infty} \times L^\infty$, and, denoting by $D_f G(t,f)[h]$ the differential of $G$ w.r.t $f$ at the point $(t,f)$ in the direction $h$, that $ \partial_t G(t,f), D_f G(t,f)[h] \in W^{1,\infty} \times L^\infty$ for any $h \in W^{1,\infty} \times L^\infty$, leaving the technical details about continuity and derivability out.

   Since $\vs$ and $\varrho_2$ have an identical role in \fref{decomp}, we set $\varrho_2=0$ until further notice without lack of generality. We start with the bounds on $G$. Lemma \ref{rhotE} shows that $\varrho(t,f) \in \calE$, and as a consequence $\sqrt{n[\varrho(t,f)]} \in H^1$ since $\varrho(t,f)$ is positive. This yields $n[\varrho(t,f)] \in L^\infty$ by a Sobolev embedding. Moreover, since the following pointwise estimate holds, see Remark \ref{rem2},
  $$|\nabla n[\varrho(t,f)|] \leq 2 \sqrt{n[\varrho(t,f)] k[\varrho(t,f)]},$$
  it suffices to show that $k[\varrho(t,f)] \in L^\infty$ to conclude that $G \in W^{1,\infty} \times L^\infty$. Since $\varrho(t,f) \in \calE$, Remark \ref{rem1} shows that
\begin{equation*}
  k[\varrho(t,f)] = \sum_{j \in \Nm} \rho_j |\nabla (1+L(t,f)) \phi_j|^2,
\end{equation*}
with convergence in $L^1$ and almost surely. From \fref{nabL}, we deduce, $x$ a.e.,
\begin{align}
|\nabla(1+L(t,f))\phi_j|&\leq  |\nabla \varrho_1 \phi_j| + \left(1+\|f_2\|_{L^{\infty}} + \|f_3\|_{L^{\infty}} \right) | \nabla \phi_j|+\|\nabla f_2\|_{L^\infty} |\phi_j|. \label{estnabL}
\end{align}
It is clear that  
\be \label{estnab1}
\|\nabla \varrho_1\varphi_j\|_{L^\infty} \lesssim \|\nabla \varphi\|_{L^\infty},
\ee
and going back to \fref{estnabL}, we find the estimate
\be \label{estL1}
 |\nabla(1+L(t,f))\phi_j| \lesssim (1+ |\phi_j|+ |\nabla \phi_j|), \qquad a.e.
\ee
This yields $k[\varrho(t,f)] \lesssim 1+ n+ k$ a.e., and therefore $k[\varrho(t,f)]$ is in $L^\infty$ since both $n$ and $k$ are bounded according to Assumptions A. This yields $G \in W^{1,\infty} \times L^\infty$.

We now consider the differential of $G$ with respect to $f$ in the direction $h=(h_2,h_3) \in W^{1,\infty} \times L^\infty$ and prove some estimates. Direct calculations lead  formally to
\begin{align*}
& D_f G_1(t,f)[h] = 2 \Re \sum_{j \in \Nm} \rho_j \left(( h_2 + T_{h_3}) \phi_j \right)^* (1+L(t,f)) \phi_j\\
&D_f G_2(t,f)[h] = 2 \Re \sum_{j \in \Nm} \rho_j \left( (\nabla h_2)\phi_j + (h_2+h_3) \nabla \phi_j\right)^* \nabla\big( (1+L(t,f)) \phi_j \big),
\end{align*}
where $\rho_j$ and $\phi_j$ the eigenvalues and eigenfunctions of $\vs$.  We show first that the series above converge almost everywhere. Since $|T_{h_3}\phi_j|\leq \|h_3\|_{L^{\infty}} \|\nabla \phi_j\|_{L^2}$, we deduce
\begin{align*}
&|(h_2 + T_{h_3})\phi_j|\leq \|h_2\|_{L^{\infty}} |\phi_j| + \|h_3\|_{L^{\infty}} \|\nabla \phi_j\|_{L^2}\\
&|\nabla (h_2 + T_{h_3})\phi_j| \leq \|\nabla h_2\|_{L^{\infty}} |\phi_j| + (\|h_2\|_{L^{\infty}} + \|h_3\|_{L^{\infty}}) |\nabla \phi_j|.
\end{align*}
The above inequalities, together with \fref{estL1}, show that  $D_f G_1(t,f)[h]$, $\nabla D_f G_1(t,f)[h]$ and $D_f G_2(t,f)[h]$ all converge a.e. according to Remark \ref{rem1}. We have moreover the estimate
$$
|D_f G_1(t,f)[h]|+|\nabla D_f G_1(t,f)[h]|+ |D_f G_2(t,f)[h]|\lesssim 1+n+k \qquad a.e.,
$$
 leading to $D_fG_1(t,f)[h]\in W^{1,\infty}$ and $D_fG_2(t,f)[h]\in L^{\infty}$ according to Assumptions A.

We now consider the time derivative of $G$, and do not assume anymore that $\varrho_2 = 0$. We have formally
\begin{align*} 
&\partial_t G_1(t,f)=  2 \Re \sum_{j \in \Nm} \rho_j (\varrho_1 \phi_j)^* (1+L(t,f)) \phi_j + \sum_{j \in \Nm} \rho_j[\varrho_2] |(1+L(t,f)) \phi_j[\varrho_2]|^2\\ 
&\partial_t G_2(t,f)= 2 \Re \sum_{j \in \Nm} \rho_j \big(\nabla \big(\varrho_1 \phi_j\big)\big)^* \nabla\big( (1+L(t,f)) \phi_j \big) + \sum_{j \in \Nm} \rho_j[\varrho_2] |\nabla (1+L(t,f)) \phi_j[\varrho_2]|^2,
\end{align*}
and need to prove that the series converge almost everywhere to functions bounded in $W^{1,\infty}$ and in $L^\infty$, respectively. These facts are easily established by following the same lines as above, using \fref{eL}-\fref{nabL}-\fref{estnab1}-\fref{estL1} and substituting $(\rho_j,\phi_j)$ by $(\rho_j[\varrho_2],\phi_j[\varrho_2])$ when necessary. As mentioned at the beginning of the proof, we do not give more details about the continuity of $G$ and its derivatives as the proofs are essentially identical to those of the bounds above. At that stage, we have therefore obtained (i), and we focus now on (ii).


\paragraph{Proof of (ii), step 1:} We first simplify the expression of the differential. Let $h=(h_2,h_3) \in W_r^{1,\infty} \times L_r^{\infty}$. Since $L(0,0) = 0$, we find
\begin{align*}
D_f G_1(0,0)[h] &= 2\Re \sum_{j\in\mathbb{N}} \rho_j\left((h_2 +  T_{h_3})\phi_j^* \phi_j\right)
\\ D_f G_2(0,0)[h] &=2 \Re \sum_{j \in \Nm} \rho_j \nabla \big (( h_2 + T_{h_3}) \phi_j^*\big) \nabla \phi_j.
\end{align*}
Furthermore, we can see that
\begin{equation*}
2\Re \sum_{j\in\mathbb{N}} \rho_j \left(h_2 +  T_{h_3}\right)\phi_j^* \phi_j= 2h_2n + \mathcal{L}_{K_0} h_3,
\end{equation*}
where $\calL_{K_0}$ is defined in \fref{defK}-\fref{defL}, and, since $\nabla T_{h_3}\phi_j^* = h_3 \nabla \phi_j^*$,
\begin{align*}
2 \Re \sum_{j \in \Nm} \rho_j \nabla \big (( h_2 + T_{h_3}) \phi^*_j\big) \nabla \phi_j & = 2 \Re \sum_{j \in \Nm} \rho_j ( (\nabla h_2) \phi^*_j+ \nabla T_{h_3}\phi^*_j)  \nabla \phi_j + \rho_j h_2 |\nabla \phi_j|^2
\\ & = 2 \Re \sum_{j \in \Nm} \rho_j  (\nabla h_2) \phi^*_j\nabla \phi_j  + 2  \rho_j (h_2+h_3) |\nabla \phi_j|^2
\\ &= \nabla n \nabla h_2 + 2(h_2+h_3) k,
\end{align*}
where we used Remark \ref{rem1} and \fref{nabn}. We now show that $D_f G(0,0)$ is invertible. 
\paragraph{Proof of (ii), Step 2:} Set $z=(z_1,z_2) \in W_r^{1,\infty} \times L_r^{\infty}$, and consider the equation 
$$
D_fG(0,0)[h]=z,
$$
 which can be recast as, according to step 1,
\begin{equation} \label{system}\left\{\begin{array}{ll}
\displaystyle 2nh_2+ \calL_{K_0} h_3 = z_1,
\\\displaystyle \nabla n \nabla h_2  +  2k( h_2 + h_3)= z_2.
\end{array}\right.
\end{equation}
The above system can be reduced to a single equation on $h_3$. Indeed, the first equation of \fref{system} leads to
\begin{equation} \label{eqh2}
h_2 = -\frac{\mathcal{L}_{K_0}h_3}{2 n}+\frac{z_1}{2 n}.
\end{equation}
Differentiating, using that $K_0(x,x) = \nabla n(x)$ and $\nabla \mathcal{L}_{K_0} \varphi = (\nabla n) \varphi + \mathcal{L}_{\nabla_x K_0} \varphi$, we find
\begin{equation}\label{eq:nh2}
\nabla h_2 = -\frac{\nabla n}{2 n^2}\left( \mathcal{L}_{K_0} h_3 -  z_1 \right) + \frac{1}{2n}\left(\nabla z_1  - (\nabla n) h_3  - \mathcal{L}_{\nabla_x K_0} h_3\right).
\end{equation}
With $\frac{\nabla n}{2n} = \frac{\nabla \sqrt{n}}{\sqrt{n}}$ and $K_1(x,y)=-\frac{|\nabla n(x)|^2}{2 n^2(x)}K_0(x,y)-\frac{\nabla n(x)}{2n(x)}\nabla_x K_0(x,y) $, we deduce
\begin{align*}
\nabla n \nabla h_2 &= - \frac{|\nabla n|^2}{2n} \left(h_3 + \frac{ \mathcal{L}_{K_0} h_3}{n}\right) - \frac{\nabla n}{2n} \mathcal{L}_{\nabla_x K_0}h_3+ \nabla n  \nabla \left( \frac{z_1}{2n}\right)
\\ &= - \frac{|\nabla \sqrt{n}|^2}{2} h_3 -  \mathcal{L}_{K_1} h_3 +   \nabla n  \nabla \left( \frac{z_1}{2n}\right).
\end{align*}
By substituting the expression of $h_2$ into the second equation of \fref{system}, and recalling that \begin{equation*}
2k -\frac{|\nabla n|^2}{2n} = 2k - 2 |\nabla\sqrt{n}|^2 = 2a^{-1},
\end{equation*}
we finally obtain
\begin{align*}
\frac{k}{n}z_1 +\nabla n \nabla \left(\frac{z_1}{2 n}\right) - \mathcal{L}_{K_1}h_3  +\frac2ah_3=z_2,
\end{align*}
which can be recast as a Volterra equation
\be \label{eqh3}
h_3(x)=\mathcal{L}_{K}h_3(x)+  r(x), \quad x\in (0,1),
\ee
where $K$ is defined in \fref{defK} and
\begin{equation*}
r=\frac{a}{2}\left(z_2-\frac{k}{n} z_1 - \nabla \left(\frac{z_1}{2 n}\right)  \nabla n\right) =  - \frac{z_1}{2n} + \frac{a}{2}\left(z_2- \frac{\nabla\sqrt{n}}{\sqrt{n}}  \nabla z_1\right).
\end{equation*}
Since all $z_1, n^{-1},a,z_2, \nabla \sqrt{n}$ and $\nabla z_1$ are bounded, it follows that $r \in L^\infty$ 
and that equation \fref{eqh3}  admits a unique solution in $L^{\infty}$ according to Lemma \ref{lem:Volterra} further since $K \in L^\infty \times L^\infty$ as mentioned below \fref{defK} (note that $\nabla_x K$ belongs to $L^\infty \times L^\infty$ as well). The solution is real since the coefficients in \fref{eqh3} are real. To conclude the proof of the lemma, it remains to verify that $h_2 \in W^{1,\infty}$.
\paragraph{Proof of (ii), conclusion:} It is clear from \fref{eqh2} that $h_2$ is real and bounded since $K_0, n^{-1},h_3$ and $z_1$ are bounded. Furthermore, we have just seen that $\nabla_x K_0$ is bounded, and so are $\nabla n$ and $\nabla z_1$. Equation \fref{eq:nh2} finally shows that $\nabla h_2 \in L^\infty$.
It follows that $h_2\in W_r^{1,\infty}$ and then that $DG(0,0)$ is an isomorphism from $W_r^{1,\infty}\times L_r^{\infty}$ to itself. 
This ends the proof. 
\bigskip

The next lemma provides us with the existence and uniqueness of solutions of integral equations of the form \fref{eqh3}.

\begin{lemma}\label{lem:Volterra}
Consider a kernel $N \in L^\infty \times L^\infty$.
Then, for any $\psi\in L^p$, $p\in(1,\infty]$, the Volterra equation
\begin{equation}\label{eq:Volterra}
\varphi(x) = \calL_{N} \varphi(x)+\psi(x)=\int_0^x N(x,y) \varphi(y)dy + \psi(x),\quad x\in (0,1),
\end{equation}
and its adjoint
\begin{equation}\label{eq:adjVolterra}
\varphi(x) = \calL^*_{N} \varphi(x)+\psi(x)=\int_x^1 N(y,x) \varphi(y)dy + \psi(x),\quad x\in (0,1),
\end{equation}
both admit a unique solution in $L^p$. Moreover, both solutions satisfy the estimate
\be \label{estVolt}
\| \varphi \|_{L^2} \leq e^{(\kappa+1)^2}(\kappa+1) \|\psi\|_{L^2}
\ee
where $\kappa$ is such that
$$
\|N\|_{L^\infty \times L^\infty} \leq \kappa.
$$
\end{lemma}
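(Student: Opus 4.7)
The plan is to prove existence, uniqueness, and the quantitative $L^2$ bound via the standard Neumann series argument for Volterra operators with bounded kernels on a finite interval. First I would compute the iterated kernel of $\calL_N^n$: by induction on $n \geq 1$, $\calL_N^n$ is the integral operator with kernel
\begin{equation*}
N_n(x,y) = \int N(x,t_{n-1}) N(t_{n-1},t_{n-2}) \cdots N(t_1,y)\, dt_1 \cdots dt_{n-1},
\end{equation*}
where the integration is taken over the simplex $\{y < t_1 < \cdots < t_{n-1} < x\}$, whose Lebesgue measure is $(x-y)^{n-1}/(n-1)!$. The hypothesis $|N(x,y)| \leq \kappa$ a.e.\ then gives the pointwise bound $|N_n(x,y)| \leq \kappa^n (x-y)^{n-1}/(n-1)!$.

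For the $L^2$ estimate \fref{estVolt}, I would apply Cauchy--Schwarz to deduce
\begin{equation*}
|\calL_N^n \varphi(x)|^2 \leq \frac{\kappa^{2n}}{((n-1)!)^2} \int_0^x (x-y)^{2n-2}\,dy \cdot \|\varphi\|_{L^2}^2 = \frac{\kappa^{2n}\, x^{2n-1}}{(2n-1)((n-1)!)^2} \|\varphi\|_{L^2}^2,
\end{equation*}
and integrating in $x \in (0,1)$ I would obtain the operator norm estimate $\|\calL_N^n\|_{\calL(L^2)} \leq \kappa^n / (\sqrt{2n(2n-1)}\,(n-1)!)$ for every $n \geq 1$. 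The Neumann series $\sum_{n \geq 0} \calL_N^n$ therefore converges absolutely in $\calL(L^2)$ and defines a bounded inverse of $I - \calL_N$, so that $\varphi := \sum_{n \geq 0} \calL_N^n \psi$ is the unique $L^2$ solution of \fref{eq:Volterra}. For the explicit constant in \fref{estVolt}, bounding $\sqrt{2n(2n-1)} \geq 1$ gives $\sum_{n \geq 0} \|\calL_N^n\|_{\calL(L^2)} \leq 1 + \kappa e^\kappa$, which is easily dominated by $e^{(\kappa+1)^2}(\kappa+1)$.

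For a general exponent $p \in (1,\infty)$, the same strategy applies after replacing Cauchy--Schwarz by H\"older's inequality, yielding $\|\calL_N^n\|_{\calL(L^p)} \leq C_p \kappa^n / n!$; the case $p = \infty$ is even easier, with $\|\calL_N^n \varphi\|_{L^\infty} \leq (\kappa^n / n!)\|\varphi\|_{L^\infty}$. Uniqueness in $L^p$ follows from the same estimate, since any solution of the homogeneous equation satisfies $\varphi = \calL_N^n \varphi$ for all $n$ and hence $\varphi = 0$. Finally, the adjoint equation \fref{eq:adjVolterra} is handled identically: after the reflection $x \mapsto 1-x$, $\calL_N^*$ becomes a Volterra operator of the same form on $[0,1]$ with kernel still bounded by $\kappa$, so the previous argument transfers verbatim. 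I do not anticipate a genuine obstacle here, as the proof is a classical Neumann-series argument; the only mildly technical point is bookkeeping the constants to match the explicit form stated in \fref{estVolt}, which the above computation readily delivers.
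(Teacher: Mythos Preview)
Your proof is correct, but it takes a different route from the paper's. The paper introduces the exponentially weighted norm $\|\varphi\|_{L^p_\alpha} = \|e^{-\alpha\,\cdot}\varphi\|_{L^p}$ and shows, via a single H\"older estimate, that $\calL_N$ is a contraction on $(L^p,\|\cdot\|_{L^p_\alpha})$ with factor $\kappa/(\kappa+1)$ once $\alpha=(\kappa+1)^{p'}$; the Banach fixed point theorem then yields existence and uniqueness, and converting back to the unweighted norm produces the constant $e^{(\kappa+1)^2}(\kappa+1)$ exactly. The adjoint is handled by the same computation with weight $e^{+\alpha x}$.

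Your argument instead iterates the operator, bounds the $n$-th kernel by the simplex volume, and sums the Neumann series. This is equally classical and in fact yields the sharper $L^2$ constant $1+\kappa e^{\kappa}$, which you then dominate by the stated one. The trade-offs: the weighted-norm approach treats all $p\in(1,\infty]$ in one stroke with no separate bookkeeping and outputs the stated constant on the nose; your approach is more constructive (it exhibits the solution as an explicit series) and gives a better bound, at the cost of tracking factorials and checking the final inequality by hand. Either argument is perfectly adequate here.
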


\begin{proof}
For $\varphi \in L^p$ and a given $\alpha>0$, consider the norm
$$
\|\varphi\|_{L^{p}_{\alpha}} = \|\varphi e^{-\alpha \cdot}\|_{L^p},
$$
which is equivalent to $\|\varphi\|_{L^p}$ since the domain of integration is bounded. With H\"older's inequality, we obtain that
\begin{align*}
\|\mathcal{L}_{N} \varphi\|_{L^p_{\alpha}}  &=\left\|  \int_0^{\cdot}N(\cdot,y)e^{-\alpha(\cdot-y)} e^{-\alpha y} \varphi(y)dy\right\|_{L^p}
\\ &\leq \kappa \sup_{x\in(0,1)} \left| \int_0^x e^{-\alpha(x-y) p'}dy \right|^{p'} \| \varphi\|_{L^{p}_{\alpha}} = \frac{\kappa}{(\alpha p')^{1/p'}} \left(1- e^{-\alpha p' }\right)^{1/p'}\|\varphi\|_{L^{p}_{\alpha}}
\\ &\leq \frac{\kappa}{(\alpha)^{1/p'}} \|\varphi\|_{L^{p}_{\alpha}}.
\end{align*}
Setting $\alpha = (\kappa+1)^{p'}$, it follows that $\calL_{N}$ is a contraction in $L^p$ equipped with the norm $\|\cdot\|_{L^{p}_{\alpha}}$. As a consequence, \fref{eq:Volterra} admits a unique solution in $L^p$, which satisfies the estimate 
\begin{equation*}
\|\varphi\|_{L^{p}_{(\kappa+1)^{p'}}} \leq \frac{\kappa}{\kappa+1}  \|\varphi\|_{L^{p}_{(\kappa+1)^{p'}}} + \|\psi\|_{L^{p}},
\end{equation*}
and \fref{estVolt} is obtained by setting $p=2$ and by remarking that $\|\varphi\|_{L^p_\alpha} \geq e^{-\alpha}\|\varphi\|_{L^p}$.  The proof for \eqref{eq:adjVolterra} follows from the same arguments by considering $L^{p}_{-(\kappa+1)^{p'}}$.
\end{proof}

\subsection{Proof of Lemma \ref{estimnlog}} \label{proofE1}
We first recall from \cite[Lemma A.1]{MP-JSP} the following estimate, for any $\varrho \in \calE^+$,
$$
\sum_{j \in \Nm} \lambda_j \rho_j \leq \Tr \big(\sqrt{H_0} \varrho \sqrt{H_0} \big),
$$
where $\{\lambda_j\}_{j \in \Nm}$ is the nondecreasing sequence of eigenvalues of $H_0$ with domain $H^2_{\rm{Neu}}$, and $\{\rho_j\}_{j \in \Nm}$ the eigenvalues of $\varrho$. This yields in particular, using H\"older's inequality, 
\be \label{ine23}
\sum_{j \geq N} (\rho_j)^{2/3} \leq \left(\sum_{j \geq N} \lambda_j \rho_j \right)^{2/3} \left(\sum_{j \geq N} (\lambda_j)^{-2} \right)^{1/3} \lesssim \|\varrho\|^{2/3}_{\calE},
\ee
where we used that $\lambda_j$ behaves like $j^2$. With $|\rho_j| \leq \|\varrho\|_{\calJ_1}$, we have then obtained the estimate, for all $\varrho \in \calE^+$,
\begin{equation}\label{eq:estm23}
\Tr(\varrho^{2/3}) \lesssim \|\varrho\|_{\mathcal{E}}^{2/3}.
\end{equation}
We are now in position to prove the Lemma. First, it is clear that $\varrho_{\star,\eta} \log(\varrho_{\star,\eta}+\eta)$ is trace class since $\varrho_{\star,\eta}$ is trace class and $\log(\varrho_{\star,\eta}+\eta)$ is bounded. Hence, $n[\varrho_{\star,\eta} \log(\varrho_{\star,\eta})]$ is well-defined in $L^1$ and
$$
n[\varrho_{\star,\eta} \log(\varrho_{\star,\eta}+\eta)]=\sum_{j \in \Nm} \rho_{j,\eta} \log(\rho_{j,\eta} + \eta) |\phi_{j,\eta}|^2,
$$
with convergence in $L^1$ and a.e. We will show that the series actually converges in $L^\infty$. Above, $\rho_{j,\eta}$ and  $\phi_{j,\eta}$ are the eigenvalues and eigenvectors of $\varrho_{\star,\eta}$. With the triangle inequality and the inequality $\|u\|^2_{L^\infty} \lesssim \|u\|_{L^2} \|u\|_{H^1}$ for $u \in H^1$, we find
    \begin{align*}
\|n[\varrho_{\star,\eta} \log(\varrho_{\star,\eta}+\eta)]\|_{L^\infty} &\leq \sum_{ j \in \Nm} \rho_{j,\eta}  |\log(\rho_{j,\eta}+\eta)| \|\phi_{j,\eta}\|_{L^\infty}^2\\
 &\lesssim \left(\sum_{ j \in \Nm} \rho_{j,\eta}  |\log(\rho_{j,\eta} + \eta)|^2\right)^{1/2}\left(\sum_{ j \in \Nm} \rho_{j,\eta} \|\phi_{j,\eta}\|_{H^1}^2 \right)^{1/2}.
    \end{align*}
    The second term on the r.h.s is the square root of $\|\varrho_{\star,\eta}\|_{\calE}$, and is uniformly bounded in $\eta$ since by construction
    \be \label{UE}
    \|\varrho_{\star,\eta}\|_{\calE}=\|n\|_{L^1}+\|k\|_{L^1}.
    \ee
    With $x|\log(x)|^2 \leq C x^{2/3}$ for $x\in [0,1]$, where $C = \max_{x\in[0,1]} x^{1/3}|\log(x)|^2$, we control the first term of the r.h.s by the square root of $\Tr(\varrho_{\star,\eta}^{2/3})$, which is also uniformly bounded in $\eta$ thanks to \fref{eq:estm23} and \fref{UE}. This ends the proof.
 \subsection{Proof of Lemma \ref{lem:EulerLag1}} \label{proofEL1}
First of all, it follows from Proposition \ref{direct} that $\varrho_{\eta}(t,f(t))$ is differentiable with respect to $t$ with values in $\calJ_1$, and a direct calculation shows that
\begin{equation} \label{derivrho}
\left.\frac{d \varrho_{\eta}(t,f(t))}{dt}\right|_{t=0}= \left(\varrho_{1,\eta}+f_2'(0)I+T_{f_3'(0)} \right) \varrho_{\star,\eta}+\varrho_{\star,\eta}\left(\varrho^*_{1,\eta}+f_2'(0)I+T^*_{f_3'(0)}\right).
\end{equation}
Note that the last term above has to be interpreted as
$$
\sqrt{\varrho_{\star,\eta}}\, \overline{\sqrt{\varrho_{\star,\eta}}\left(\varrho^*_{1,\eta}+f_2'(0)I+T^*_{f_3'(0)}\right)},
$$
where both operators are in $\calJ_2$. We have in particular, $\forall \varphi\in L^2$, with convergence in $L^2$,
\begin{align*}
 \overline{\sqrt{\varrho_{\star,\eta}}T^*_{f_3'(0)}}\varphi &=\sum_{ j \in \Nm} \sqrt{\rho_{j,\eta}} \phi_{j,\eta} \langle T_{f_3'(0)} \phi_{j,\eta},  \varphi \rangle
\\ &=\sum_{ j \in \Nm}  \sqrt{\rho_{j,\eta}}\phi_{j,\eta} \int_0^1 \left(\int_0^x f_3'(0)(y) \nabla \phi_{j,\eta}^*(y)dy\right) \varphi(x) dx
\\ &=\sum_{ j \in \Nm}  \sqrt{\rho_{j,\eta}} \phi_{j,\eta} \int_0^1 \left(\int_y^1\varphi(x)  dx\right) f_3'(0)(y) \nabla \phi_{j,\eta}^*(y) dy,
\end{align*}
where $\rho_{j,\eta}$ and  $\phi_{j,\eta}$ are the eigenvalues and eigenvectors of $\varrho_{\star,\eta}$. We proceed now to the proof of the Lemma: the fact that $F_\eta \in C^1((-t_0,t_0))$ is a consequence of $\varrho_{\eta}(t,f(t)) \in C^1((-t_0,t_0),\calJ_1)$ that is obtained in Proposition \ref{direct}, and of an adaptation of Lemma 5.3 in   \cite{MP-JSP}: it is shown there that $S_\eta(\varrho)$ for $\varrho \in \calE^+$ is G\^ateaux differentiable in any direction of the form $\varrho+ t \omega$, where $\omega \in \calJ_1$, and the differential is $D S_\eta(\varrho)(\omega)=\Tr (\log (\varrho+\eta) \omega)$. It suffices for our purpose to generalize the proof to the more general form of perturbations $\varrho_\eta(t,f(t))$ that we consider here, and we then find, with \fref{derivrho},
\begin{align*}
\left.\frac{d F_{\eta}(t)}{dt}\right|_{t=0}=&\;\Tr\left( \log(\varrho_{\star,\eta} + \eta)\left(\varrho_{1,\eta}+f_2'(0)I+T_{f_3'(0)} \right) \varrho_{\star,\eta} \right)
\\&+\Tr\left(\log(\varrho_{\star,\eta}+\eta)\varrho_{\star,\eta}\left(\varrho^*_{1,\eta}+f_2'(0)I+T^*_{f_3'(0)}\right)\right).
\end{align*}
The traces above are well-defined since $\log(\varrho_{\star,\eta} + \eta)$ is bounded and $(\varrho_{1,\eta}+f_2'(0)+T_{f_3'(0)}) \varrho_{\star,\eta}$ and $\varrho_{\star,\eta}(\varrho^*_1+f_2'(0)+T^*_{f_3'(0)})$ are trace class. We now make the r.h.s above more explicit and only dependent on $f_3$, and treat first the terms involving $f_3$. We find
\begin{align} \nonumber
  \Tr \left(T_{f_3'(0)}  \varrho_{\star,\eta}  \log(\varrho_{\star,\eta} + \eta) \right) +&\Tr \left(\log(\varrho_{\star,\eta} + \eta) \varrho_{\star,\eta} T^*_{f_3'(0)}\right)
\\&=2 \Re \sum_{ j \in \Nm} \rho_{j,\eta} \log (\rho_{j,\eta} + \eta) \langle T_{f_3'(0)} \phi_{j,\eta},  \phi_{j,\eta}\rangle\nonumber
\\&= \langle c_{1,\eta},f_3'(0)\rangle, \label{eq:relTf3p}
\end{align}
where
\begin{equation*}
c_{1,\eta}(x)= 2 \Re \sum_{ j \in \Nm} \rho_{j,\eta} \log(\rho_{j,\eta} + \eta) \nabla \phi_{j,\eta}(x) \int_x^1 \phi_{j,\eta}^* (y) dy .
\end{equation*}
The series above converges almost everywhere to a bounded function since the partial sum is controlled by
$$
\max_{x \in [0,1]} |\log(x+\eta)|\left(\sum_{ j \leq N} \rho_{j,\eta} |\nabla \phi_{j,\eta}|^2\right)^{1/2} (\Tr (\varrho_{\star,\eta}))^{1/2},
$$
where the sum converges a.e. according to Remark \ref{rem1}, and the estimate $c_{1,\eta} \lesssim \sqrt{k}$ a.e. yields the bound according to Assumptions A. For the term in $f_2$, we use the expression of $f_2'(0)$ given in \fref{sysf} to obtain that
\begin{align}
\Tr( f_2'(0) & \varrho_{\star,\eta} \log(\varrho_{\star,\eta} + \eta) +  \log(\varrho_{\star,\eta} + \eta)\varrho_{\star,\eta} f_2'(0)) 
=  2 \langle f_2'(0),n[\varrho_{\star,\eta} \log(\varrho_{\star,\eta} + \eta)]\rangle \nonumber
\\[3mm] &= -  \left\langle \mathcal{L}_{K_{0,\eta}} f_3'(0), \frac{n[\varrho_{\star,\eta} \log(\varrho_{\star,\eta} + \eta)]}{n}\right\rangle - \left\langle n_1, \frac{n[\varrho_{\star,\eta} \log(\varrho_{\star,\eta} + \eta)]}{n}\right\rangle, \label{eq:relTf2p}
\end{align}
where $n_{1,\eta}=n[ \varrho_{1,\eta} \varrho_{\star,\eta}+\varrho_{\star,\eta} \varrho_{1,\eta}^*]$ and $\calL_{K_{0,\eta}}$ is as in \fref{defK}-\fref{defL} with $\rho_{j,\eta}$ and  $\phi_{j,\eta}$ in place of the eigenvalues and eigenvectors of $\vs$.
Recalling that, $\forall \varphi\in L^1$,
$$
\mathcal{L}_{K_{0,\eta}}^* \varphi (y) = \int_y^1 K_{0,\eta}(x,y) \varphi(x) dx, \qquad K_{0,\eta} \in L^\infty \times L^\infty,
$$
we deduce that
\begin{equation} \label{eqc2}
 \left\langle \mathcal{L}_{K_{0,\eta}} f_3'(0), \frac{n[\varrho_{\star,\eta} \log(\varrho_{\star,\eta} + \eta)]}{n}\right\rangle = \langle c_{2,\eta} , f_3'(0)\rangle,
\end{equation}
with
\begin{align*}
c_{2,\eta}(x) &=  - 2 \Re\sum_{ j \in \Nm} \rho_{j,\eta} \nabla \phi_{j,\eta}(x) \int_x^1 \phi_{j,\eta}^*(y)n[\varrho_{\star,\eta} \log(\varrho_{\star,\eta} + \eta)](y)(n(y))^{-1} dy.
\end{align*}
Following the same lines as $c_{1,\eta}$, it is direct to show that the series defining $c_{2,\eta}$ converges in $L^\infty$. The proof is ended by collecting \fref{eq:relTf3p}-\fref{eq:relTf2p}-\fref{eqc2}.
 \subsection{Proof of Lemma \ref{lem:conveta}} \label{proofconveta}
We only sketch the proof since most of the results can be found elsewhere. We prove first that $\varrho_{\star,\eta_\ell}$ converges to $\vs$ in $\calJ_1$. We know for this that $\varrho_{\star,\eta_\ell}$ is bounded in $\calE^+$ independently of $\ell$ since by construction
$$
\|\varrho_{\star,\eta_\ell}\|_{\calE}=\|n\|_{L^1}+\|k\|_{L^1}.
$$
Then, according to \cite[Lemma 3.1]{MP-JSP}, there exists $\sigma \in \calE^+$ and a subsequence (still denoted $\eta_\ell$) such that $\varrho_{\star,\eta_\ell} \to \sigma$ in $\calJ_1$ and $\sqrt{H_0} \varrho_{\star,\eta_\ell} \to \sqrt{H_0} \sigma$ in $\calJ_2$.  We need to prove that $\sigma=\vs$. For this, we have, since $\vs \in \calA(n,k)$,
$$
S_{\eta_\ell}(\varrho_{\star,\eta_\ell})\leq S_{\eta_\ell}(\vs).
$$
With \cite[Lemma 5.2 (iv)]{MP-JSP}, we can pass to the limit in the equation above to obtain
\be \label{ineqS}
S(\sigma)\leq S(\vs).
\ee
We are done if we can show that $\sigma \in \calA(n,k)$ since $S$ admits a unique minimizer. This is done by following the steps of the proof of Theorem \ref{thexist} obtained in \cite{DP-JMPA}. It is shown therein that the fact that $\sigma \in \calA(n,k)$ is established by comparing $\|k[\sigma]\|_{L^1}$ and $\|k\|_{L^1}$: based on \fref{ineqS}, it is proved that it is only possible that $\|k[\sigma]\|_{L^1}=\|k\|_{L^1}$, which is shown in \cite{DP-JMPA} to lead to the strong convergence of $\varrho_{\star,\eta_\ell}$ in $\calE$, and as a consequence to the fact that $k[\sigma]=k$.  We refer the reader to \cite{DP-JMPA} for more details. Since the minimizer is unique, the entire sequence converges to $\vs$. Once the convergence of $\varrho_{\star,\eta_\ell}$ to $\vs$ is established, items (1) and (2) follow from \cite[Lemma 3.1]{MP-JSP}. Items (3) and (4) are proved in \cite[Lemma A.2]{MP-JSP} and \cite[Lemma 3.7]{DP-JFA}. Item (5) is proved in  \cite[Lemma 5.2 (iv)]{MP-JSP}. Regarding (6), we write, for any $\varphi \in L^2$,
\be \label{normL}
\|\sqrt{\varrho_{\star,\eta_\ell}} \log(\varrho_{\star,\eta_\ell}+\eta_\ell)\varphi\|_{L^2}^2=\sum_{j \in \Nm} \rho_{j,\eta_\ell} (\log(\rho_{j,\eta_\ell}+\eta_\ell))^2 |\langle \phi_{j,\eta_\ell},\varphi \rangle |^2.
\ee
We split the sum for $j <N$ and $j>N$. The large $j$ part is treated by using \fref{ine23} as in the proof of Lemma \ref{estimnlog}, and is controlled by
$$
R(N) \|\varrho_{\star,\eta_\ell}\|_{\calE}^{2/3} \lesssim R(N),
$$
where $R$ is a function independent of $\ell$ which tends to $0$ as $N \to \infty$. With this, it is sufficient to consider a finite number of terms in \fref{normL}, and we use items (3)-(4) to pass to the limit. This establishes the convergence of the $L^2$ norm of $\sqrt{\varrho_{\star,\eta_\ell}} \log(\varrho_{\star,\eta_\ell}+\eta_\ell)\varphi$ to that of $\sqrt{\varrho_{\star}} \log(\varrho_{\star})\varphi$. The weak converge in $L^2$ follows in the same manner, and since weak convergence combined with convergence of the norm implies strong convergence, item (6) follows. This ends the proof.

 \bibliographystyle{plain}
\bibliography{bibliography.bib}

 \end{document}